\newtheorem{thm}{Theorem}[section]
\newtheorem{lem}{Lemma}[section]
\newtheorem{prop}{Proposition}[section]
\newtheorem{ass}{Assumption}[section]
\newtheorem{condition}{Condition}[section]
\theoremstyle{definition}
\newtheorem{rem}{Remark}[section]
\numberwithin{equation}{section}
\newcommand{\textupandbold}[1]{\textup{\textbf{#1}}}
\newcommand{\E}{\mathbb{E}}
\newcommand{\R}{\mathbb{R}}
\newcommand{\argmax}{\operatornamewithlimits{argmax}}
\newcommand{\indep}{{\bot\negthickspace\negthickspace\bot}}
\begin{document}
\title[]{Nonparametric Analysis of Finite Mixtures}

\author[Kitamura]{Yuichi Kitamura}
\address{Cowles Foundation for Research in Economics, Yale University, New
	Haven, CT 06520.}
\email{yuichi.kitamura@yale.edu}     
\author[Laage]{Louise Laage}
\address{Cowles Foundation for Research in Economics, Yale University, New
	Haven, CT 06520.}
\email{louise.laage@yale.edu}

\date{This Version: November 6, 2018.}

\thanks{\emph{Keywords}:  Auction models, Componentwise shift-restriction, Nonparametric regression, Unobserved heterogeneity}
\thanks{JEL Classification Number: C14}
\thanks{This paper supersedes a previously circulated manuscript  entitled ``Nonparametric identifiability of finite mixtures.''   The authors thank Werner Ploberger and participants at various seminars for their comments.     Kitamura acknowledges financial support
  from the National
Science Foundation.  
  }

\begin{abstract} 
Finite mixture models are useful in applied econometrics. They can be used to model unobserved heterogeneity, which plays major roles in labor economics, industrial organization and other fields. Mixtures are also convenient in dealing with contaminated sampling models and models with multiple equilibria. This paper shows that finite mixture models are nonparametrically identified under weak assumptions that are plausible in economic applications. The key is to utilize the identification power implied by information in covariates variation. First, three  identification approaches are presented, under  distinct and non-nested sets of  sufficient conditions.  Observable features of data inform us  which of the three approaches is valid.  These results apply to general nonparametric switching regressions, as well as to structural econometric models, such as auction models with unobserved heterogeneity.  Second, some extensions of the identification results are developed.  In particular, a  mixture regression where the mixing weights depend  on the value of the regressors in a fully unrestricted manner is shown to be nonparametrically identifiable.  This means a finite mixture model with function-valued unobserved heterogeneity can be identified in a cross-section setting, without restricting the dependence pattern between the regressor and the unobserved heterogeneity.  In this aspect it is akin to fixed effects panel data models which permit unrestricted correlation between unobserved heterogeneity and covariates.    Third, the paper  shows that fully nonparametric estimation of the entire mixture model is possible, by forming a sample analogue of one of the new identification strategies. The estimator is shown to possess a desirable polynomial rate of convergence as in a standard nonparametric estimation problem, despite nonregular features of the model. 
\end{abstract}

\maketitle

\section{Introduction}\label{sec:introduction}

In empirical economics it is often crucially important to control for unobserved heterogeneity,  
and mixture models provide convenient ways to deal with it.  
This paper studies identification
problems in the presence of unobserved heterogeneity under weak assumptions, by exploring    
identification in nonparametric  finite mixture models.  We then propose a fully nonparametric estimation method. 

A generic mixture model takes the
following form.      
Consider a probability distribution function $F_\alpha(\cdot)$, indexed by a random
variable $\alpha$ that takes values on a sample space $\mathcal A$.
$\alpha$ is sometimes called a  
mixing variable or a latent variable.  It can be interpreted as a term
representing unobserved
heterogeneity.  Let $G$ denote the probability
distribution for $\alpha$.  Define    
\begin{equation}
F(z) = \int_{\mathcal A} F_\alpha (z)dG(\alpha)    
\label{mixture}
\end{equation}
The researcher observes $w$ distributed
according to $F$.  In other words, the mixture distribution $F(\cdot)$ 
is generated by mixing the component probability
measures $F_\alpha(\cdot), \alpha \in \mathcal A$ according to the
mixing distribution $G(\cdot)$.  In an important special case where
$G$ is discretely distributed and the space $\mathcal A$ is finite,
(\ref{mixture}) becomes  
\begin{equation}
\label{finite}
F(z) = \sum_{j=1}^J \lambda_j F_j(z), \quad \sum_{j=1}^J
\lambda_j = 1.
\end{equation} 
For example, suppose there are $J$ types of economic agents that have
type specific distributions $F_j(z), j = 1,...,J$.  If type
$j$ is drawn with probability $\lambda_j$, the resulting data 
obeys the finite mixture model (\ref{finite}).  
The $F$ defined in (\ref{finite}) is called a finite mixture
distribution function.  This is the main concern of the current paper.  Since the paper presents various results with different models, a brief discussion of the overall nature of our  contributions might be in order, as we now summarize in the following three points:

\
\

\noindent {\it (i) Relation to other identification results.} \quad   As we mention below, currently available nonparametric identification strategies for finite mixtures often require  either (A) multiple observations (a leading example being panel data) or (B) exclusion restriction  and/or specific conditions on the shapes of the component distribution functions $F_\alpha, \alpha \in \mathcal A$.  All the results in this paper concern identification in cross-section settings (i.e. the econometrician never observes an individual with a particular realization of the mixing/latent variable $\alpha$ more than once), therefore our identification strategy has little in common with the ones that belong to Category (A).  Some papers in Category (B) assume exclusion restrictions, then invoke an identification-at-infinity type argument by focusing on observations at the tails of the component distributions.   This paper does not rely on exclusion restrictions (and it even allows the mixture weights to depend on covariates in the model discussed in Section \ref{sec:FE}).   Some other papers in Category (B) rely on symmetry of $F_\alpha$, which we do not assume either. 

\   
\ 

\noindent {\it (ii) Source of identification.} \quad The primal identification power in this paper comes from what may be called ``componentwise shift-restriction" when a covariate is observed.  That is, under an independence assumption, each component distribution generates a set of cross restrictions over a family indexed by the covariate values.  Here the term ``shift-restriction" is adopted from   \citeasnoun{klein2002shift}, who consider semiparametric estimation of ordered response models (hence their paper is not about mixtures) though the identification strategy in the current paper not directly related to theirs: it is crucial to observe that in our case  {\it for each component distribution} we obtain continuous limit analogues of shift-restrictions  defined for a (possibly finite) set of covariate values.
These componentwise shift-restrictions --- and equally importantly, the fact that after aggregating such latent distributions, the resulting mixture distribution function {\it lacks}  the shift-restriction property under a ``non-parallel condition" described later  --- deliver  fully nonparametric identification.         

\
\

\noindent {\it (iii) On identification/estimation strategies.} \quad The ``componentwise shift-restriction" described above can be usefully exploited  after taking Fourier/Laplace transforms of the model.  We then take limits in the Fourier/Laplace domains.  As noted in {\it (i)} above, this is quite different from the approach based on exclusion restrictions together with nonparametric estimators with observations at the tails of the component distributions.   Moreover,  basing identification on the upper and lower tails generally limits the number of identifiable components, typically to the case with  $J=2$, whereas our approach can be used to identify models with arbitrary $J$ (Section \ref{sec:J3}).  The number of components $J$ itself will be identified in our  approach as well.  Alternatively, if we impose a large support restriction on  covariates we can in principle establish identification in a straightforward manner.  This would be a variant of the identification-at-infinity argument, and our approach does not share this feature either.  
  As we shall see in Section  \ref{sec:estimation} it is possible to estimate the entire mixture model  fully nonparametrically with standard polynomial convergence rates under mild assumptions.  This desirable property is achieved without focusing on observations at the tails of the component distributions,  nor a large support condition on the covariate.          

\
\

We now mention some literature on the use of mixture models in general, followed by existing  methods of identification for (finite) mixtures.   
As noted before, mixtures are commonly used in models with unobserved
heterogeneity, especially in labor economics and industrial
organization.    See, for example, \citeasnoun{cameron1998life},
\citeasnoun{keane-wolpin},  \citeasnoun{berry1996airline}, \citeasnoun{arcidiacono2011conditional}, and \citeasnoun{aguirregabiria2013identification}  for applications of finite mixture models
in these fields.
They are also used extensively in duration models with unobserved
heterogeneity; see   \citeasnoun{heckman1984method}, \citeasnoun{heckman1994econometric} and \citeasnoun{van2001duration}.  A somewhat different use of
mixtures can be found in    
models of regime changes, which can be viewed as finite mixture
models.  \citeasnoun{porter1983study}, for example, uses a switching 
simultaneous equations for an empirical IO model (see also \citeasnoun{ellison1994theories} and \citeasnoun{lee1984switching}).  
Some models with multiple equilibria can be regarded as 
mixtures as well (e.g.  \citeasnoun{berry2006identification},  \citeasnoun{echenique2009testing}).  Finally, contaminated models, as
analyzed by \citeasnoun{horowitz1995identification} and \citeasnoun{manski2003partial} can be
formulated as mixture models.  

The most common estimation method for mixture
models is parametric maximum likelihood (ML).  In the notation
introduced in (\ref{mixture}), ML requires parameterizing 
$F_\alpha(\cdot)$ and $G(\cdot)$ so that they are known up to a finite
number of parameters.  The EM algorithm often provides a convenient
way to calculate the 
ML estimator for a mixture model.

This paper considers 
nonparametric identification problems in finite
mixture models.  The goal of the paper is to show that it is possible
to treat the component distributions of a mixture model in a flexible
manner. It should be noted that \citeasnoun{jewell1982mixtures} and \citeasnoun{heckman1984method}
provide important identification results for mixture models in
semiparametric settings.  Again in the notation in 
(\ref{mixture}), these authors treat the component distributions $F_\alpha(\cdot)$ parametrically,
(so that it is parameterized as $F_\alpha(\cdot,\theta)$, say, by a finite
dimensional parameter
$\theta$) while treating $G$ nonparametrically.  They develop
nonparametric ML estimators (NPMLE) for this type of models.   Note
that NPMLE, in actual applications, 
yields nonparametric estimates for $G$ that are typically discrete distributions
with only a few support points.  This fact may suggest that 
considering finite
mixture distributions from the outset, as this paper does, is likely to be  flexible
enough for practical purposes.          

Identification problems of finite mixtures have attracted much
attention in the statistics literature.  Teicher's pioneering work
(Teicher \citeyear*{teicher1961identifiability},   \citeyear*{teicher1963identifiability}) initiated this research area.   \citeasnoun{rao1992identifiability}  provides a nice summary of this topic.  See, also, \citeasnoun{lindsay1995mixture}  for a comprehensive treatment of mixture models including their
identification issues.  Many results known in this area assume
parametric component distributions.  Indeed, as   \citeasnoun{hall2003nonparametric}   put it, ``(v)ery little is known of the potential for consistent nonparametric
inference in mixtures without training data.''  Nevertheless, a number of papers have appeared  on this subject, especially  after the first version of the current paper was circulated.  These include  approaches based on multiple outcomes (e.g. 
\citeasnoun{bonhomme2016non},   \citeasnoun{bonhomme2016estimating}, \citeasnoun{d2015identification},  \citeasnoun{kasahara2009nonparametric}), or identification results  based on exclusion restrictions, with/without tail restrictions on component distributions (e.g. \citeasnoun{adams2016finite}, \citeasnoun{compiani2016using}, \citeasnoun{henry2014partial},   \citeasnoun{henry2010identifying},  \citeasnoun{hohmann2013semiparametric},    \citeasnoun{jochmans2017inference}), or  methods based on symmetry restrictions (e.g.
\citeasnoun{butucea2014semiparametric},  \citeasnoun{hohmann2013two}). 

The main result of the
present paper is that nonparametric treatment of the component
distributions of a finite mixture model is possible in a cross-sectional setting, if appropriate
covariates are available.

\section{Mixture Model with Covariates}\label{sec:main}

Consider random vectors $z$ and $x$.  
Suppose the conditional distribution of $z$ given $x$ is given by a finite
mixture model of the following form: 
\begin{equation}\label{Jmodel}
F(z|x) = \sum_{j=1}^J \lambda_j F_j(z|x), \quad \lambda_j > 0, \quad j =
1,...,J, \quad   \sum_{j=1}^J\lambda_j
= 1.
\end{equation}
The main goal is to identify the mixing probability
weights $\lambda_j, j = 1,...,J$ and the conditional component
distributions $F_j(z|x)$ from the conditional mixture distribution
$F(\cdot|x)$, using nonparametric restrictions.  
Sections \ref{sec:switch} - \ref{sec:iv} consider the case where $J
= 2$.  The
above expression then becomes:      
\begin{equation}
F(z|x) = \lambda F_1(z|x) + (1 - \lambda) F_2(z|x), \quad \lambda \in (0,1].  
\label{model}
\end{equation}
The case with $\lambda = 0$ is ruled out as we seek identification only up to labeling.    Section \ref{sec:J3} considers an extension to the case with  $J \geq 3$.   

\section{Regression}\label{sec:switch}

This section develops basic nonparametric identification results for
(\ref{model}).  Suppose $z$ and $x$ reside in $\R$ and $\R^k$, respectively.
Define 
$$
m_j(x) = \int_\R z dF_j(z|x), j = 1,2, 
$$  
i.e. the mean regression functions of the component distributions.  
Let $F_{\epsilon|x}^j, j = 1,2$ denote the distribution functions of the random
variables  
\[
\epsilon_j = z_j - m_j(x), \quad j = 1,2.
\]
Note that by construction $\int \epsilon dF_{\epsilon|x}^j(\epsilon) = 0, j =
1,2$.  With this
notation $F_j(z|x) = F_{\epsilon|x}^j(z - m_i(x)), j = 1,2$, and 
the model (\ref{model}) can be written as 
\begin{equation}\label{mixmodel}
F(z|x) = \lambda F_{\epsilon|x}^1(z - m_1(x)) + (1 -
\lambda)F_{\epsilon|x}^2(z - m_2(x)).  
\end{equation}
Our goal in this section is then to identify the elements of the right hand side of
(\ref{mixmodel})
nonparametrically from the knowledge of $F(\cdot|x)$ evaluated at
various $x$.    
Note that the model (\ref{mixmodel}) is further interpreted as a switching
regression model:  
\begin{equation} \label{switch}
z = 
\begin{cases} 
 m_1(x) + \epsilon_1, \epsilon_1|x \sim F^1_{\epsilon|x} \quad & \text{with
	probability } \lambda
\\
m_2(x) + \epsilon_2, \epsilon_2|x \sim F^2_{\epsilon|x} \quad & \text{with
	probability } 1 - \lambda.
\end{cases}
\end{equation}
Models as described above are conventionally estimated using parametric
ML.  That is, the researcher specifies (1) parametric functions for
$m_1(x)$, $m_2(x)$, e.g. $m_1(x) =
\beta_1^\top x$,  $m_2(x) = \beta_2^\top x$, and 
(2) parametric distribution functions for $F_{\epsilon|x}^1$ and
$F_{\epsilon|x}^2$, e.g. $\epsilon_1|x \sim N(0,\sigma_1^2)$,
$\epsilon_2|x \sim N(0,\sigma_2^2)$.  Examples of such methods can
be found in \citeasnoun{quandt1972new} and \citeasnoun{kiefer1978discrete}; see also \citeasnoun{hamilton1989new}   for application of ML in a time series context.  The EM algorithm is often 
used in computing the ML estimator.

While the parametric approach is attractive and practical, the consistency of ML depends
crucially on whether the parametric model is correctly
specified or not.  For example, even if $m_1$ and $m_2$ have the
correct form, misspecifications in  $F_{\epsilon|x}^1$ and
$F_{\epsilon|x}^2$ would result in a failure of consistency.  This is
quite different from standard (possibly nonlinear) regression models,
for which many distribution free estimators are available.   This may
discourage applied researchers from using mixture models.
It also raises a more fundamental question:
Is the model (\ref{switch}) identified under weaker,
non/semi-parametric assumptions?  The results in this section provide
a positive answer to this question.

Before discussing how nonparametric identification is possible, it may be helpful
to see that a certain nonparametric restriction  \emph{fails} to
generate identification in the model.  Arguably the most common identification assumption for
the standard regression model (without mixtures) is the conditional
mean restriction.  In our case, by 
the construction of $F_{\epsilon|x}^1$ and $F_{\epsilon|x}^2$ we have $\int_\R \epsilon
dF_{\epsilon|x}^1(\epsilon) = 0$ and  $\int_\R \epsilon
dF_{\epsilon|x}^2(\epsilon) = 0$.  
The question is whether the knowledge of the conditional mixture
distribution $F(z|x)$ at various $x$, combined with these
``restrictions,'' uniquely determine $F_{\epsilon|x}^1$,
$F_{\epsilon|x}^2$, $m_1$, $m_2$, and $\lambda$.  The answer is
negative; at each $x$, we can split the mixture distribution $F(z|x)$
into increasing and right continuous $\R_+$-valued functions $a(z)$ and $b(z)$, say, so
that $F(z|x) = a(z) + b(z)$.  If we let $\lambda  = \int da(z)$, $m_1(x)=\frac{1}{\lambda}\int z da(z),$ $m_2(x) = \frac{1}{1 - \lambda}\int z db(z)$, $F_{\epsilon|x}^1(\epsilon) = a(\epsilon + m_1(x))/\lambda$   and $F_{\epsilon|x}^2(\epsilon) =
b(\epsilon + m_2(x))/(1-\lambda)$  they would satisfy all the
available restrictions and information at all $x$. 
Even if $m_1$ and
$m_2$ are completely parameterized, the model is not identified; 
``splitting'' of $F(z|x)$ is not unique.  

While it is straightforward to see the above identification failure, it
highlights the fact the conditional
mean zero condition allows ``too many'' ways to split the mixture
distribution, thereby failing to deliver identification.
Fortunately, however, there exists an alternative
nonparametric restriction which identifies the model (\ref{switch}).
In what follows we focus on independence
restrictions, i.e. independence of $(\epsilon_1,\epsilon_2)$ from $x$.  

\begin{rem}
Note that it suffices to assume that the independence restriction holds (i) for  just one element of the $k$-vector of covariates (wlog we assume that it is the first element)   (ii) over a small subset of the support of the element.  The dependence property between $\epsilon$'s and the elements of $x$ other than the first is completely left unspecified.  In this sense the independence requirement should be interpreted as a conditional independence assumption.  With a rich set of controls such a requirement might be regarded reasonable.    Note this point applies to all the other identification results in this paper as well.   
\end{rem}

\subsection{First identification result}\label{sec:1stid}
\medskip

Our first result is concerned with cases where
at least one element of the vector of covariates $x = (x^1,...,x^k)^\top$ is continuous.
Assume that the first $k^*$ elements $x^1,...,x^{k^*}$ are
continuous covariates.  
We establish nonparametric identifiability at $x = x_0$ utilizing local variation in
one of the $k^*$ continuous covariates.  It is
convenient to assume that the first 
element $x^1$ is such an element, which is assumed to be prior knowledge both for identification and estimation.   The following notation is useful in
considering local variations of $x^1$: for a point $x_0 = (x_0^1,...,x_0^k)^\top \in \R$,
define 
$$
N^1(x_0,\delta) = \{(x^1,x_0^2,...,x_0^k)^\top \in \R^k| x^1 \in (x_0^1 - \delta,x_0^1 + \delta)\}.
$$

 \begin{ass}\label{ass-a}
For some $\delta > 0$,  
\begin{enumerate}[\textup{(}i\textup{)}]  

\item\label{a-ind} $\epsilon_1|x \sim F_1$ and  $\epsilon_2|x \sim
  F_2$ at all $x \in N^1(x_0,\delta)$ where $F_1$ and $F_2$ do not depend on the value of $x$,

\item\label{a-nonpara} 
If $0 < \lambda < 1$, $m_1(x_0) - m_1(x) \neq m_2(x_0) - m_2(x)$, for all $x \in
N^1(x_0,\delta)$, $x \neq x_0$,

\item\label{a-cont} $m_1$ and $m_2$ are continuous in $x^1$ at $x_0$.

\end{enumerate}
\end{ass}
With the notation above, (\ref{mixmodel}) is written as:
\begin{equation}\label{mixmodel2}
F(z|x) = \lambda F_1(z - m_1(x)) + (1 -
\lambda)F_2(z - m_2(x)).  
\end{equation}
Note that the mixing distribution is allowed to be degenerate, i.e. $J
= 1$.  As a convention let $\lambda = 1$ if the mixing model is
degenerate.  That is, with degeneration (\ref{mixmodel}) becomes 
\begin{equation}
F(z|x) = F_1(z - m_1(x)).
\end{equation}
The parameter space of $\lambda$ is therefore $(0,1]$.

We first discuss identification of the functions $m_1(\cdot), m_2(\cdot)$ in a neighborhood of the point $x_0 \in  \R^k$.  To this end a set of regularity conditions for nonparametric
identification are stated in terms of moment generating functions.
Let 
$$
M_i(t) = \int_\R e^{t\epsilon}dF_i(\epsilon), \quad i = 1,2,  
$$
for all $t$ such that this integral exists. $M_1$ and $M_2$ are the moment generating functions of the
disturbance terms $\epsilon_1$ and $\epsilon_2$.  Define 
$$
D(x) := m_2(x) - m_1(x)
$$
on $\R^k$ 
and 
$$
h(c,t) : = e^{tD(x_0)(1 + c)}  \frac {M_2(t)} {M_1(t)}, \quad c \in \R, t \in \R.
$$
The following imposes a very weak regularity  condition on the behavior of these
moment generating functions.

\begin{ass}\label{ass-b}
	\begin{enumerate}[\textup{(}i\textup{)}]

		\item\label{b-0.5} The domains of $M_1(t)$ and $M_2(t)$ are $(-\infty,\infty)$, and

		\item\label{b-2}  For some $\varepsilon >0$ either $h(\pm\varepsilon,t) = O(1)$ or $1/ h(\pm\varepsilon,t) = O(1)$, or both hold as $t \rightarrow  +\infty$.   Moreover, the same  holds  as $t \rightarrow - \infty$.

	\end{enumerate}
\end{ass}
\begin{rem}\label{rem:idass}
Note that the requirement \eqref{b-2} for the asymptotic behavior of the ratio $\frac {M_2(t)} {M_1(t)}$ is very weak and reasonable, as it allows the ratio to grow, decline or remain bounded as $t$ diverges. 
\end{rem}

Let $M(t|x)$ denote the moment generating function of $z$
conditional on $x$,
that is, 
$$
M(t|x) := \int_\R e^{tz}dF(z|x), 
$$
whose domain, by (\ref{mixmodel}) and Assumption \ref{ass-b}(\ref{b-0.5}), is $(-\infty,\infty)$, and also let 
$$
R(t,x) := \frac{M(t|x)}{M(t|x_0)}.  
$$
Note that these functions are observable.   The domain of these functions are $\R^k \times (-\infty,\infty)$ by Assumption \ref{ass-b}(\ref{b-0.5}).

\begin{lem}\label{lem:slope}  Suppose Assumptions \ref{ass-a} and
	\ref{ass-b} hold.  Then there exists  $\delta' \in (0,\delta)$ such
	that for every  $x' \in N^1(x_0,\delta')$ 
	
	\begin{enumerate}[\textup{(}i\textup{)}]
		
		\item\label{lem:slope1} $\lim_{t \rightarrow \infty}\frac 1 t \log  R(t,x') = m_1(x') - m_1(x_0) $ or  $	\lim_{t \rightarrow \infty}\frac 1 t \log  R(t,x') = m_2(x') - m_2(x_0)$, 
		
		and 
		
		\item\label{lem:slope2} $\lim_{t \rightarrow -\infty}\frac 1 t \log  R(t,x') = m_1(x') - m_1(x_0) $ or  $	\lim_{t \rightarrow -\infty}\frac 1 t \log  R(t,x') = m_2(x') - m_2(x_0)$ 
		
		hold.

	\end{enumerate}
\end{lem}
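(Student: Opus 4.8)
The plan is to compute the conditional moment generating function in closed form, factor the observable ratio $R(t,x')$ so that one component's exponential is pulled out, and then read off the limit from the exponential size of the remaining factor, with Assumption \ref{ass-b}(\ref{b-2}) supplying the control. From \eqref{mixmodel2}, changing variables by $\epsilon = z - m_i(x)$ gives
\[
M(t|x) = \lambda e^{t m_1(x)} M_1(t) + (1-\lambda) e^{t m_2(x)} M_2(t),
\]
so, writing $\kappa = (1-\lambda)/\lambda$ and $\phi(x,t) = e^{t D(x)} M_2(t)/M_1(t)$,
\[
R(t,x') = e^{t(m_1(x') - m_1(x_0))}\,\frac{1 + \kappa\,\phi(x',t)}{1 + \kappa\,\phi(x_0,t)}.
\]
Applying $\tfrac1t\log$, the prefactor contributes exactly $m_1(x')-m_1(x_0)$, so the whole problem reduces to showing that $\tfrac1t\log\frac{1+\kappa\phi(x',t)}{1+\kappa\phi(x_0,t)}$ converges either to $0$ (giving the limit $m_1(x')-m_1(x_0)$) or to $D(x')-D(x_0)$ (which, after adding back the prefactor, gives $m_2(x')-m_2(x_0)$).

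The second step records two elementary facts. First, $\log\phi(x,t) = t\,D(x) + \log\big(M_2(t)/M_1(t)\big)$ is affine in $D(x)$, and since $\log h(c,t) = t\,D(x_0)(1+c) + \log\big(M_2(t)/M_1(t)\big)$ we have $\phi(x_0,t) = h(0,t)$ while $\log\phi(x',t)$ is the convex combination of $\log h(-\varepsilon,t)$ and $\log h(\varepsilon,t)$ determined by $D(x')$. By Assumption \ref{ass-a}(\ref{a-cont}), $D$ is continuous at $x_0$, so I would fix $\delta'\in(0,\delta)$ small enough that $|D(x')-D(x_0)| < |D(x_0)|\,\varepsilon$ for every $x'\in N^1(x_0,\delta')$; assuming (after relabelling the components) that $D(x_0)>0$, this places $D(x')$ strictly inside $\big(D(x_0)(1-\varepsilon), D(x_0)(1+\varepsilon)\big)$. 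Second, Assumption \ref{ass-a}(\ref{a-nonpara}) gives $D(x')\neq D(x_0)$, so that $\phi(x',t)/\phi(x_0,t) = e^{t(D(x')-D(x_0))}$ is genuinely exponential in $t$.

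Third comes the case analysis. If $h(\varepsilon,t)=O(1)$, then $\log\big(M_2(t)/M_1(t)\big)\le C - t\,D(x_0)(1+\varepsilon)$, and since $D(x')<D(x_0)(1+\varepsilon)$ both $\phi(x',t)$ and $\phi(x_0,t)$ remain bounded; the correction factor then lies between positive constants, $\tfrac1t\log(\cdot)\to 0$, and the limit is $m_1(x')-m_1(x_0)$. Symmetrically, if $1/h(-\varepsilon,t)=O(1)$, then $\log\big(M_2(t)/M_1(t)\big)\ge -C - t\,D(x_0)(1-\varepsilon)$, so $\phi(x',t),\phi(x_0,t)\to\infty$, the ``$+1$''s are negligible, $\tfrac1t\log\frac{1+\kappa\phi(x',t)}{1+\kappa\phi(x_0,t)}\to D(x')-D(x_0)$, and the limit is $m_2(x')-m_2(x_0)$. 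If neither hypothesis holds, Assumption \ref{ass-b}(\ref{b-2}) forces $1/h(\varepsilon,t)=O(1)$ together with $h(-\varepsilon,t)=O(1)$; the opposite crossed pattern ($h(\varepsilon,t)=O(1)$ with $1/h(-\varepsilon,t)=O(1)$) cannot occur, as it would sandwich $\log h(0,t)$ between $t\,D(x_0)\varepsilon - O(1)$ and $-t\,D(x_0)\varepsilon + O(1)$.

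The hard part will be exactly this last configuration. There the two anchors only confine $\tfrac1t\log\big(M_2(t)/M_1(t)\big)$ to the band $[-D(x_0)(1+\varepsilon), -D(x_0)(1-\varepsilon)]$ around $-D(x_0)$, so a priori $\phi(x_0,t)$ need neither stay bounded nor diverge, and the correction factor — whose exponential rate is $\max(0,\beta(x',t))-\max(0,\beta(x_0,t))$ with $\beta(x,t)=D(x)+\tfrac1t\log\big(M_2(t)/M_1(t)\big)$ — could oscillate between $0$ and $D(x')-D(x_0)$. To close the case it suffices to show that $\tfrac1t\log\big(M_2(t)/M_1(t)\big)$ in fact converges to some $\ell\in[-\infty,\infty]$: for any such $\ell$ the rate $\beta(x_0,t)\to D(x_0)+\ell$ has a definite sign for all $x'$ near $x_0$ (the boundary value $\ell=-D(x_0)$ also being fine, since then $\phi(x_0,t)=e^{o(t)}$ contributes $0$), forcing the limit to be $m_1(x')-m_1(x_0)$ or $m_2(x')-m_2(x_0)$. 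I would establish this convergence from convexity of the cumulant generating functions $\log M_i$ — so that $\tfrac1t\log M_i(t)$ is nondecreasing with limit equal to the essential supremum of the support of $F_i$ — which settles $\ell$ at once when the supports are bounded and is the delicate point when they are not. Part (\ref{lem:slope2}), for $t\to-\infty$, then follows by the identical argument applied to the $t\to-\infty$ half of Assumption \ref{ass-b}(\ref{b-2}), with the roles of the tails (now the essential infima of $F_1,F_2$) interchanged.
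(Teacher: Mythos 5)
Your decomposition and your treatment of the two main cases are exactly the paper's: factoring out $e^{t(m_1(x')-m_1(x_0))}$, choosing $\delta'$ by continuity so that $D(x')$ stays within $\varepsilon|D(x_0)|$ of $D(x_0)$, and showing that under the first alternative of Assumption \ref{ass-b}(\ref{b-2}) the correction factor is sandwiched between positive constants while under the second it contributes $D(x')-D(x_0)$. The problem is your residual case analysis. The configuration you single out as ``the hard part'' --- $1/h(\varepsilon,t)=O(1)$ together with $h(-\varepsilon,t)=O(1)$, but neither $h(\varepsilon,t)=O(1)$ nor $1/h(-\varepsilon,t)=O(1)$ --- cannot occur under the assumption as the paper uses it: each alternative in Assumption \ref{ass-b}(\ref{b-2}) is a conjunction over both signs of $\varepsilon$ (the paper's own proof confirms this reading by deducing $D(x_0)=0$ from ``both hold''), so if $h(\varepsilon,t)$ is unbounded the first alternative fails, the second must hold, and in particular $1/h(-\varepsilon,t)=O(1)$, which puts you back in your Case B. This matters because the step you propose in order to close that phantom case --- convergence of $\tfrac1t\log\bigl(M_2(t)/M_1(t)\bigr)$ via convexity of the cumulant generating functions --- is exactly the step you concede you cannot complete when the supports are unbounded (a difference of two nondecreasing functions each diverging to $+\infty$ need not converge), so as written your proof leaves an open hole in a branch that should not exist.

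The case you do need and omit is $D(x_0)=0$. Your normalization ``after relabelling, $D(x_0)>0$'' does not dispose of it, since relabelling only flips the sign of $D$, and your choice of $\delta'$ via $|D(x')-D(x_0)|<|D(x_0)|\varepsilon$ degenerates there. This is precisely the paper's ``both alternatives hold'' branch: in that event $M_2(t)/M_1(t)$ is bounded above and below, so $\phi(x_0,t)\asymp 1$, and the sign of $D(x')$ (nonzero for $x'\neq x_0$ by Assumption \ref{ass-a}(\ref{a-nonpara})) decides whether $1+\kappa\phi(x',t)$ tends to $1$ or grows like $e^{tD(x')}$; the $t\to+\infty$ and $t\to-\infty$ limits then deliver $m_2(x')-m_2(x_0)$ and $m_1(x')-m_1(x_0)$, or vice versa. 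Adding that short argument, together with a one-line remark that $\lambda=1$ gives $\kappa=0$ so the correction factor is identically one, would make your proof complete and essentially identical to the paper's.
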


\begin{proof}[\textupandbold{Proof of Lemma~\ref{lem:slope}}]
First consider the case with $0 < \lambda < 1$.  By the continuity
condition (Assumption \ref{ass-a}(\ref{a-cont})), there exist a $\delta' \in (0,\delta)$ such
that  
\begin{equation}
\label{contfunc}
|m_2(x') - m_2(x_0)| < \frac{ \epsilon|D(x_0)|} 2 \qquad \text{and} \qquad
|m_1(x') - m_1(x_0)| < \frac{ \epsilon|D(x_0)|} 2 
\end{equation}
for all $x' \in N^1(x_0,\delta')$.  By (\ref{mixmodel}) we have 
\begin{equation}\label{mixedmgf}
M(t|x) = \lambda e^{tm_1(x)}M_1(t) + (1 - \lambda) e^{tm_2(x)}M_2(t).
\end{equation}
Now we prove part \eqref{lem:slope1}, i.e., the result with $t \rightarrow \infty$.  
Suppose $h(\pm\epsilon,t) = O(1)$ holds.   
  Write
\begin{align*}
\frac 1 t \log R(x',t)
&= \frac 1 t \log \left ( \frac 
  { \lambda e^{tm_1(x')}M_1(t) + (1 - \lambda)
    e^{tm_2(x')}M_2(t)} 
  { \lambda e^{tm_1(x_0)}M_1(t) + (1 - \lambda)
    e^{tm_2(x_0)}M_2(t)} \right ) 
\\
&= m_1(x') - m_1(x_0) +  \frac 1 t   \log \left (  \frac{\lambda + (1 - \lambda)  e^{t[m_2(x') - m_1(x')] \frac{M_2(t)}{M_1(t)} }}{\lambda + (1 - \lambda)  e^{t[m_2(x_0) - m_1(x_0)] \frac{M_2(t)}{M_1(t)} }} \right )
\end{align*}
Note that (\ref{contfunc}) guarantees that $|m_2(x') - m_1(x')|$ is less than $|D(x_0)|(1 + \epsilon)$.  We have
$$
\lim_{t \rightarrow \infty}\frac 1 t \log R(x',t)
 = m_1(x') - m_1(x_0).
$$
If $1/h(\pm\epsilon,t) = O(1)$ instead, then write
$$
\frac 1 t \log R(x',t) = m_2(x') - m_2(x_0) +  \frac 1 t   \log \left (  \frac
{\lambda  e^{t[m_1(x') - m_2(x')] \frac{M_1(t)}{M_2(t)} }  + (1 - \lambda) }
{\lambda  e^{t[m_1(x_0) - m_2(x_0)] \frac{M_1(t)}{M_2(t)} }  + (1 - \lambda) }
\right )
$$
and  again by $|m_2(x') - m_1(x')| < |D(x_0)|(1 + \epsilon)$ we obtain 
 $$
 \lim_{t \rightarrow \infty}\frac 1 t \log R(x',t)
 = m_2(x') - m_2(x_0).
 $$
If both hold, then it has to be the case that $D(x_0)= 0$.  If, on top of that, $D(x') = m_2(x') - m_1(x') > 0$ then 
$$
\lim_{t \rightarrow -\infty}\frac 1 t \log R(x',t)
= m_1(x') - m_1(x_0)
$$
and 
$$
\lim_{t \rightarrow \infty}\frac 1 t \log R(x',t)
= m_2(x') - m_2(x_0).
$$
The analysis of the case with  $D(x') = m_2(x') - m_1(x') < 0$ is similar.

The proof of part \eqref{lem:slope2} is similar.  If $\lambda = 1$ (i.e. the mixing distribution is degenerate) we have 
 $$
 \frac 1 t \log R(x',t) = m_1(x') - m_1(x_0)
 $$ 
 thus the claim trivially holds.
\end{proof}

Lemma \ref{lem:slope} suggests that  the slopes of $m_1$
  and $m_2$ are identified as far as the following condition holds.  To state it, define
  $$
  \E[z|x] = \int z d F(z|x)
  $$
  and
 $$
 \lambda_c := 
 \frac 
 { \E[z|x] - \E[z|x_0]
 	 -  (1 + c) \lim_{t \rightarrow - \infty} \frac
 	1 t \log R(t,x) }
 { \lim_{t \rightarrow + \infty} \frac 1 t \log R(t,x) - 
 	(1 + c)\lim_{t \rightarrow - \infty} \frac 1 t
 	\log R(t,x)}.
 $$    
 Note these are well defined under Assumption \ref{ass-b} (\ref{b-0.5}).  The constant $\delta$ in the following condition will be specified in the statements of Lemmas \ref{lem:slope} and \ref{lem:id}.
 \begin{condition}\label{ass:thm1id}  Either 
	\begin{enumerate}[\textup{(}i\textup{)}]

	\item\label{ass:thm1id1} $
	\lim_{t \rightarrow \infty}\frac 1 t \log  R(t,x)  \neq \lim_{t \rightarrow -\infty}\frac 1 t \log  R(t,x) \text{ for some } x \in N^1(x_0,\delta) 
	$
	
	or 
	
	\item\label{ass:thm1id2}  $\lim_{c \downarrow 0} \lambda_{c} = 1$
	
	holds.  	
	
\end{enumerate} 	
\end{condition}
With this condition we have: 
\begin{lem}\label{lem:thm1}  Suppose Assumptions \ref{ass-a}, 
 	\ref{ass-b} and Condition \ref{ass:thm1id} hold.  Then there exists  $\delta' \in (0,\delta)$ such
 	that $F(\cdot|x), x \in N^1(x_0,\delta)$ uniquely
 	determines the value of $\lambda$,  and moreover, 
 	 $$(m_1(x) - m_1(x_0),m_2(x) - m_2(x_0)) \text{ if } \lambda \in (0,1)  $$ 
 	 up to labeling and 
 	 $$m_1(x) - m_1(x_0)\text{ if } \lambda = 1$$
 	 for all $x$
 	in $N^1(x_0,\delta')$　as well.
 \end{lem}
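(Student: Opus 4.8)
The plan is to combine Lemma~\ref{lem:slope} with the zero-mean normalization of the component distributions. Write $a(x):=m_1(x)-m_1(x_0)$ and $b(x):=m_2(x)-m_2(x_0)$, and set $s_+(x):=\lim_{t\to\infty}\frac1t\log R(t,x)$ and $s_-(x):=\lim_{t\to-\infty}\frac1t\log R(t,x)$; these limits exist and are observable. First I would record the mean identity: differentiating (\ref{mixedmgf}) at $t=0$ and using $M_i(0)=1$ and $M_i'(0)=\int\epsilon\,dF_i(\epsilon)=0$ gives $\E[z|x]=\lambda m_1(x)+(1-\lambda)m_2(x)$, hence
\begin{equation}\label{meanid}
\E[z|x]-\E[z|x_0]=\lambda\,a(x)+(1-\lambda)\,b(x).
\end{equation}
Lemma~\ref{lem:slope} says each of $s_\pm(x)$ equals $a(x)$ or $b(x)$; crucially, inspecting its proof, which of the two occurs is governed only by the sign of $D(x_0)$ and the $x$-free $O(1)$ behavior of $h(\pm\varepsilon,\cdot)$, so the tail-to-component assignment is the \emph{same} for every $x\in N^1(x_0,\delta')$.

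Next I would split on Condition~\ref{ass:thm1id}. Suppose (\ref{ass:thm1id1}) holds, so $s_+(x_1)\neq s_-(x_1)$ for some $x_1$. By the uniformity just noted, the two tails cannot pick the same component at $x_1$, so one tail tracks $m_1$ and the other $m_2$ simultaneously at every $x$; fixing the labeling, $a(x)=s_+(x)$ and $b(x)=s_-(x)$ for \emph{all} $x\in N^1(x_0,\delta')$, both observable. Since $a(x_1)\neq b(x_1)$ forces $\lambda\neq1$, evaluating (\ref{meanid}) at $x_1$ yields $\lambda=\bigl(\E[z|x_1]-\E[z|x_0]-s_-(x_1)\bigr)/\bigl(s_+(x_1)-s_-(x_1)\bigr)\in(0,1)$, the opposite labeling returning $1-\lambda$; this is precisely identification up to labeling, and it settles the case $\lambda\in(0,1)$.

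If instead (\ref{ass:thm1id1}) fails, then $s_+(x)=s_-(x)=:s(x)$ for all $x$ and, by the Condition, (\ref{ass:thm1id2}) holds. A direct computation with $s_+=s_-=s$ (for an $x$ with $s(x)\neq0$) gives $\lambda_c=1+\frac{s(x)-(\E[z|x]-\E[z|x_0])}{c\,s(x)}$, so $\lim_{c\downarrow0}\lambda_c=1$ forces $\E[z|x]-\E[z|x_0]=s(x)$. Substituting into (\ref{meanid}) and using $s(x)\in\{a(x),b(x)\}$ yields $(1-\lambda)(b(x)-a(x))=0$ (or the symmetric relation); since Assumption~\ref{ass-a}(\ref{a-nonpara}) makes $a(x)\neq b(x)$ for $x\neq x_0$ whenever $\lambda\in(0,1)$, this forces $\lambda=1$. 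The model is then degenerate, $R(t,x)=e^{t a(x)}$ gives $s(x)=m_1(x)-m_1(x_0)$, and $\lambda=1$ together with $m_1(x)-m_1(x_0)=s(x)$ is identified for all $x$. Finally I would note that a genuine two-component model with coinciding tails would make $\E[z|x]-\E[z|x_0]\neq s(x)$ for $x\neq x_0$, hence violate \emph{both} parts of Condition~\ref{ass:thm1id}; this configuration is exactly what the Condition excludes, which explains why it is imposed.

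The step I expect to be the main obstacle is the global labeling: passing from the pointwise disjunction of Lemma~\ref{lem:slope} to a single consistent assignment $a(\cdot)=s_+(\cdot)$, $b(\cdot)=s_-(\cdot)$ valid simultaneously at all $x\in N^1(x_0,\delta')$. I would secure this by extracting from the proof of Lemma~\ref{lem:slope} that the selection between $m_1$ and $m_2$ depends only on $x$-free data (the sign of $D(x_0)$ and the asymptotics of $h(\pm\varepsilon,\cdot)$), together with the uniform bound (\ref{contfunc}) keeping $D(x)$ of constant sign on the neighborhood; this rules out the ambiguous scenario in which the two tails coincide at some points but split at others, which is the only configuration where (\ref{meanid}) alone would fail to pin down the pair $(a(x),b(x))$.
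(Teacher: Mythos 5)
Your proof is correct and takes essentially the same route as the paper's: Lemma~\ref{lem:slope} supplies the tail slopes, the conditional-mean identity combined with $\lambda_c$ pins down $\lambda$, and the two branches of Condition~\ref{ass:thm1id} are matched to the cases $\lambda\in(0,1)$ and $\lambda=1$. The only differences are organizational — you argue that failure of part~(\ref{ass:thm1id1}) together with part~(\ref{ass:thm1id2}) forces $\lambda=1$, where the paper establishes the contrapositive (if $\lambda\in(0,1)$ and (\ref{ass:thm1id1}) fails then (\ref{ass:thm1id2}) fails) — and your explicit attention to the global consistency of the tail-to-component labeling is, if anything, more careful than the paper's.
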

\begin{proof}
First consider the case with $\lambda \in (0,1)$.  Suppose Condition \ref{ass:thm1id}\eqref{ass:thm1id1} fails, i.e.	
$
\lim_{t \rightarrow \infty}\frac 1 t \log  R(t,x)  = \lim_{t \rightarrow -\infty}\frac 1 t \log  R(t,x) \text{ for every } x \in N^1(x_0,\delta'). 
$
In view of Lemma \ref{lem:slope} these limits are either equal to $m_1(x) - m_1(x_0)$ or  $m_2(x) - m_2(x_0)$.  Wlog suppose it is the former.  
Note 
\begin{equation}\label{mean}
\E[z|x]  = \lambda m_1(x) + (1 - \lambda) m_2(x),
\end{equation}
 therefore
\begin{align*}
\E[z|x] - \E[z|x_0] 
& = \lambda[(m_1(x) - m_1(x_0)) -
(m_2(x) - m_2(x_0))]  +  (m_2(x) - m_2(x_0)) \\
&= (1 - \lambda)[(m_2(x) - m_2(x_0)) -
(m_1(x) - m_1(x_0))]  +  (m_1(x) - m_1(x_0)).
\end{align*}
Using this 
\begin{align*}
\lambda_{c} 
&= \frac{(1 - \lambda)[(m_2(x) - m_2(x_0)) -
	(m_1(x) - m_1(x_0))]  +  (m_1(x) - m_1(x_0))- (1 + c)(m_1(x) - m_1(x_0))}{(m_1(x) - m_1(x_0)) - (1+c)(m_1(x) - m_1(x_0))}
\\
&= -\frac{(1 - \lambda) [(m_2(x) - m_2(x_0)) -
	(m_1(x) - m_1(x_0))]  }{c 	(m_1(x) - m_1(x_0) )   } + 1.
\end{align*}
Thus Condition  \ref{ass:thm1id}\eqref{ass:thm1id2} does not hold either.  In sum, if $\lambda \neq 1$ then  Condition \ref{ass:thm1id} reduces to its first part, i.e. Condition \ref{ass:thm1id}\eqref{ass:thm1id1}. 
 Lemma \ref{lem:slope} and  Condition \ref{ass:thm1id}\eqref{ass:thm1id1} imply  either 
 $$
 \lim_{t
 	\rightarrow + \infty} \frac 1 t \log R(t,x) = m_1(x)
 - m_1(x_0), \quad  \lim_{t
 	\rightarrow - \infty} \frac 1 t \log R(t,x) = m_2(x)
 - m_2(x_0)]
 $$
or 
$$
\lim_{t
	\rightarrow + \infty} \frac 1 t \log R(t,x) = m_2(x)
- m_2(x_0), \quad  \lim_{t
	\rightarrow - \infty} \frac 1 t \log R(t,x) = m_1(x)
- m_1(x_0)].
$$
Either way  the slopes are identified.  If the former holds, then 
\begin{align*}
\lambda_{c} 
&=  \frac{(1 - \lambda)[(m_2(x) - m_2(x_0)) -
	(m_1(x) - m_1(x_0))]  +  (m_1(x) - m_1(x_0))- (1 + c)(m_2(x) - m_2(x_0))}{(m_1(x) - m_1(x_0)) - (1+c)(m_2(x) - m_2(x_0))}
\\
& \rightarrow \lambda 
\end{align*}
as $c \downarrow 0$, which identifies $\lambda$.  In the latter case re-labeling delivers the result, with $\lambda$ replaced by $1-\lambda$. 

%
%

Next, consider the case with $\lambda = 1$.  Then Condition \ref{ass:thm1id}\eqref{ass:thm1id1} cannot hold; as noted before 
 $$
\frac 1 t \log R(x',t) = m_1(x') - m_1(x_0)
$$ 
(which identifies the slope).
On the other hand
\begin{align*}
\lambda_{\delta} 
&= \frac{ (m_1(x) - m_1(x_0))- (1 + \delta)(m_1(x) - m_1(x_0))}     {(m_1(x) - m_1(x_0)) - (1+\delta)(m_1(x) - m_1(x_0))}
\\
& = 1,
\end{align*}
so indeed Condition   \ref{ass:thm1id}\eqref{ass:thm1id2} is consistent with $\lambda = 1$.  Moreover this shows that the limit of $\lambda_{\delta}$ once again identifies $\lambda$.
\end{proof}

\begin{rem}
	Condition \ref{ass:thm1id} is the main regularity restriction for our first identifiability result.  Importantly, it is testable, as both $R(t,x)$ and $\lambda_\delta$ are observable.  
\end{rem}   

\begin{rem}
A sufficient condition.	
\end{rem}
	
The next Lemma gives a full identification result.  Let ${\mathcal
  F}(\R^p)$ denote the space of distribution functions on 
$\R^p$ for some $p \in \mathbb{N}$.  Define 
$$
\bar{\mathcal F}(\R^p) = \{F: \int u F(du) = 0, F \in  {\mathcal F}(\R^p)\},
$$
the set of distribution functions with mean zero.  The parameter space
of $(F_1(\cdot),F_2(\cdot))$ is given by $\bar 
{\mathcal F}(\R)^2$.  Also, for a set $\mathcal C \subset \R^k$ let $\mathcal V(\mathcal C)$ denote the space of all real valued functions on $\mathcal C$.
\begin{lem}\label{lem:id}  Suppose Assumptions \ref{ass-a}, \ref{ass-b} and Condition \ref{ass:thm1id} hold.  Then  there exists  $\delta' \in (0,\delta)$ such
	that $F(\cdot|x), x \in
  N^1(x_0,\delta)$ uniquely
  determines $(\lambda,F_1(\cdot),F_2(\cdot),m_1(\cdot),m_2(\cdot))$ in
  the set $(0,1] \times \bar{\mathcal F}(\R)^2 
  \times {\mathcal V(N^1(x_0,\delta'))}^2$  up to labeling.
\end{lem}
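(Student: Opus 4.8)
The plan is to build on Lemma~\ref{lem:thm1}, which already pins down $\lambda$ together with the shifts $m_1(x)-m_1(x_0)$ and $m_2(x)-m_2(x_0)$ as consistently labelled functions of $x$ on $N^1(x_0,\delta')$, up to an overall relabelling of the two components. What remains is to recover the two \emph{levels} $m_1(x_0)$ and $m_2(x_0)$ and the two error distributions $F_1,F_2$; once these are in hand, the full regression functions follow by adding the known shifts back on.

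The key device is again the conditional moment generating function. Writing $m_j(x)=m_j(x_0)+[m_j(x)-m_j(x_0)]$ in \eqref{mixedmgf} and absorbing the level into the transform, I would set $\tilde M_j(t):=e^{t m_j(x_0)}M_j(t)$, the moment generating function of the $j$-th component distribution evaluated at $x_0$. Then for every $x\in N^1(x_0,\delta')$,
$$M(t|x)=\lambda\, e^{t[m_1(x)-m_1(x_0)]}\,\tilde M_1(t)+(1-\lambda)\, e^{t[m_2(x)-m_2(x_0)]}\,\tilde M_2(t).$$
Here $M(t|x)$ is observable and, by Lemma~\ref{lem:thm1}, the scalar coefficients multiplying $\tilde M_1(t)$ and $\tilde M_2(t)$ are known. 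For fixed $t$ this is a linear equation in the two unknowns $\tilde M_1(t),\tilde M_2(t)$.

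For the case $0<\lambda<1$, I would evaluate the displayed identity at $x=x_0$ and at any second point $x''\in N^1(x_0,\delta')$ with $x''\neq x_0$, producing a $2\times2$ linear system whose coefficient determinant is
$$\lambda(1-\lambda)\bigl(e^{t[m_2(x'')-m_2(x_0)]}-e^{t[m_1(x'')-m_1(x_0)]}\bigr).$$
The heart of the argument is that this determinant is nonzero for every $t\neq0$: this is exactly where Assumption~\ref{ass-a}(\ref{a-nonpara}) enters, since the non-parallel condition guarantees $m_1(x'')-m_1(x_0)\neq m_2(x'')-m_2(x_0)$ for $x''\neq x_0$. Solving the system then determines $\tilde M_1(t),\tilde M_2(t)$ uniquely for all $t\neq0$, while $\tilde M_1(0)=\tilde M_2(0)=1$ by construction.

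Having identified $\tilde M_1,\tilde M_2$ on all of $\R$, the remaining steps are standard. Because the domains of $M_1,M_2$ are the whole line by Assumption~\ref{ass-b}(\ref{b-0.5}), so are those of $\tilde M_1,\tilde M_2$, and a moment generating function finite near the origin uniquely determines its law; hence each $\tilde M_j$ pins down the $j$-th component distribution at $x_0$, say $G_j$. The level is then recovered as the mean $m_j(x_0)=\int z\,dG_j(z)$, the mean-zero error law by centering $F_j(\cdot)=G_j(\cdot+m_j(x_0))\in\bar{\mathcal F}(\R)$, and the full regression function as $m_j(x)=m_j(x_0)+[m_j(x)-m_j(x_0)]$ on $N^1(x_0,\delta')$. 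The degenerate case $\lambda=1$ is immediate, since then $\tilde M_1(t)=M(t|x_0)$ is directly observable and the same inversion-and-centering argument recovers $(F_1,m_1)$ with no second component to identify. The main obstacle is precisely the non-vanishing of the $2\times2$ determinant, which is the payoff of the non-parallel Assumption~\ref{ass-a}(\ref{a-nonpara}); the only other point requiring care is that the labeling must be inherited consistently from Lemma~\ref{lem:thm1}, so that $(\tilde M_1,\tilde M_2)$, and hence $(F_1,m_1)$ and $(F_2,m_2)$, are identified as an ordered pair up to a single global swap.
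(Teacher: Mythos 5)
Your proposal is correct, and it reorganizes the second half of the argument in a genuinely different way from the paper. The first half coincides: like the paper, you take $\lambda$ and the increments $m_j(\cdot)-m_j(x_0)$ as given from Lemma~\ref{lem:thm1}. For the remainder, the paper first recovers the levels $m_1(x_0),m_2(x_0)$ from first and second conditional moments — it forms $\ddot M(0|x_0)-\ddot M(0|x)$, builds the quantity $C(x)$, and solves the $2\times2$ system \eqref{mID} whose coefficient matrix contains the already-identified increments — and only then inverts the system \eqref{mgfeq} in $(M_1(t),M_2(t))$, whose matrix $E(x_0,x,t)$ requires the levels to be known. You instead absorb the unknown levels into the transforms $\tilde M_j(t)=e^{tm_j(x_0)}M_j(t)$, so that the coefficient matrix of your linear system involves only $\lambda$ and the increments; a single inversion then delivers the laws $G_j$ of the components at $x_0$, from which the levels drop out as means and the centered errors as $F_j(\cdot)=G_j(\cdot+m_j(x_0))$. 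The determinant you must control, $\lambda(1-\lambda)\bigl(e^{t[m_2(x'')-m_2(x_0)]}-e^{t[m_1(x'')-m_1(x_0)]}\bigr)$, is nonzero for $t\neq0$ for exactly the same reason as the paper's $\mathrm{Det}(E(x_0,x,t))$, namely Assumption~\ref{ass-a}(\ref{a-nonpara}), and both routes close with uniqueness of the Laplace transform. What your version buys is economy — the second-moment step is eliminated and the role of the non-parallel condition is isolated in one determinant; what the paper's version buys is a moment-based formula for $m_1(x_0),m_2(x_0)$ that is directly mimicked by the sample-analogue estimator in Section~\ref{sec:estimation}. Your handling of the degenerate case $\lambda=1$ and of labeling matches the paper's.
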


\begin{proof}[\textupandbold{Proof of Lemma~\ref{lem:id}}]
Define $\dot M(t,x) = \frac
\partial {\partial t} M(t,x)$, $\ddot M(t,x) = \frac {\partial^2}
{(\partial t)^2}M(t,x)$, $\ddot M_i(t) =\frac {\partial^2}
{(\partial t)^2} M_i(t), i = 1,2$ and $\ddot M_i(t) =  \frac {\partial^2}
{(\partial t)^2}M_i(t), i = 1,2$, whose existences follow from Assumption \ref{ass-b} (\ref{b-0.5}). 

Note 
\begin{equation}\label{mean}
\dot M(0|x) = \int z dF(z|x) = \lambda m_1(x) + (1 - \lambda) m_2(x).
\end{equation}
Using this, 
\begin{align*}
\dot M(0|x_0) - \dot M(0|x)   
& = \lambda[(m_1(x_0) - m_1(x)) -
(m_2(x_0) - m_2(x))]  +  (m_2(x_0) - m_2(x)) \\
&= (1 - \lambda)[(m_2(x_0) - m_2(x)) -
(m_1(x_0) - m_1(x))]  +  (m_1(x_0) - m_1(x)).
\end{align*}
By this and Assumption \ref{ass-a}(\ref{a-nonpara}), if $0 < \lambda <
1$, $\lambda$ is
identified from  
\begin{equation}\label{lambdaid}
\lambda = 
\frac 
{[\dot M(0|x_0) - \dot M(0|x)] -  \lim_{t \rightarrow - \infty} \frac
  1 t \log \frac {M(t|x_0)} {M(t|x)} }
{ \lim_{t \rightarrow + \infty} \frac 1 t \log \frac {M(t|x_0)} {M(t|x)} - 
 \lim_{t \rightarrow - \infty} \frac 1 t \log \frac {M(t|x_0)} {M(t|x)}}
\end{equation}
evaluated at an arbitrary $x \in N^1(x_0,\delta')$ (note $\delta'$ is
  defined in Lemma \ref{lem:slope}), since $\lim_{t
  \rightarrow + \infty} \frac 1 t \log \frac
{M(t|x_0)} {M(t|x)}$ and   
 $\lim_{t \rightarrow - \infty} \frac 1 t \log \frac
 {M(t|x_0)} {M(t|x)}$ identify 
the factors $[m_1(x_0)
- m_1(x)]$ and $ [m_2(x_0) - m_2(x)]$ by Lemma \ref{lem:slope} (here
and in what follows, we assume that $m_2(x_0) - m_1(x_0) < 0$; if
$m_2(x_0) - m_1(x_0) > 0$, $\lambda$ should be replaced by $(1-\lambda)$).  The
right hand side of (\ref{lambdaid}), however, is not well-defined
  ($ = 0/0$) if the mixing distribution is degenerate, i.e. $\lambda =
  1$.  To avoid the discontinuity, let 
$$
\lambda_\delta = 
\frac 
{[\dot M(0|x_0) - \dot M(0|x)] -  (1 + \delta) \lim_{t \rightarrow - \infty} \frac
  1 t \log \frac {M(t|x_0)} {M(t|x)} }
{ \lim_{t \rightarrow + \infty} \frac 1 t \log \frac {M(t|x_0)} {M(t|x)} - 
 (1 + \delta)\lim_{t \rightarrow - \infty} \frac 1 t
 \log \frac {M(t|x_0)} {M(t|x)}}, 
$$    
which approaches to $\lambda$ as $\delta \rightarrow 0$ whether
$\lambda < 1$ or not.  Thus
$\lambda$ is determined by  
$$
\lambda = \lim_{\delta \rightarrow 0} \lambda_\delta.
$$
Next, to show that $m_1(x_0)$ and $m_2(x_0)$ are identified, note the
basic relationship of the first and second order moments:
$$
\ddot M(0|x) = \lambda [m_1(x)^2 + \ddot M_1(0)] + (1 -
\lambda)[m_2(x)^2 + \ddot M_2(0)]. 
$$
Therefore
\begin{align*}
\ddot M(0|x_0)  - \ddot M(0|x) = & \lambda [m_1(x_0)^2 - m_1(x)^2]  + (1 -
\lambda)[m_2(x_0)^2 - m_2(x)^2]
\\
= & \lambda (2m_1(x_0) - [m_1(x_0) - m_1(x)])[m_1(x_0) - m_1(x)]  
\\
& + (1 -\lambda)(2m_2(x_0) - [m_2(x_0) - m_2(x)])[m_2(x_0) - m_2(x)].
\end{align*}
Let 
$$
C(x) = \left \{\ddot M(0|x_0)  - \ddot M(0|x) + \lambda [m_1(x_0) - m_1(x)]^2 +
(1 - \lambda) [m_2(x_0) - m_2(x)]^2 \right \}/2,   
$$
then
$$
C(x) = [m_1(x_0) - m_1(x)]\lambda m_1(x_0) + [m_2(x_0) - m_2(x)](1 -
\lambda) m_2(x_0).  
$$
Notice that $C(x)$ is already identified over $N^1(x_0,\delta')$ from the above argument and
Lemma \ref{lem:slope}.  Together with (\ref{mean}), 
\begin{equation}\label{mID}
\begin{bmatrix}
C(x) \\ \dot M(0|x_0)
\end{bmatrix}
= 
\begin{pmatrix}
 [m_1(x_0) - m_1(x)] &  [m_2(x_0) - m_2(x)]
\\
1 & 1 
\end{pmatrix}
\begin{pmatrix}
\lambda &  0
\\
0 & (1 - \lambda) 
\end{pmatrix}
\begin{bmatrix}
m_1(x_0) \\ m_2(x_0)
\end{bmatrix},
\end{equation}
for all $x \in N^1(x_0,\delta')$.  By Assumptions
\ref{ass-a}(\ref{a-nonpara}), this
can be uniquely solved for $m_1(x_0)$ and $m_2(x_0)$ (if $\lambda =
1$, the above equation can be solved directly to determine $m_1(x_0)$;
another way to proceed in the degenerate case is to solve (\ref{mID})
using the Moore-Penrose generalized inverse, which identifies
$m_1(x_0)$ and yields the solution that $m_2(x_0) = 0$).  As the slopes are already obtained in Lemma \ref{lem:thm1}, the levels of $m_1$ and $m_2$ over $N^1(x_0,\delta)$ are also identified.  The only components
remaining are $F_1$ and $F_2$.  By evaluating (\ref{mixedmgf}) at
$x_0$ and $x \in N^1(x_0,\delta')$, $x \neq x_0$, obtain    
\begin{equation}\label{mgfeq}
\begin{bmatrix}
M(t|x_0) \\ M(t|x)
\end{bmatrix}
= E(x_0,x,t) \Lambda
\begin{bmatrix}
M_1(t) \\ M_2(t)
\end{bmatrix},
\end{equation}
where
\[
E(x,x',t) = 
\begin{pmatrix}
e^{tm_1(x)} & e^{tm_2(x)}
\\
e^{tm_1(x')} & e^{tm_2(x')} 
\end{pmatrix}, 
\Lambda = 
\begin{pmatrix}
\lambda &  0
\\
0 & (1 - \lambda) 
\end{pmatrix}.
\]
If the mixing distribution is non-degenerate, 
\begin{align*}
\text{Det}(E(x_0,x,t))&  = e^{t[m_1(x_0) + m_2(x)]} - e^{t[m_1(x) + m_2(x_0)]}
\\
& =  e^{t[m_1(x_0) + m_2(x)]}\left(1 - e^{t\{[m_1(x) - m_1(x_0)] - [m_2(x)
    - m_2(x_0)] }\right )
\\
& \neq  0 
\end{align*}
for all $x \in N^1(x_0,\delta)$, $x \neq x_0$, $t \neq 0$, because of
Assumption \ref{ass-a}(\ref{a-nonpara}),
guaranteeing 
the invertibility of  $E(x_0,x',t)$.  Moreover, 
$$
E(x_0,x,t) = e^{t[m_1(x_0) + m_2(x_0)]}
\begin{pmatrix}
e^{-tm_2(x_0)} & e^{-tm_1(x_0)}
\\
e^{t\left\{[m_1(x) - m_1(x_0)] - m_2(x_0)\right \}} &
e^{t\left\{[m_2(x) - m_2(x_0)] - m_1(x_0)\right \}} 
\end{pmatrix}.  
$$
Therefore $E(x_0,x,t)$ for all $x \in N^1(x_0,\delta')$ and $t$ are
identified from the above argument and Lemma \ref{lem:slope}.
Evaluate (\ref{mgfeq}) at an arbitrary $x \in N^1(x_0,\delta')$ and
solve it to determine $M_1(\cdot)$ and $M_2(\cdot)$.  If $\lambda =
1$, solve (\ref{mgfeq}) directly to identify $M_1$ (or, alternatively,
use the Moore-Penrose generalized inverse as before). 
  Since distribution functions are uniquely
determined by their Laplace transforms (see, for example,  \citeasnoun{feller1968introduction},
p.233), $F_1(\cdot)$ and $F_2(\cdot)$ are uniquely determined.  This
completes the proof.        
\end{proof}

\begin{rem}\label{rem1:id}
To show the above lemma, some regularity conditions on the nature of
$m_1$, $m_2$, $F_1$ and $F_2$ (e.g.
Assumptions \ref{ass-a}(\ref{a-nonpara}),
\ref{ass-b}(\ref{b-0.5})-(\ref{b-2})) are imposed.  Note that such
restrictions are \emph{not} imposed on the parameter set $(0,1] \times \bar{\mathcal F}(\R)^2
  \times \mathcal V( N^1(x_0,\delta') )^2$.  The space of candidate parameters being searched over
  generally contains
  parameter values that violate, say, the non-parallel regression function
  condition as in Assumption \ref{ass-a}(\ref{a-nonpara}).  The
only restrictions imposed on the parameter space are the
independence restriction, which enables us to have $\bar{\mathcal F}(\R)^2$
as the space of the distributions of $\epsilon$'s, and the mean zero
property of $\epsilon$'s, which holds by construction. Lemmas \ref{lem:slope}
  and \ref{lem:id} claim that as far as the \emph{true parameter value}
  $(\lambda, F_1(\cdot), F_2(\cdot), m_1(\cdot), m_2(\cdot))$ satisfies
  the regularity conditions like Assumptions \ref{ass-a}(\ref{a-nonpara}),
\ref{ass-b}(\ref{b-0.5})-(\ref{b-2})), it is uniquely determined in
the unrestricted parameter space $(0,1] \times \bar{\mathcal F}(\R)^2
  \times  \mathcal V( N^1(x_0,\delta') )^2$.  This point
  should be clear from the proof. 
  It is of
  course much easier to establish nonparametric identification by 
  restricting the parameter space we search over, for example, by making
  the parameter space for $m_1$ and $m_2$ the space of pairs of functions
  that are non-parallel.  Such a result
  is not satisfactory from a practical point of view:
  \emph{imposing} conditions such as Assumption 
  \ref{ass-a}(\ref{a-nonpara}) in estimation is difficult in practice.
  This is the reason why this paper considers the more challenging problem
  which removes unnecessary restrictions on the parameter space.  
\end{rem}

\begin{rem}\label{rem1.5:id}
Note that Lemmas \ref{lem:slope} and \ref{lem:id} do not require
$\lambda <1$.  That is, if the true model has $J=1$, the model is
still 
correctly identified (to be a model with just one ``type'' of individuals).  
\end{rem}

\begin{rem}\label{rem2:id}
Some of the assumptions made above are crucial.  The main source of
identification is the independence assumption (Assumption
\ref{ass-a}(\ref{a-ind})), as discussed before.  Also Assumption
\ref{ass-a}(\ref{a-nonpara}) is essential.  If we have $m_1$ and $m_1$
that are completely parallel everywhere, it is easy to see that the ``shift restriction'' implied
by independence loses its identifying power. 
\end{rem}

\begin{rem}
On the other hand, some
of the assumption made here are ``regularity conditions''. First, Assumption
\ref{ass-b}(\ref{b-0.5}) imposes a rather strong assumption requiring that the
moment generating functions $M_1$ and $M_2$ of $F_1$ and $F_2$ exist over $\R$.
Second, Assumption \ref{ass-b}(\ref{b-2}) imposes a very mild condition: see Remark \ref{rem:idass}.  Assumption  \ref{ass:thm1id}  is important for this result, and as discussed earlier, it is testable.  It is satisfied by a large class of parameters, and interestingly, it even includes the case where $F_1$ and $F_2$ are completely identical.    
\end{rem}

\subsection{Second identification result}

\medskip

This section propose an alternative approach for identifying  \eqref{mixmodel}.  One advantage of this second identification result is that it is based on characteristic functions, so their existence is not an issue.  Like the first identification result, the key sufficient condition, which differs from the MGF based condition in the previous section, is testable,     Nonparametric identification holds under
the following alternative set of sufficient conditions.    
\begin{ass}\label{ass-c}
There exist three points $x_a, x_b, x_c$ in $\R^k$ such that  
\begin{enumerate}[\textup{(}i\textup{)}]  \label{ass:2nd}

\item\label{c-ind} 
 $\epsilon_1|x \sim F_1$ and  $\epsilon_2|x \sim F_2$ at all $x =
x_a, x = x_b, x = x_c$, where $F_1$ and $F_2$ do not depend on $x$, 

\item\label{c-nonpara} 
$m_1(x_a) - m_1(x_b) \neq m_2(x_a) - m_2(x_b)$, $m_1(x_a) - m_1(x_c) \neq m_2(x_a) - m_2(x_c)$, and $m_1(x_b)
- m_1(x_c) \neq m_2(x_b) - m_2(x_c)$. 

\end{enumerate}
\end{ass}
Assumption \ref{ass-c} is similar to Assumption \ref{ass-a}, though
here the continuity of $m_1$ and $m_2$ is not an issue.  Next
assumption imposes regularity conditions of the characteristic
functions of $F_1$ and $F_2$, defined by
$$
\phi_i(t) := \int_{\R} e^{it\epsilon} dF_i(\epsilon), \quad i = 1,2.
$$   
\begin{ass}\label{b-1}
$\lim_{t \rightarrow \infty}
\left| \frac {\phi_1(t)} {\phi_2(t)} \right | \rightarrow 0$  or    $ \left| \frac {\phi_2(t)} {\phi_1(t)} \right | \rightarrow 0$ or $\lambda = 1$. 
\end{ass}

It is interesting to compare Assumption \ref{b-1} with Condition \ref{ass:thm1id}.  The former gives a sufficient condition in terms of the characteristic function, whereas the latter the moment generating function.  
It holds, for
example,  if $F_1$ and $F_2$ are the CDFs of $N(0,\sigma_1^2)$,
$N(0,\sigma_2^2)$, $\sigma_1^2 \neq \sigma_2^2$.  \citeasnoun{teicher1963identifiability} 
uses an assumption similar to this.    Assumption \ref{b-1} rules out the case with $F_1 \equiv F_2$, which is allowed by Assumption \ref{ass:thm1id}.  Fortunately, just like  Condition \ref{ass:thm1id}, the new condition Assumption \ref{b-1} is verifiable through the observables, as is clear from the next lemma.  This means which of the two identification strategies to be used can be determined by the observable features of the data.  To state this more precisely, let $\phi(t|x)$ denote the characteristic functions of the conditional
mixture distribution $F(z|x)$, that is,
$$
\phi(t|x) := \int_\R e^{itz}dF(z|x),
$$
and for $x_0 \in \R^k$ define
$$
\rho(x,t) := \frac{\phi(t|x)}{\phi(t|x_0)},  \quad x \in \R^k.
$$
\begin{condition} \label{cond:id2} There exists $\epsilon > 0$ such that 
$$
	\lim_{t \rightarrow \infty}|\rho(x,t)|  = 1
	$$
	and
	$$
	\lim_{t \rightarrow \infty}     \frac {-i} a  \mathrm{Log}\left( \frac{\rho(x,t+a)}{\rho(x,t)} \right) = \mathrm{const.}
	$$
for every $x \in N^1(x_0,\epsilon)$ and $a \in (0,\epsilon]$  where the constant in the second condition may depend on $x$ and $\mathrm{Log}(z)$ denotes the principal value of the complex logarithm of $z \in \mathbb{C}$.
		\end{condition}

\begin{lem}\label{lem:id2cd}  If $m_1$ and $m_2$ are non-parallel on  $N^1(x_0,\epsilon)$,
Assumption \ref{b-1} and Condition \ref{cond:id2} are equivalent.
\end{lem}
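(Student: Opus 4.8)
The plan is to express everything through the conditional characteristic function and reduce the equivalence to the behavior of the ratio $\phi_1/\phi_2$. From (\ref{mixmodel2}) and the definition of $\phi(t|x)$, the characteristic-function analogue of (\ref{mixedmgf}) reads
\[
\phi(t|x) = \lambda e^{itm_1(x)}\phi_1(t) + (1-\lambda)e^{itm_2(x)}\phi_2(t),
\]
so that $\rho(x,t)$ is the quotient of two such expressions at $x$ and $x_0$. Writing $\alpha(x) = m_1(x)-m_1(x_0)$ and $\omega(x) = \Delta(x)-\Delta(x_0)$ with $\Delta(x) = m_2(x)-m_1(x)$, and dividing numerator and denominator by $\lambda e^{itm_1(\cdot)}\phi_1(t)$, I obtain the convenient form $\rho(x,t) = e^{it\alpha(x)}Q(x,t)$ with $Q(x,t) = \bigl(1+e^{i\omega(x)t}\eta(t)\bigr)/\bigl(1+\eta(t)\bigr)$, where $\eta(t) = \tfrac{1-\lambda}{\lambda}e^{it\Delta(x_0)}\phi_2(t)/\phi_1(t)$; note that $|\eta(t)|$ tends to $0$, resp.\ $\infty$, exactly when $|\phi_2/\phi_1|\to 0$, resp.\ $|\phi_1/\phi_2|\to 0$. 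The non-parallel hypothesis enters precisely here: it guarantees $\omega(x)\neq 0$ for $x\neq x_0$, so the numerator and denominator of $Q$ genuinely differ. If $m_1,m_2$ were parallel then $Q\equiv 1$, Condition \ref{cond:id2} would hold vacuously, and the link to Assumption \ref{b-1} would be severed — which is why non-parallelism is indispensable for the converse.

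For the direction Assumption \ref{b-1} $\Rightarrow$ Condition \ref{cond:id2} I would treat the three cases separately. If $|\phi_2/\phi_1|\to 0$ then $\eta(t)\to 0$, so $Q(x,t)\to 1$ and $\rho(x,t)=e^{it\alpha(x)}(1+o(1))$; hence $|\rho(x,t)|\to 1$ and, for each fixed $a$, $\rho(x,t+a)/\rho(x,t)\to e^{ia\alpha(x)}$, giving $\tfrac{-i}{a}\mathrm{Log}(\cdot)\to \alpha(x)=m_1(x)-m_1(x_0)$. The case $|\phi_1/\phi_2|\to 0$ is symmetric after factoring out $\phi_2$ instead, with constant $m_2(x)-m_2(x_0)$, and $\lambda=1$ gives $\rho(x,t)=e^{it\alpha(x)}$ identically. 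The only care needed is to shrink $\epsilon$ so that $|a\,(m_j(x)-m_j(x_0))|<\pi$ for all $a\in(0,\epsilon]$ and $x\in N^1(x_0,\epsilon)$, keeping the principal branch of $\mathrm{Log}$ from wrapping; this is where boundedness of $m_j$ near $x_0$ is used.

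For the converse I would argue by contraposition: assuming Assumption \ref{b-1} fails, so $0<\lambda<1$ while $|\eta(t)|$ tends neither to $0$ nor to $\infty$, I would exhibit a point $x_*$ (with $\omega_*:=\omega(x_*)\neq 0$, available by non-parallelism) at which Condition \ref{cond:id2} breaks. First I would record the clean consequence of Condition \ref{cond:id2}: combining $|\rho|\to 1$ with the second limit gives $Q(x_*,t+a)/Q(x_*,t)\to e^{ia\nu}$ for every $a\in(0,\epsilon]$ and some constant $\nu$, and $|Q(x_*,t)|\to 1$. Using $|Q(x_*,t)|^2-1 = 2\,\mathrm{Re}\bigl((e^{i\omega_* t}-1)\eta(t)\bigr)/|1+\eta(t)|^2$, I would pass to a subsequence $t_n$ along which $|\eta(t_n)|$ stays in a compact subset of $(0,\infty)$ and $\eta(t_n)\to\eta_*\neq 0$; then $|Q|\to 1$ forces $\mathrm{Re}\bigl((e^{i\omega_* t_n}-1)\eta_*\bigr)\to 0$, i.e.\ $e^{i\omega_* t_n}$ is driven to one of two discrete phases. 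Perturbing $t_n$ by a small amount then moves $e^{i\omega_* t}$ off those phases while (by uniform continuity of the characteristic functions, valid where $|\phi_1|$ is bounded below) $\eta$ barely moves, making $|Q|$ bounded away from $1$ — the desired contradiction.

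The hard part will be making the subsequence extraction and perturbation robust to the two pathologies that the negation of Assumption \ref{b-1} allows: (a) $|\eta(t)|$ may oscillate between values near $0$ and near $\infty$, so there need not be a single subsequence on which it converges to a finite nonzero limit; and (b) near zeros of $\phi_1$ the ratio $\eta$ loses uniform continuity, so the perturbation step is not automatic. To handle (a) I would split on $\liminf_t|\eta(t)|$ and $\limsup_t|\eta(t)|$: if both are finite and positive the compactness argument above applies directly, while if they straddle (one small, one large) the continuity of $\eta$ forces $|\eta(t)|=1$ infinitely often, and I would run the perturbation argument at those crossing points, where $|1+\eta|$ and the phase both vary genuinely. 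For (b) I would localize to ranges of $t$ where $|\phi_1|$ is bounded below — exactly the regime where $|\eta|$ is not large, which is the relevant one — restoring the equicontinuity needed for the perturbation; the principal-branch bookkeeping in the second limit is then a secondary matter controlled by the smallness of $\epsilon$. I expect this oscillation control in (a)–(b), rather than any single computation, to be the crux of the argument.
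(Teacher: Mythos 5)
Your setup and the forward direction coincide with the paper's: the factorization $\rho(x,t)=e^{it\alpha(x)}Q(x,t)$ is exactly the paper's decomposition \eqref{eq:1st_decomp} (with \eqref{eq:2nd_decomp} as its mirror image), and the implication from Assumption \ref{b-1} to Condition \ref{cond:id2} is read off from these in both treatments. The converse is where there is a genuine gap. Your contradiction hinges on perturbing $t_n\mapsto t_n+s$ at a fixed $x_*$ and asserting that $\eta$ ``barely moves.'' But $\eta(t)\propto e^{it\Delta(x_0)}\phi_2(t)/\phi_1(t)$ is a \emph{ratio} of characteristic functions: $\phi_1,\phi_2$ are uniformly continuous, their ratio is not near zeros of $\phi_1$. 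Your subsequence extraction controls only $|\phi_2(t_n)/\phi_1(t_n)|$; it does not bound $|\phi_1(t_n)|$ away from zero, since $\phi_1(t_n)$ and $\phi_2(t_n)$ may both vanish at comparable rates, in which case $|\eta|$ stays moderate while $\arg\eta(t)$ rotates arbitrarily fast in a shrinking window around $t_n$. Your proposed repair --- localize to where $|\phi_1|$ is bounded below, ``exactly the regime where $|\eta|$ is not large'' --- rests on a false implication: $|\eta|$ bounded gives $|\phi_2|\lesssim|\phi_1|$, not a lower bound on $|\phi_1|$. So your pathology (b) is not resolved, and it is not secondary: it defeats the perturbation step in both of your subcases, including the $|\eta|=1$ crossing points.

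The fix is to exploit the quantifier you are not using: Condition \ref{cond:id2} holds for \emph{every} $x\in N^1(x_0,\epsilon)$, and this is how the paper closes the argument. Along the fixed subsequence $t_k$ with $\eta(t_k)\to\eta_*\neq 0$, the requirement $|Q(x,t_k)|\to 1$ for all $x$ forces $\mathrm{Re}\bigl((e^{i\omega(x)t_k}-1)\eta_*\bigr)\to 0$, i.e.\ $e^{i\omega(x)t_k}$ must approach a fixed two-point subset of the unit circle for every $x$ simultaneously. Since $\eta(t_k)$ does not depend on $x$ and $x\mapsto e^{i\omega(x)t_k}$ is continuous for each fixed $k$, no continuity of $\eta$ in $t$ is needed; and because $\omega(\cdot)$ is continuous and nonzero off $x_0$ (non-parallelism), it sweeps an interval of values, so the arcs $\{e^{i\omega(x)t_k}\}$ eventually wrap the whole circle and cannot collapse onto a two-point set --- this is the paper's phase contradiction. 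The paper handles the remaining case $\limsup|\phi_1/\phi_2|=\limsup|\phi_2/\phi_1|=\infty$ separately, using the second limit in Condition \ref{cond:id2} along two subsequences to force the constant to equal both $m_1(x)-m_1(x_0)$ and $m_2(x)-m_2(x_0)$, impossible under non-parallelism. If you replace your $t$-perturbation with this variation in $x$, the rest of your argument (the identity for $|Q|^2-1$, the subsequence extraction) goes through.
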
	
\begin{proof} Define $\delta(x) := m_2(x) - m_1(x_0)$.  Note 
\begin{eqnarray}
\rho(x,t) &=& e^{i t [m_1(x) - m_1(x_0)]}\frac{1 + \frac{1 - \lambda}{\lambda}  e^{i t \delta(x)}  \frac{\phi_2(t)}{\phi_1(t)} }{1 + \frac{1 - \lambda}{\lambda}  e^{i t \delta(x_0)}  \frac{\phi_2(t)}{\phi_1(t)} } \label{eq:1st_decomp}
\\
&=& e^{i t [m_2(x) - m_2(x_0)]}\frac{\frac{1 - \lambda}{\lambda}  e^{-i t \delta(x)}  \frac{\phi_1(t)}{\phi_2(t)} + 1}{\frac{1 - \lambda}{\lambda}  e^{-i t \delta(x_0)}  \frac{\phi_1(t)}{\phi_2(t)} + 1} \label{eq:2nd_decomp}
\end{eqnarray}
The treatment of the case with $\lambda = 1$ is trivial, thus we maintain that $\lambda \in (0,1)$ in the rest of the proof.  
It is enough to prove the necessity, since the sufficiency follows from \eqref{eq:1st_decomp} and \eqref{eq:2nd_decomp}, with the constant in the second condition  being either $m_1(x) - m_1(x_0)$ or $m_2(x) - m_2(x_0)$.   So suppose the necessity fails, i.e. Condition \ref{cond:id2} holds but also  
\begin{equation}\label{eq:rat}
\limsup_{t \to \infty} \left| \frac {\phi_1(t)} {\phi_2(t)} \right | = C, C \in (0,\infty]
\end{equation}
and
\begin{equation}
\limsup_{t \to \infty} \left| \frac {\phi_2(t)} {\phi_1(t)} \right | = C', C' \in (0,\infty].
\end{equation}
hold.   Then if either $C$ or $C'$ is finite (so suppose $C$ is) then there exists a sequence $\{t_k\}_{k=1}^\infty$ such that $\lim_{k \rightarrow \infty} t_k= \infty$ and $\lim_{k \rightarrow \infty} \left |\frac {\phi_1(t_k)} {\phi_2(t_k)} \right | = C$.   But then with the first part of Condition \ref{cond:id2} and  \eqref{eq:2nd_decomp} we have to have 
$$
\lim_{k \rightarrow \infty} \left |\frac{\frac{1 - \lambda}{\lambda}  e^{-i t_k \delta(x)}  \frac{\phi_1(t_k)}{\phi_2(t_k)} + 1}{\frac{1 - \lambda}{\lambda}  e^{-i t_k \delta(x_0)}  \frac{\phi_1(t_k)}{\phi_2(t_k)} + 1} \right|  = 1, x \in  N^1(x_0,\epsilon).
$$   
which holds only if 
$$
\lim_{k \rightarrow \infty} \left[ \mathrm{Arg}\left( \left(  \frac{\phi_1(t_k)}{\phi_2(t_k)} \right ) ^2\right) - 
 \left(t_k [\delta(x) - \delta(x_0)]  + 2 \pi \left \lfloor{\frac{1}{2} - \frac{t_k [\delta(x) - \delta(x_0)] }{2 \pi}}\right \rfloor \right)      \right  ] = 0 
$$
at every $x \in N^1(x_0,\epsilon)$.  Under the non-parallel hypothesis this is impossible.   Finally, if both $C$ and $C'$ are infinite, then  there exits two sequences  $\{t_k\}_{k=1}^\infty$ and  $\{s_k\}_{k=1}^\infty$  such that $\lim_{k \rightarrow \infty} t_k= \infty$, $\lim_{k \rightarrow \infty} s_k= \infty$, $\lim_{k \rightarrow \infty} \left |\frac {\phi_1(t_k)} {\phi_2(t_k)} \right | = \infty$ and  $\lim_{k \rightarrow \infty} \left |\frac {\phi_2(s_k)} {\phi_1(s_k)} \right | = \infty$.  With \eqref{eq:1st_decomp} and \eqref{eq:2nd_decomp}, these imply that  for sufficiently small $a$
$$
\lim_{k \rightarrow \infty}     \frac {-i} a  \mathrm{Log}\left( \frac{\rho(x,t_k+a)}{\rho(x,t_k)} \right) = m_1(x) - m_1(x_0)
$$
and
$$
\lim_{k \rightarrow \infty}     \frac {-i} a  \mathrm{Log}\left( \frac{\rho(x,s_k+a)}{\rho(x,s_k)} \right) = m_1(x) - m_1(x_0)
$$
hold simultaneously, which contradicts the second part of Condition \ref{cond:id2}.
\end{proof}

Finally, assume  
\begin{ass}\label{ass-var}
$\sigma_1^2 := \int \epsilon^2 d F_1(\epsilon)$ and $\sigma_2^2 := \int
\epsilon^2 d F_2(\epsilon)$ are finite. 
\end{ass}
Note that the next lemma holds if the set of the regressors values
includes at least three points.  It therefore allows, for example,
two regressors cases 
where one regressor is binary and the other is continuous. 
\begin{lem}\label{lem:altid}
Under Assumption \ref{b-1} (or Condition \ref{cond:id2}),  as well as Assumptions  \ref{ass-c}  and \ref{ass-var}, $F(\cdot|x)$
at $x = x_a, x_b$ and $x_c$ uniquely determine 
$(\lambda, m_1(x_a), m_1(x_b), m_1(x_c), m_2(x_a), m_2(x_b), m_2(x_c),
F_1(\cdot), F_2(\cdot))$ in the set $\R^7 \times \bar{\mathcal F}(\R)^2$
up to labeling.
\end{lem}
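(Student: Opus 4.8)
The plan is to adapt the three-step logic of Lemma~\ref{lem:id} to the characteristic-function setting, replacing the two moment-generating-function tails (which there delivered both slopes) by a single characteristic-function tail supplemented by second moments; the three points $x_a,x_b,x_c$ and the three non-parallel conditions in Assumption~\ref{ass-c}(\ref{c-nonpara}) do the work that local variation did before. Throughout I set $x_a$ as the reference point, so that $\rho(x,t)=\phi(t|x)/\phi(t|x_a)$, and recall from (\ref{eq:2nd_decomp}) that when $|\phi_1(t)/\phi_2(t)|\to 0$ one has $\phi(t|x)\sim(1-\lambda)e^{itm_2(x)}\phi_2(t)$ as $t\to\infty$. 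Since $|\phi_i(t)|=|\phi_i(-t)|$, both tails single out the same dominant component, so---unlike in Lemma~\ref{lem:id}---only one family of slopes is available from the tail; which component dominates is exactly the source of the ``up to labeling'' indeterminacy, the case $\lambda=1$ is trivial, and I therefore assume $\lambda\in(0,1)$ and, without loss of generality, that $\phi_2$ dominates.

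First I would identify the slopes of the dominant component. Under Assumption~\ref{b-1} (equivalently Condition~\ref{cond:id2}, by Lemma~\ref{lem:id2cd}), the differencing device $\frac{-i}{a}\mathrm{Log}\left(\rho(x,t+a)/\rho(x,t)\right)$ removes the unknown phase of $\phi_2(t)$ and converges, as $t\to\infty$, to $m_2(x)-m_2(x_a)$ for $x\in\{x_b,x_c\}$. This pins down $\beta:=m_2(x_b)-m_2(x_a)$ and $\gamma:=m_2(x_c)-m_2(x_a)$ from the observables.

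The main work is the second step: recovering $\lambda$ and the levels $m_1(x_a),m_2(x_a)$ (hence, through $\beta,\gamma$ and the first moments, all six regression values). Assumption~\ref{ass-var} guarantees that $\phi(\cdot|x)$ is twice differentiable at the origin, so the conditional moments $\mu(x):=\E[z|x]$ and $\nu(x):=\E[z^2|x]$ are observable at the three points, with $\mu(x)=\lambda m_1(x)+(1-\lambda)m_2(x)$ and $\nu(x)=\lambda m_1(x)^2+(1-\lambda)m_2(x)^2+S$, where $S:=\lambda\sigma_1^2+(1-\lambda)\sigma_2^2$ is a single $x$-free nuisance. The key observation is that $S$ cancels in the differences $\nu(x_b)-\nu(x_a)$ and $\nu(x_c)-\nu(x_a)$, so these two equations together with the three first-moment equations form a square system; eliminating $m_1(x_b),m_1(x_c)$ through the first moments leaves three equations in the three unknowns $\bigl(\lambda,m_1(x_a),m_2(x_a)\bigr)$. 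For fixed $\lambda$, two of them are linear in $\bigl(m_1(x_a),m_2(x_a)\bigr)$ with determinant proportional to $\lambda(1-\lambda)\bigl\{[m_2(x_b)-m_2(x_a)]-[m_1(x_b)-m_1(x_a)]\bigr\}$, which is nonzero precisely by the non-parallel condition $m_1(x_a)-m_1(x_b)\neq m_2(x_a)-m_2(x_b)$; substituting the resulting levels into the remaining equation yields a scalar equation in $\lambda$. I expect this last equation---showing it has a unique admissible root in $(0,1)$---to be the crux of the argument, and I would rule out spurious roots using the two remaining non-parallel conditions (those involving the pairs $(x_a,x_c)$ and $(x_b,x_c)$), which prevent the two components from becoming indistinguishable across the triple.

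Finally, with $\lambda$ and all $m_i(x)$ in hand, I would recover the component distributions exactly as in Lemma~\ref{lem:id}. Evaluating the characteristic-function analogue of (\ref{mgfeq}) at $x_a$ and $x_b$ gives a $2\times 2$ linear system for $\bigl(\phi_1(t),\phi_2(t)\bigr)$ whose coefficient matrix has determinant $e^{it[m_1(x_a)+m_2(x_b)]}-e^{it[m_1(x_b)+m_2(x_a)]}$, nonzero for $t\neq 0$ again by the non-parallel condition on $(x_a,x_b)$, while $\phi_i(0)=1$. Solving for each $t$ determines $\phi_1$ and $\phi_2$ on all of $\R$, and by uniqueness of the characteristic-function/distribution correspondence this identifies $F_1$ and $F_2$ in $\bar{\mathcal F}(\R)^2$, completing the proof up to the labeling fixed in the first step.
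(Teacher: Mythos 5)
Your overall architecture is genuinely different from the paper's, and the second half of it is salvageable, but the first step has a real gap. You treat identification constructively: compute $\beta=m_2(x_b)-m_2(x_a)$ and $\gamma=m_2(x_c)-m_2(x_a)$ as tail limits of the observable $\frac{-i}{a}\mathrm{Log}(\rho(x,t+a)/\rho(x,t))$, then feed them into a moment system. This shows that the \emph{true} parameter is a functional of $F(\cdot|x)$, but the lemma (see Remark \ref{rem1:id}) asserts uniqueness over the unrestricted set $\R^7\times\bar{\mathcal F}(\R)^2$: a competing candidate $(\lambda^*,m_1^*,m_2^*,F_1^*,F_2^*)$ generating the same three conditional distributions is \emph{not} required to satisfy Assumption \ref{b-1}, so neither of $\phi_1^*/\phi_2^*$, $\phi_2^*/\phi_1^*$ need vanish at infinity, and its two components can jointly reproduce the observed tail of $\phi(t|x)$ without either of them having increments $\beta,\gamma$. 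Your tail step therefore does not constrain the candidate, and your moment system — which for a fully general candidate has eight unknowns ($\lambda^*$, six regression values, $S^*$) against six moment equations — cannot close the argument on its own. This is exactly the hole the paper's Step 1 is built to fill: it writes the three equations \eqref{n1}–\eqref{n3} for an arbitrary candidate, eliminates $\phi_1^*$ to obtain the identity $L_1(t)=e^{it(u_2-u_1)}\frac{1-\lambda}{\lambda}\frac{\phi_2(t)}{\phi_1(t)}L_2(t)$, uses almost periodicity of $L_1$ together with Assumption \ref{b-1} (imposed only on the truth) to force $L_1\equiv L_2\equiv 0$, and then runs a case analysis on the exponents $u_{jk}$ to conclude that the candidate's slopes must coincide with the true ones up to labeling. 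You have no substitute for this machinery. Relatedly, the degenerate case $\lambda=1$ is not trivial in this unrestricted setting — the paper needs the conditional-variance identity and the dichotomy \eqref{varid1}/\eqref{varid2} to handle a nondegenerate candidate mimicking a degenerate truth.

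On the positive side, your second step is a correct and arguably cleaner alternative to the paper's, \emph{once} the candidate's $m_2$-increments are known to be $\beta,\gamma$. Writing $d(x)=m_1(x)-m_2(x)$, $V(x)=\E[z^2|x]-\E[z|x]^2=S+\lambda(1-\lambda)d(x)^2$ and $C_b=\E[z|x_b]-\E[z|x_a]-\beta=\lambda[d(x_b)-d(x_a)]$, $C_c$ analogously, one gets
\begin{equation*}
V(x_c)-V(x_a)-\tfrac{C_c}{C_b}\bigl[V(x_b)-V(x_a)\bigr]=\tfrac{1-\lambda}{\lambda}\,C_c\,(C_c-C_b),
\end{equation*}
and the three non-parallel conditions in Assumption \ref{ass-c}\eqref{c-nonpara} are precisely what make $C_b$, $C_c$ and $C_c-C_b$ nonzero, so $\frac{1-\lambda}{\lambda}$ — hence $\lambda$, then $d(x_a)$, then all levels — is uniquely determined. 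So the step you flagged as the expected crux does go through; the actual crux is the one you skipped.
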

\begin{proof}[\textupandbold{Proof of Lemma~\ref{lem:altid}}]
The proof proceeds in two steps.  Step 1 considers the slopes of $m_1$
and $m_2$.  Using the results in Step 1, Step 2 establishes the
identification of all the parameters.   

\
\

\noindent (Step 1) 

By (\ref{mixmodel})
\begin{equation}
\phi(t|x) = \lambda e^{itm_1(x)}\phi_1(t) + (1 - \lambda)e^{itm_2(x)}\phi_2(t). 
\end{equation}
Suppose there exists an alternative set of parameters 
$$(\lambda^*, m_1^*(x_a), m_1^*(x_b), m_1^*(x_c), m_2^*(x_a), m_2^*(x_b), m_2^*(x_c),
F_1^*(\cdot), F_2^*(\cdot))$$ 
in  $\R^7 \times \bar{\mathcal F}(\R)^2$ such that 
\begin{equation}\label{basic}
F(z|x) = \lambda^* F_1^*(z - m_1^*(x)) + (1 -
\lambda^*)F_2^*(z - m_2^*(x)), \qquad x = x_a, x_b, x_c.  
\end{equation}
Let $\phi_1^*$ and $\phi_2^*$ denote the characteristic functions of
$F_1^*$ and $F_2^*$.  Then 
\begin{align}
\label{n1}
 &\lambda e^{itm_1(x_a)}\phi_1(t) + (1 - \lambda)e^{itm_2(x_a)}\phi_2(t) =
 \lambda^* e^{itm_1^*(x_a)}\phi_1^*(t) + (1 - \lambda^*)e^{itm_2^*(x_a)}\phi_2^*(t),  
\\
\label{n2}
 &\lambda e^{itm_1(x_b)}\phi_1(t) + (1 - \lambda)e^{itm_2(x_b)}\phi_2(t) =
 \lambda^* e^{itm_1^*(x_b)}\phi_1^*(t) + (1 - \lambda^*)e^{itm_2^*(x_b)}\phi_2^*(t),  
\\
\label{n3}
 &\lambda e^{itm_1(x_c)}\phi_1(t) + (1 - \lambda)e^{itm_2(x_c)}\phi_2(t) =
 \lambda^* e^{itm_1^*(x_c)}\phi_1^*(t) + (1 - \lambda^*)e^{itm_2^*(x_c)}\phi_2^*(t).  
\end{align}
Let $\alpha$ and $\beta$ be arbitrary two indices from the index set
$\{a,b,c\}$.  For a function $f:\R^k \rightarrow \R$, 
let $\Delta_{\alpha \beta}f$ denote the differences of the
values of $f$
at $x_\alpha$
and $x_\beta$, that is,
$\Delta_{\alpha\beta}f = f(x_\alpha) - f(x_\beta)$. Define the following function of
$t$ that also depends on functions $f_1:\R^k \rightarrow \R$, $f_2:\R^k \rightarrow
\R$ and indices $\alpha$ and $\beta$:  
\begin{align*}
H(t;f_1,f_2,\alpha,\beta) 
& =  e^{itf_2(x_\alpha)}\left(1 -
  e^{it(\Delta_{\alpha\beta} (f_1 - f_2)}\right)
\\
& = e^{itf_2(x_\alpha)} \left(1 - e^{it\{[f_1(x_\alpha) -
    f_1(x_\beta)]-[f_2(x_\alpha) - f_2(x_\beta)]\}}\right). 
\end{align*}
Now, multiply (\ref{n2}) by $e^{it\Delta_{ab}m_2^*}$ then subtract
both sides from (\ref{n1}) to obtain
\begin{equation}
\label{n4'}
\lambda H(t;m_2^*,m_1,a,b)\phi_1(t) + (1 -
\lambda)H(t;m_2^*,m_2,a,b)\phi_2(t) 
= \lambda^*H(t;m_2^*,m_1^*,a,b) \phi_1^*(t). 
\end{equation}
Repeat this with (\ref{n2}) and  $e^{it\Delta_{ab}m_2^*}$ replaced by 
(\ref{n3}) and $e^{it\Delta_{ac}m_2^*}$: 
\begin{equation}
\label{n5'}
\lambda H(t;m_2^*,m_1,a,c)\phi_1(t) + (1 - \lambda)H(t;m_2^*,m_2,a,c)\phi_2(t) 
= \lambda^*H(t;m_2^*,m_1^*,a,c) \phi_1^*(t). 
\end{equation}
(\ref{n4'}) and (\ref{n5'}) imply 
\begin{align}
\label{n4''}
\lambda  H(t;m_2^*,m_1,a,b) H(t;m_2^*,m_1^*,a,c)\phi_1(t) 
+ (1 - \lambda) H(t;m_2^*,m_2,a,b)H(t;m_2^*,m_1^*,a,c)\phi_2(t) 
\\
= \lambda^*H(t;m_2^*,m_1^*,a,b)H(t;m_2^*,m_1^*,a,c) \phi_1^*(t),
\nonumber
\end{align}
and
\begin{align}
\label{n5''}
\lambda H(t;m_2^*,m_1^*,a,b)H(t;m_2^*,m_1,a,c)\phi_1(t) + (1 -
\lambda)H(t;m_2^*,m_1^*,a,b)H(t;m_2^*,m_2,a,c)\phi_2(t)
\\  
= \lambda^*H(t;m_2^*,m_1^*,a,b)H(t;m_2^*,m_1^*,a,c) \phi_1^*(t),  
\nonumber
\end{align}
yielding 
\begin{align*}
\lambda  H(t;m_2^*,m_1,a,b)H(t;m_2^*,m_1^*,a,c) \phi_1(t) 
+ (1 - \lambda) H(t;m_2^*,m_2,a,b)H(t;m_2^*,m_1^*,a,c)\phi_2(t) 
\\
= \lambda  H(t;m_2^*,m_1^*,a,b)H(t;m_2^*,m_1,a,c) \phi_1(t) + (1 -
\lambda)H(t;m_2^*,m_1^*,a,b)H(t;m_2^*,m_2,a,c)\phi_2(t),
\nonumber
\end{align*}
or 
\begin{align}
\label{nend}
&\lambda \left [ H(t;m_2^*,m_1^*,a,b)H(t;m_2^*,m_1,a,c) -  H(t;m_2^*,m_1,a,b)H(t;m_2^*,m_1^*,a,c) 
  \right ] \phi_1(t)
\\
&= 
(1 - \lambda) 
\left [ 
H(t;m_2^*,m_1^*,a,b)H(t;m_2^*,m_2,a,c) - 
 H(t;m_2^*,m_2,a,b)H(t;m_2^*,m_1^*,a,c)
\right ] \phi_2(t).
\nonumber
\end{align}
Divide both sides of (\ref{nend}) by $e^{m_1^*(x_a)}$ and rewriting:
\begin{align*}
&\lambda e^{itu_1}\left [(1 - e^{it u_{11}})(1 - e^{it u_{12}}) -
  (1 - e^{itu_{13}})(1 - e^{itu_{14}}) \right ] \phi_1(t)
\\
&=  (1 - \lambda) e^{itu_2}\left [(1 - e^{it u_{21}})(1 - e^{it u_{22}}) -
  (1 - e^{itu_{23}})(1 - e^{itu_{24}}) \right ] \phi_2(t) \qquad
  \text{for all } t 
\end{align*}
where $u_1 = m_1(x_a)$, $u_2 = m_2(x_a)$, 
$u_{11} = \Delta_{ab}(m_2^* - m_1^*)$, 
$u_{12} = \Delta_{ac}(m_2^* - m_1)$,  
$u_{13} = \Delta_{ab}(m_2^* - m_1)$,   
$u_{14} = \Delta_{ac}(m_2^* - m_1^*)$, 
$u_{21} = \Delta_{ab}(m_2^* - m_1^*) = u_{11}$ , 
$u_{22} = \Delta_{ac}(m_2^* - m_2)$,  
$u_{23} = \Delta_{ab}(m_2^* - m_2)$,   
$u_{24} = \Delta_{ac}(m_2^* - m_1^*) = u_{14}$. 

First, consider the non-degenerate case, i.e. $\lambda \neq 1$. Define 
$$
L_1(t) = (1 - e^{it u_{11}})(1 - e^{it u_{12}}) -
  (1 - e^{itu_{13}})(1 - e^{itu_{14}})
$$
and 
$$
L_2(t) = (1
- e^{it u_{21}})(1 - e^{it u_{22}}) - 
  (1 - e^{itu_{23}})(1 - e^{itu_{24}}),  
$$
then 
\begin{equation}\label{L1L2}
L_1(t) =  e^{it(u_2 - u_1)}\frac {1 - \lambda} \lambda \frac {\phi_2(t)} {\phi_1(t)}
L_2(t)  \qquad  \text{for all } t.   
\end{equation}
We now use the condition  
\begin{equation}\label{ratio}
\lim_{t \rightarrow \infty} \frac {\phi_2(t)}{\phi_1(t)} = 0
\end{equation}
from Assumption \ref{b-1}
(the treatment of the case with $\lim_{t \rightarrow \infty} \frac
{\phi_1(t)}{\phi_2(t)} = 0$ is essentially identical).  
The following argument shows that 
\begin{equation}\label{L1iszero}
L_1(t) = 0  \qquad  \text{for all } t. 
\end{equation}
Suppose (\ref{L1iszero}) is false, i.e. suppose the set $A = \{t:
 L_1(t) \neq
 0, \quad t \in \R\}$ is non-empty.   Pick an arbitrary point $t_0$
 from $A$.  Then there exists an $\epsilon > 0$ such that $|L_1(t_0)| \geq \epsilon >
 0$.  But since $\lim_{t
 \rightarrow \infty}  e^{it(u_2 - u_1)}\frac {1 - \lambda} \lambda \frac {\phi_2(t)}
 {\phi_1(t)} L_2(t) = 0$ under (\ref{ratio}), together with
 (\ref{L1L2}), there exists $t_1(\epsilon) \in \R$ 
 such that 
\begin{equation}\label{L2bound}
|L_1(t)| < \frac \epsilon 2 \qquad \text{for all } t > t_1(\epsilon).
\end{equation}
 Because of the definition of $t_0$, it must be the case that $t_0 \leq t_1(\epsilon)$.
 Now, since $L_1(\cdot)$ is a sum of periodic functions, it is almost
 periodic (see, e.g.  \citeasnoun{dunford1958linear})).  Therefore there
 exists a positive number $l(\epsilon)$ such that for all $\tau \in
 \R$ one can find a $\xi(\tau,\epsilon,l(\epsilon)) \in [\tau,\tau+ l(\epsilon)]$ such that 
\begin{equation}\label{AP}
|L_1(t) - L_1(t +  \xi(\tau,\epsilon,l(\epsilon)))| < \frac \epsilon 2
 \qquad  \text{for all } t \in \R.       
\end{equation} 
In particular, evaluating (\ref{AP}) at $t = t_0$ and $\tau = -t_0 + t_1(\epsilon)$;
\begin{equation}\label{AP'}
|L_1(t_0) - L_1(t_0 +  \xi^*)|
 < \frac \epsilon 2
\end{equation} 
where $\xi^* =  \xi(-t_0+t_1(\epsilon),\epsilon,l(\epsilon)))$.
But $\xi^* \in [-t_0 + t_1(\epsilon), -t_0 +
t_1(\epsilon) + l(\epsilon)]$, therefore $t_0 +
\xi^* \leq t_0 - t_0 + t_1(\epsilon) =
t_1(\epsilon)$.  By (\ref{L2bound}), 
\begin{equation}\label{L1bound}
|L_1(t_0 +  \xi^*)| < \frac \epsilon 2
\end{equation}
Using the triangle inequality, (\ref{AP'}) and (\ref{L1bound}),
conclude that 
\begin{align*}
|L_1(t_0)| & \leq |L_1(t_0) - L_1(t_0 + \xi^*)| + |L_1(t_0 + \xi^*)|
\\
& < \epsilon.
\end{align*}
But the $\epsilon$ was originally defined so that $|L_1(t_0)| \geq
\epsilon$, contradicting the last inequality.  Since the choice of
$t_0 \in A$ was arbitrary, (\ref{L1iszero}) is now proved.   

Next, as $\lambda \neq 0$,
(\ref{L1L2}) and (\ref{L1iszero}) imply that 
$$
\phi_2(t) L_2(t) = 0 \qquad \text{for all } t.
$$ 
But by the basic properties a characteristic function, $\phi_2(\cdot)$
is continuous and $\phi_2(1) = 0$.  Therefore for a $d > 0$,
$\phi_2(t) \neq 0$ for all $t \in [-d,d]$.  It follows that $L_2(t) = 0$ for
all $t \in [-d,d]$.  Moreover, $L_2(t)$ is analytic on the entire
complex plane, and $[-d,d]$ obviously has an accumulation point,
therefore by the identity theorem of analytic functions, $L_2(t) = 0$ for all
$t \in \R$.  
In summary, $L_1(t) = L_2(t) = 0$ for all $t \in \R$, or:    
\begin{equation}\label{e1}
(1 - e^{it u_{11}})(1 - e^{it u_{12}}) -
  (1 - e^{itu_{13}})(1 - e^{itu_{14}}) = 0
\end{equation}
and 
\begin{equation}\label{e2}
(1 - e^{it u_{21}})(1 - e^{it u_{22}}) -
  (1 - e^{itu_{23}})(1 - e^{itu_{24}}) = 0
\end{equation}
for all $t$.  These conditions in turn identify the slopes of $m_1$
and $m_2$, as shown by the subsequent argument.  

Consider the following set of conditions  
\begin{align}
\label{c1}
\Delta_{ab}(m_2^* - m_1^*) = \Delta_{ab}(m_2^* - m_1) \text{ and } 
\Delta_{ac}(m_2^* - m_1^*) = \Delta_{ac}(m_2^* - m_1) \tag{C1}, 
\\
\label{c2}
\Delta_{ab}(m_2^* - m_1^*) = \Delta_{ac}(m_2^* - m_1^*) \text{ and } 
\Delta_{ab}(m_2^* - m_1) = \Delta_{ac}(m_2^* - m_1) \tag{C2},
\\
\label{c3}
\Delta_{ab}(m_2^* - m_1^*) = \Delta_{ab}(m_2^* - m_2) \text{ and } 
\Delta_{ac}(m_2^* - m_1^*) = \Delta_{ac}(m_2^* - m_2) \tag{C3}, 
\\
\label{c4}
\Delta_{ab}(m_2^* - m_1^*) = \Delta_{ac}(m_2^* - m_1^*) \text{ and } 
\Delta_{ab}(m_2^* - m_2) = \Delta_{ac}(m_2^* - m_2) \tag{C4}.  
\end{align}
Then by (\ref{e1}) and (\ref{e2}), if $u_{jk} \neq 0$ for all $j = 1,2, k =
1,2,3,4$, one of the following four cases has to be true:

\noindent (D1):  (\ref{c1}) and (\ref{c3}) hold;

\noindent (D2):  (\ref{c1}) and (\ref{c4}) hold; 

\noindent (D3):  (\ref{c2}) and (\ref{c3}) hold;   

\noindent (D4):  (\ref{c2}) and (\ref{c4}) hold. 

\noindent First, consider (D1).  (\ref{c1}) and (\ref{c3}) imply $\Delta_{ab}m_1^*
= \Delta_{ab}m_1$ and  $\Delta_{ab}m_1^*
= \Delta_{ab}m_2$, respectively, thereby yielding  $\Delta_{ab}m_1
= \Delta_{ab}m_2$, which violates Assumption
\ref{ass-c}(\ref{c-nonpara}).  Next, turn to (D2).  From (\ref{c1})
get $\Delta_{ab} m_1 = \Delta_{ab} m_1^*$ and $\Delta_{ac}m_1 =
\Delta_{ac}m_1^*$, therefore $\Delta_{bc}m_1 = \Delta_{bc}m_1^*$. But
(\ref{c4}) also implies  $\Delta_{bc} m_2^* = \Delta_{bc} m_1^*$ and $\Delta_{bc}m_2^* =
\Delta_{bc}m_2$, hence  $\Delta_{bc}m_1 = \Delta_{bc}m_2$, violating  Assumption
\ref{ass-c}(\ref{c-nonpara}).  Since (D3) is identical to (D2) except
for the switched roles of $m_1$ and $m_2$, it also violates Assumption
\ref{ass-c}(\ref{c-nonpara}).   Finally, (D4) also leads to a
violation of   Assumption \ref{ass-c}(\ref{c-nonpara}), because
the second equations of (\ref{c2}) and (\ref{c4}) yield  $\Delta_{bc}
m_1 = \Delta_{bc} m_2$.  As (D1)-(D4) are impossible, some of the $u_{jk}$'s should
be non-zero.  To consider the cases with some non-zero $u_{jk}$, it is useful to introduce the following
classification (note that for $i = 1,2$, if $u_{ij} = 0$ for $j = 1$ or 2 (3 or 4), then $u_{ij} =
0$ for $j = 3$ or 4 (1 or 2),  

\noindent Case (i): $u_{11} = 0$ 

\noindent Case (ii): $u_{12} = u_{13} = 0$

\noindent Case (iii): $u_{14} = 0$ 

\noindent Case (iv): $u_{21} = 0$ 

\noindent Case (v): $u_{22} = u_{23} = 0$

\noindent Case (vi): $u_{24} = 0$ 

First consider Case (i).  Then 
$
H(t,m_2^*,m_1^*,a,b) = e^{itm_1^*(x_a)}(1 - e^{it(\Delta(m_2^* -
  m_1^*))})   = 0.    
$
Therefore (\ref{n4'}) becomes 
\begin{equation}
\lambda H(t;m_2^*,m_1,a,b)\phi_1(t) + (1 -
\lambda)H(t;m_2^*,m_2,a,b)\phi_2(t) 
= 0, 
\end{equation} 
or 
\begin{equation}
(1 - e^{it\Delta_{ab}(m_2^* - m_1)})
+ \frac {1 - \lambda} \lambda \frac {\phi_2(t)} {\phi_1(t)}e^{-itm_1(x_a)}H(t;m_2^*,m_2,a,b) 
= 0.
\end{equation} 
Let $t \rightarrow \infty$, then again by (\ref{ratio}), the third 
term goes to zero.  Since the first term is periodic, it must be the
case that $u_{13} = 0$ for all $t \in \R$.  Since $1 - \lambda \neq 0$
in the current analysis of the non-degenerate case, 
$H(t;m_2^*,m_2,a,b)\phi_2(t) 
= 0$ for all $t$, or, 
$$
(1 - e^{itu_{23}})\phi_2(t) = 0 \qquad \text{for all }t. 
$$        
As argued before, this means
$$
1 - e^{itu_{23}} = 0 \qquad \text{for } t \in [-d,d]
$$
for some $d > 0$.  But this is possible iff $u_{23} = 0$.  In sum,
$u_{11} = 0$ automatically implies that $u_{13} = u_{23} =  0$ as well.
But the latter condition means $\Delta_{ab}(m_2^* - m_1) = 0$ and
$\Delta_{ab}(m_2^* - m_2) = 0$, which in turn imply $\Delta_{ab}(m_1 -
m_2) = 0$, thereby violating Assumption \ref{ass-c}(\ref{c-nonpara}).

Next, consider Case (ii).  This case means that 
\begin{align}\label{IDeq1}
\Delta_{ab}m_2^* &= \Delta_{ab}m_1,
\\
\Delta_{ac}m_2^* &= \Delta_{ac}m_1.\nonumber
\end{align}        
On top of this, (\ref{e2}) has to hold at the same time.  First,
suppose all $u_{2k}, k = 1,2,3,4$ in (\ref{e2}) are non-zero.  Then
(\ref{c3}) and/or (\ref{c4}) has to hold.  Suppose (\ref{c3}) holds.
Then
\begin{align}\label{IDeq2}
\Delta_{ab}m_1^* &= \Delta_{ab}m_2,
\\
\Delta_{ac}m_1^* &= \Delta_{ac}m_2.\nonumber
\end{align}    
(\ref{IDeq1}) and (\ref{IDeq2}) imply that slopes of $m_1^*$ and
$m_2^*$ have to coincide with those of $m_2$ and $m_1$, respectively,
proving a part of the identification result.  Next, suppose
(\ref{c4}) holds.  In particular, the second equation of (\ref{c4}),
together with (\ref{IDeq1}) means that $\Delta_{ab}(m_1 - m_2) =
\Delta_{ac}(m_1 - m_2)$, or $\Delta_{bc}m_1 = \Delta_{bc}m_2$,
violating Assumption \ref{ass-c}(\ref{c-nonpara}).  To complete the
analysis of Case (ii), now suppose some of $u_{2k}, k = 1,2,3,4$ in
(\ref{e2}) are zero.  If $u_{21} = 0$, then $u_{11} = 0$, but we have
already shown that the latter condition leads to a violation of
Assumption \ref{ass-c}(\ref{c-nonpara}).  Next, suppose $u_{22} = 0$,
i.e., $\Delta_{ac}m_2^* = \Delta_{ac}m_2$.  But with the
second equation of (\ref{IDeq1}),  $\Delta_{ac}m_1 = \Delta_{ac}m_2$,
again violating Assumption \ref{ass-c}(\ref{c-nonpara}).  If $u_{23} =
0$ or $u_{24} = 0$, it means at least one of $u_{21}$ and $u_{22}$
must be zero, so the above argument covers the cases.  This completes
the analysis of Case (ii); in sum, Case (ii) implies (\ref{IDeq1}) and (\ref{IDeq2}).  

 Case (iii) is identical to Case (i), with
the roles of the indices $b$ and $c$ switched, therefore it violates
Assumption \ref{ass-c}(\ref{c-nonpara}).  Case (iv) is identical to
Case (i).   Note that case (v) is identical to Case (ii) with the
role of the functions $m_1$ and $m_2$ reversed.  But the treatment of
Case (ii) only uses Equations (\ref{e1}) and (\ref{e2}), which are
equivalent to 
(\ref{e2}) and (\ref{e1}), respectively, after switching $m_1$ and $m_2$.  Therefore
the above treatment of Case (ii) applies with $m_1$ and $m_2$
reversed; that is, Case (v) implies that             
\begin{align}\label{IDeq3}
\Delta_{ab}m_1^* &= \Delta_{ab}m_1,
\\
\Delta_{ac}m_1^* &= \Delta_{ac}m_1.\nonumber
\end{align}
and    
\begin{align}\label{IDeq4}
\Delta_{ab}m_2^* &= \Delta_{ab}m_2,
\\
\Delta_{ac}m_2^* &= \Delta_{ac}m_2.\nonumber
\end{align}
Finally, Case (vi) is identical to Case (vi).    

The above arguments prove that if the mixture model is non-degenerate, the only
possible cases are either (A): (\ref{IDeq1}) and (\ref{IDeq2}) hold,
or (B): (\ref{IDeq3}) and (\ref{IDeq4}) hold.   
That is, the slopes of $m_1$ and $m_2$ are identified, up to
labeling.  

Next consider the case where the mixture model is degenerate,
i.e. $\lambda = 1$.  Then (\ref{basic}) is now written as
\begin{equation}\label{basic2}
F_1(z - m_1(x)) = \lambda^* F_1^*(z - m_1^*(x)) + (1 -
\lambda^*)F_2^*(z - m_2^*(x)).  
\end{equation}
Define 
${\sigma_1^*}^2 = \int \epsilon^2 F_1^*(d\epsilon)$ and
${\sigma_2^*}^2 = \int \epsilon^2 F_2^*(d\epsilon)$.           
Taking the conditional variance of both sides given $x$,
\begin{align*}
\sigma_1^2 &= \lambda^*(m_1^*(x)^2 + {\sigma_1^*}^2)  + (1 - \lambda^*)
(m_2^*(x)^2 + {\sigma_2^*}^2) - [\lambda^*m_1^*(x) + (1 - \lambda^*)m_2^*(x)]^2 
\\
&=\lambda^*(1 - \lambda^*)[m_1^*(x) - m_2^*(x)]^2 +  \lambda^*{\sigma_1^*}^2  + (1 - \lambda^*)
{\sigma_2^*}^2 \qquad \text{ at } x = x_a, x_b \text{ and } x_c. 
\end{align*} 
This equation is used to establish identification for the degenerate
case.  In particular, it admits two solutions:
\begin{align}
\label{varid1}
&\lambda^* = 1, \quad {\sigma_1^*}^2 = \sigma_1^2,
\\
\label{varid2}
&[m_1^*(x_a) - m_2^*(x_a)]^2 = [m_1^*(x_b) - m_2^*(x_b)]^2 = [m_1^*(x_c) - m_2^*(x_c)]^2.
\end{align}
(\ref{varid1}) obviously leads to full identification: integrating
both sides of (\ref{basic2}) gives $m_1^*(x) = m_1(x)$, and this
trivially determines $F_1^*(z) = F_1(z)$ for all $z$. (\ref{varid2})
implies that, for at least one pair of points, $(x,x')$, say, out of
the three points $\{x_a,x_b,x_c\}$, the following holds: 
\begin{equation}\label{parID}
m_1^*(x) - m_2^*(x) = m_1^*(x') - m_2^*(x').
\end{equation}
Unlike the case with $\lambda < 1$, this does not fully determine the
slopes of $m_1$ and $m_2$ over  $\{x_a,x_b,x_c\}$; it will be done in
(Step 2).   

\
\

\noindent (Step 2)

\
\

We now argue that $\lambda$ is identified whether the model is
degenerate or not.  Let $m_j^*(x), j = 1,2, x = x_1, x_b, x_c$ be (arbitrary) six
numbers that satisfy (\ref{basic}).  By (Step 1), in the
case  $\lambda \neq 1$,  they have
to satisfy (\ref{IDeq1}) and (\ref{IDeq2}), or, (\ref{IDeq3}) and
(\ref{IDeq4}).  Similarly, in the case  $\lambda = 1$, they have to
satisfy (\ref{parID}) (the case with (\ref{varid1}) is trivial).     
For an arbitrary pair of points $(x,x')$ from
the three support points ${x_a,x_b,x_c}$, define
$$
\lambda(x,x') = \lim_{\delta \downarrow 0} \frac{\int z F^*(dz|x) -
\int z F^*(dz|x') - (1 + \delta)(m_2^*(x) - m_2^*(x'))}{(m_1^*(x) -
m_1^*(x')) - (1 + \delta)(m_2^*(x) - m_2^*(x'))}.  
$$
Then $\lambda$ is uniquely determined from the values $m_j^*(x), j =
1,2, x = x_1, x_b, x_c$ by
\begin{equation}\label{lambdaID}
\max_{(x,x') = (x_a,x_b), (x_a,x_c), (x_b,x_c)}
\lambda(x,x'),
\end{equation}
using an argument as in the proof of Lemma \ref{lem:id}, up to
labeling  It holds whether $\lambda < 1$ or not.  (Note that
the maximization in the line above is unnecessary if $\lambda \neq 1$,
since $\lambda(x,x')$ is identical for all pairs $(x, x')$ in
that case.)  Let $(\bar x,\bar x')$ be a maximizer of
(\ref{lambdaID}), which is possibly not unique.

Now, evaluating (\ref{mID}) at
$(\bar x,\bar x')$ and $(\bar x',\bar x)$, instead of $(x,x_0)$ and
solving for $m_1$ and $m_2$, obtain $m_1(\bar x)$, $m_2(\bar x)$,
$m_1(\bar x')$ and $m_2(\bar x')$ ($m_1(\bar x)$ and $m_1(\bar x')$ in
the degenerate case).      
     
To identify $F_1$ and $F_2$, use      
\begin{equation}\label{cfeq}
\begin{bmatrix}
\phi(t|\bar x) \\ \phi(t|\bar x')
\end{bmatrix}
= G(\bar x,\bar x',t) \Lambda
\begin{bmatrix}
\phi_1(t) \\ \phi_2(t)
\end{bmatrix},
\end{equation}
where
\[
G(x,x',t) = 
\begin{pmatrix}
e^{itm_1(x)} & e^{itm_2(x)}
\\
e^{itm_1(x')} & e^{itm_2(x')} 
\end{pmatrix}, 
\Lambda = 
\begin{pmatrix}
\lambda &  0
\\
0 & (1 - \lambda) 
\end{pmatrix},
\]
instead of (\ref{mgfeq}) in the proof of Lemma \ref{lem:id}.  
Then
\begin{align*}
\text{Det}(G(\bar x,\bar x,t))&  = e^{it[m_1(\bar x) + m_2(\bar x')]}
- e^{it[m_1(\bar x') + m_2(\bar x)]}
\\
& =  e^{it[m_1(\bar x) + m_2(\bar x')]}\left(1 - e^{it\{[m_1(\bar x) -
    m_1(\bar x')] - [m_2(\bar x')
    - m_2(\bar x)] }\right )
\\
& \neq  0 \quad \text{ for all } t \neq \frac {2\pi j}{[m_1(\bar x) -
    m_1(\bar x')] - [m_2(\bar x')
    - m_2(\bar x)]}, \quad j \in \mathbb{Z}.
\end{align*}
under Assumption \ref{ass-c}(\ref{c-nonpara}) if $\lambda < 1$,
therefore $G(\bar x,\bar x,t)$ is invertible (and all of
its elements are identified).  This determines $\phi_1(t)$ and
$\phi_2(t)$ for all  $t \neq  0  \text{ for all } t \neq \frac {2\pi j}{[m_1(\bar x) -
    m_1(\bar x')] - [m_2(\bar x')
    - m_2(\bar x)]}, j \in \mathbb{Z}$ (as before, solve (\ref{cfeq}) directly or
using the Moore-Penrose inverse in the degenerate case to determine
$\phi_1$). 
 Since  $\phi_1(t)$ and
$\phi_2(t)$ are continuous, they are identified on $\R$.  This
identifies $F_1$ and $F_2$.  

The foregoing argument shows that $\lambda$, $F_1(\cdot)$, $F_2(\cdot)$ and $m_1(x)$
and $m_2(x)$ evaluated at two points (i.e. $\bar x$ and $\bar x'$ defined right
after (\ref{lambdaID})) out of the three support points
$\{x_a,x_b,x_c\}$, are identified  Note that  $m_1$
and $m_2$ at the third point (= $\tilde x$, say) is identified by the
relation
$$ 
\phi(t|\tilde x) = \lambda e^{itm_1(\tilde x)}\phi_1(t) + (1 -
\lambda)e^{itm_2(\tilde x)}\phi_2(t) \quad \text{ for all }t.             
$$
Let 
\begin{align*}
g_\tau (t) & = \frac {\phi(t + \tau|\tilde x)}{\lambda \phi_1(t + \tau)}
\\
& =  e^{i(t + \tau)m_1(\tilde
    x)} + \frac{1 - \lambda}{\lambda} e^{i(t+ \tau)m_2(\tilde x)} \frac
  {\phi_2(t + \tau)}{\phi_1(t + \tau)}, 
\end{align*}
Then under (\ref{ratio}) the second term converges to zero as $\tau
    \rightarrow \infty$, and if we write, for all $c$, 
\begin{align*}
h(c) & = \lim_{\tau \rightarrow \infty} \frac{g_\tau (t+c)}{g_\tau (t)} = e^{ic m_1(\tilde x)},
\end{align*}    
then $m_1(\tilde x)$ is uniquely determined by the formula $m_1(\tilde x) = \frac{ -i h'(c)}{h(c)}  $.

If the model is non-degenerate,
    $m_2(\tilde x)$ is identified from $e^{itm_2(\tilde x)} = \frac
    {\phi(t|\tilde x) - \lambda e^{itm_1(\tilde x)}\phi_1(t)}{(1 - \lambda)\phi_2(t)}$.
\end{proof} 

\begin{rem}\label{rem:idoverarea}  
Once identification is achieved at some values of $x$, as implied by Lemmas
\ref{lem:id} and \ref{lem:altid}, the complete knowledge of $M_1$ and $M_2$ is
available.  Since the identity for 
conditional characteristic functions or conditional moment generating functions
as in (\ref{mixedmgf}) holds for all $t$, it can be used to
determine $m_1$ and $m_2$ even at points where they fail
to satisfy the non-parallel condition (i.e. Assumption \ref{ass-a}(\ref{a-nonpara}) or
\ref{ass-c}(\ref{c-nonpara})).  Suppose $F(\cdot|x)$ is known on a set
${\mathcal X} \in \R^k$.   Assume that, for example, Assumptions
\ref{ass-a}(\ref{a-ind}) and \ref{ass-b} hold.  Then $F(\cdot|x),
x\in \mathcal X$ uniquely determines 
$(\lambda,F_1(\cdot),F_2(\cdot),m_1(x),m_2(x))$ for all $x \in
\mathcal X$ up to labeling, unless $\lambda = 1 - \lambda = \frac 1
2$ and $F_1(z) = F_2(z)$ for all $z \in \R$.
\end{rem}

\subsection{Third identification result}\label{subsec:third}

\medskip

We now propose an identification strategy that has an approach similar to the first identification result, though differs from it in some important ways.  It uses one sided limit (e.g. $t$ tending to {\it positive} infinity) of MGFs and also characteristic functions.   Unlike our first result, it for instance addresses the case where $F_1$ and $F_2$ are CDFs of $N(0,\sigma_1^2)$, $N(0,\sigma_2^2)$, $\sigma_1^2 \neq \sigma_2^2$.  Moreover, the identification strategy for the distribution functions avoids Laplace inversion, a problematic step in practice.  For these reasons it is the identification strategy in this section that will be used to construct our estimator in Section \ref{sec:estimation}.

Recall our definition of the function $h(\cdot,\cdot)$ (see Assumption \ref{ass-b}) in the statements of the following assumption.  
\begin{ass}\label{ass:thm3}
	\begin{enumerate}[\textup{(}i\textup{)}]

		\item\label{ass:thm31} The domains of $M_1(t)$ and $M_2(t)$ include  $[0,\infty)$ and  for some $\varepsilon >0$ either $h(\pm\varepsilon,t) = O(1)$ or $1/ h(\pm\varepsilon,t) = O(1)$ or both hold as $t \rightarrow  +\infty$,
		
		or

		\item\label{ass:thm32} The domains of $M_1(t)$ and $M_2(t)$ include $(-\infty,0]$ and  for some $\varepsilon >0$ either $h(\pm\varepsilon,t) = O(1)$ or $1/ h(\pm\varepsilon,t) = O(1)$ or both hold as  $t \rightarrow - \infty$.

	\end{enumerate}
\end{ass}

Note that this assumption does not demand  the MGFs $M_1$ and $M_2$ to be defined on the whole real line, sometimes a restrictive assumption.

\begin{lem}\label{lem:3rdslope}   Suppose Assumptions \ref{ass-a}, \ref{b-1} and \ref{ass:thm3} hold.  Then there exists  $\epsilon \in (0,\delta)$
such that	for every $x \in N^1(x_0,\epsilon)$ and $a \in (0,\epsilon]$

	\begin{enumerate}[\textup{(}i\textup{)}]
		
		\item\label{lem:3rdslope1} $\lim_{t \rightarrow \infty}\frac 1 t \log  R(t,x') = m_1(x') - m_1(x_0) $ or  $	\lim_{t \rightarrow \infty}\frac 1 t \log  R(t,x') = m_2(x') - m_2(x_0)$   if Assumption \ref{ass:thm3}\eqref{ass:thm31} holds, and
		
\noindent  $\lim_{t \rightarrow -\infty}\frac 1 t \log  R(t,x') = m_1(x') - m_1(x_0) $ or  $	\lim_{t \rightarrow -\infty}\frac 1 t \log  R(t,x') = m_2(x') - m_2(x_0)$  if Assumption \ref{ass:thm3}\eqref{ass:thm32} holds instead.

		\item\label{lem:3rdslope2}  	$\lim_{t \rightarrow \infty}     \frac {-i} a  \mathrm{Log}\left( \frac{\rho(x,t+a)}{\rho(x,t)} \right) = m_1(x') - m_1(x_0) $ or  $\lim_{t \rightarrow \infty}     \frac {-i} a  \mathrm{Log}\left( \frac{\rho(x,t+a)}{\rho(x,t)} \right) =  m_2(x') - m_2(x_0)$.

	\end{enumerate}
\end{lem}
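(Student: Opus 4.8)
The plan is to prove the two parts separately, reusing machinery already in place: part \eqref{lem:3rdslope1} is a one-sided analogue of Lemma \ref{lem:slope}, while part \eqref{lem:3rdslope2} rests on the characteristic-function decompositions \eqref{eq:1st_decomp} and \eqref{eq:2nd_decomp} derived in the proof of Lemma \ref{lem:id2cd}. A single constant $\epsilon \in (0,\delta)$ is to be chosen small enough to control simultaneously the closeness of $x$ to $x_0$ (via the continuity Assumption \ref{ass-a}(\ref{a-cont})) and the size of the increment $a$, so that the conclusion holds uniformly over $x \in N^1(x_0,\epsilon)$ and $a \in (0,\epsilon]$.

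For part \eqref{lem:3rdslope1} I would reproduce the computation in Lemma \ref{lem:slope} but carry out only the one-sided limit dictated by Assumption \ref{ass:thm3}. Suppose Assumption \ref{ass:thm3}(\ref{ass:thm31}) holds, so the domains include $[0,\infty)$ and $h(\pm\varepsilon,t) = O(1)$ or $1/h(\pm\varepsilon,t) = O(1)$ as $t \to +\infty$. Using the MGF identity \eqref{mixedmgf} I write $\frac{1}{t}\log R(t,x)$ exactly as in that proof, isolating the cross term $e^{t[m_2(x)-m_1(x)]}M_2(t)/M_1(t)$. By first shrinking $\epsilon$ so that $|m_2(x)-m_1(x)| < |D(x_0)|(1+\varepsilon)$ for all $x \in N^1(x_0,\epsilon)$, the hypothesis $h(\pm\varepsilon,t)=O(1)$ forces this cross term to vanish as $t\to+\infty$, leaving the limit $m_1(x)-m_1(x_0)$, while $1/h(\pm\varepsilon,t)=O(1)$ yields the $m_2$ slope after the symmetric rearrangement. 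The case of Assumption \ref{ass:thm3}(\ref{ass:thm32}) is identical with $t\to+\infty$ replaced by $t\to-\infty$, and the degenerate case $\lambda=1$ is immediate since then $R(t,x)=e^{t[m_1(x)-m_1(x_0)]}$.

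For part \eqref{lem:3rdslope2} I would start from \eqref{eq:1st_decomp} and \eqref{eq:2nd_decomp}. Under Assumption \ref{b-1} one of three situations occurs. If $\phi_2(t)/\phi_1(t)\to 0$, then in \eqref{eq:1st_decomp} the unimodular prefactors multiply the vanishing ratio $\phi_2(t)/\phi_1(t)$, so the correction factor tends to $1$ at both $t$ and $t+a$, giving $\rho(x,t+a)/\rho(x,t)\to e^{ia[m_1(x)-m_1(x_0)]}$ as $t\to\infty$; symmetrically, $\phi_1(t)/\phi_2(t)\to 0$ together with \eqref{eq:2nd_decomp} yields the limit $e^{ia[m_2(x)-m_2(x_0)]}$; and $\lambda=1$ gives $\rho(x,t+a)/\rho(x,t)=e^{ia[m_1(x)-m_1(x_0)]}$ exactly. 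In each case I then take $\mathrm{Log}$ and multiply by $-i/a$ to read off the corresponding slope.

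The main obstacle is the control of the branch of the complex logarithm in part \eqref{lem:3rdslope2}. Although $\rho(x,t+a)/\rho(x,t)$ converges to $e^{ia[m_1(x)-m_1(x_0)]}$ (or its $m_2$ analogue), the principal value $\mathrm{Log}$ recovers $ia[m_1(x)-m_1(x_0)]$ only when this phase lies in $(-\pi,\pi]$ and the limit point stays off the negative real axis; otherwise a $2\pi$ winding would corrupt the value. I would resolve this by shrinking $\epsilon$ once more: by the continuity of $m_1$ and $m_2$ (Assumption \ref{ass-a}(\ref{a-cont})), each $|m_j(x)-m_j(x_0)|$ is small for $x\in N^1(x_0,\epsilon)$, and since $a\le\epsilon$ the product $a|m_j(x)-m_j(x_0)|$ can be kept strictly below $\pi$ uniformly in $x$ and $a$. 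Continuity of $\mathrm{Log}$ at the limit point then interchanges the limit and the logarithm, which completes the proof.
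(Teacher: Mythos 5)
Your proposal is correct and follows essentially the same route as the paper: part (i) is handled by rerunning the one-sided version of the Lemma~\ref{lem:slope} computation, and part (ii) uses the decompositions \eqref{eq:1st_decomp}--\eqref{eq:2nd_decomp} together with Assumption~\ref{b-1} to show $\rho(x,t+a)/\rho(x,t)\to e^{ia\nabla}$ (or the $m_1$ analogue), then recovers the slope from the principal-value $\mathrm{Log}$. Your explicit handling of the branch issue --- keeping $a|m_j(x)-m_j(x_0)|<\pi$ by shrinking $\epsilon$ --- matches the paper's observation that the floor-correction term vanishes precisely when $a\nabla\in(-\pi,\pi)$.
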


\begin{proof}  The proof of Part \eqref{lem:3rdslope1} is essentially in the proof of 
Lemma \eqref{lem:slope}.  For Part \eqref{lem:3rdslope2},  note that the ratios on the right hand side of  \eqref{eq:1st_decomp} and by \eqref{eq:2nd_decomp} converge to $1$ as $s \to \infty$.   Since
$$\rho(s,x)  = e^{i s \nabla} \frac{\frac{\lambda}{1-\lambda} e^{i s (m_1(x_1) - m_2(x_1))} \frac{\phi_1(s)}{\phi_2(s)} + 1}{\frac{\lambda}{1-\lambda} e^{i s (m_1(x_0) - m_2(x_0))} \frac{\phi_1(s)}{\phi_2(s)} + 1},$$
and under Assumption  \ref{b-1}   the ratio on the right hand side converges to $1$ as $s \to \infty$. Therefore we have
\begin{align*}
\lim_{s \rightarrow \infty} \frac{-i}{a}  \mbox{ Log} \left( \frac{\rho(x,s+a)}{\rho(x,s)}    
\right)
& = \frac{-i}{a} \mbox{ Log}  (e^{ia\nabla})\\
& = \frac{1}{a} \left( a \nabla + 2 \pi \left \lfloor{\frac{1}{2} - \frac{a \nabla}{2 \pi}}\right \rfloor \right) ,
\end{align*}
where $\mbox{ Log}$ corresponds to the principal value of the log. This limit is a piecewise continuous function of $a$, constant equal to $\nabla$ only when $a$ is small enough to guarantee $a\nabla \in (-\pi,\pi).$
And if $\lambda=1$, $\frac{\phi(s|x_1)}{\phi(s|x_0)}= e^{i s \Delta}$ so that $\lim_{s \rightarrow \infty} \frac{-i}{a}  \mbox{ Log} \left( \frac{\phi(s+a|x_1)}{\phi(s+a|x_0)} \left( \frac{\phi(s|x_1)}{\phi(s|x_0)} \right)^{-1}  \right)
 = \frac{1}{a} \left( a \Delta + 2 \pi \left \lfloor{\frac{1}{2} - \frac{a \Delta}{2 \pi}}\right \rfloor \right) .$ By assumption, $m_1(x_1) - m_1(x_0) \neq m_2(x_1) - m_2(x_0)$ that is, $\Delta \neq \nabla$ therefore if the former limit is equal to $\Delta$, one knows $\lambda =1$ and there is no $m_2$.
\end{proof}

  The constant $\delta$ in the following condition is specified in Assumption  \ref{ass-a}.
\begin{condition}\label{ass:thm3id}  Either 
	\begin{enumerate}[\textup{(}i\textup{)}]

		\item\label{ass:thm3id1}   there exists $\epsilon \in (0,\delta)$	such that $
		\lim_{t \rightarrow \infty}\frac 1 t \log  R(t,x)  \neq 
		\lim_{s \rightarrow \infty} \frac{-i}{a}  \mbox{\rm  Log} \left( \frac{\rho(x,s+a)}{\rho(x,s)}    
		\right) \text{ for every} x \in N^1(x_0,\epsilon)$ and $a \in (0,\epsilon]$   if Assumption \ref{ass:thm3}\eqref{ass:thm31} holds
		
		or 
		
			\item\label{ass:thm3id1}   there exists $\epsilon \in (0,\delta)$	such that $
		\lim_{t \rightarrow \infty}\frac 1 t \log  R(t,x)  \neq 
		\lim_{s \rightarrow \infty} \frac{-i}{a}  \mbox{\rm Log} \left( \frac{\rho(x,s+a)}{\rho(x,s)}    
		\right) \text{ for every } x \in N^1(x_0,\epsilon)$ and $a \in (0,\epsilon]$   if Assumption \ref{ass:thm3}\eqref{ass:thm32} holds

		or 
		
		\item\label{ass:thm3id2}  $\lim_{\delta \downarrow 0} \lambda_{\delta} = 1$
		
		holds.  	
		
	\end{enumerate} 	
\end{condition}
 The above condition is verifiable with information in the observables as $\rho$, $R$ and $\lambda_{\delta}$ are all observed.

\begin{lem}\label{lem:thm3}    Suppose Assumptions \ref{ass-a}, \ref{b-1}, \ref{ass:thm3} and Condition \ref{ass:thm3id} hold.     Then there exists  $\delta' \in (0,\delta)$ such
that $F(\cdot|x), x \in N^1(x_0,\delta)$ uniquely
determines the value of $\lambda$,  and moreover, 
$$(m_1(x) - m_1(x_0),m_2(x) - m_2(x_0)) \text{ if } \lambda \in (0,1)  $$ 
up to labeling and 
$$m_1(x) - m_1(x_0)\text{ if } \lambda = 1$$
for all $x$
in $N^1(x_0,\delta')$　as well.
\end{lem}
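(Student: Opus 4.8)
The plan is to follow the proof of Lemma~\ref{lem:thm1} almost verbatim, substituting for the two two-sided moment-generating-function slope estimators used there the two \emph{one-sided} estimators furnished by Lemma~\ref{lem:3rdslope}: the MGF limit $\lim_{t\to\infty}\frac1t\log R(t,x)$ from Part~\eqref{lem:3rdslope1} and the characteristic-function limit $\lim_{s\to\infty}\frac{-i}{a}\mathrm{Log}\bigl(\rho(x,s+a)/\rho(x,s)\bigr)$ from Part~\eqref{lem:3rdslope2}. These play exactly the roles that $\lim_{t\to+\infty}$ and $\lim_{t\to-\infty}$ of $\frac1t\log R(t,x)$ played before, and $\lambda_\delta$ is to be read as the corresponding ratio built from these two observables (so that it depends on both $R$ and $\rho$). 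I work throughout under Assumption~\ref{ass:thm3}\eqref{ass:thm31}; the case \eqref{ass:thm32} is identical after sending $t\to-\infty$. By Lemma~\ref{lem:3rdslope} each of the two limits equals either $m_1(x)-m_1(x_0)$ or $m_2(x)-m_2(x_0)$, the selection being dictated by the asymptotics of $M_2/M_1$ (through Assumption~\ref{ass:thm3}) for the MGF estimator and of $\phi_2/\phi_1$ (through Assumption~\ref{b-1}) for the characteristic-function estimator.

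I first treat $\lambda\in(0,1)$. The first part of Condition~\ref{ass:thm3id} asserts precisely that the two one-sided limits \emph{disagree} on $N^1(x_0,\epsilon)$. Mirroring Lemma~\ref{lem:thm1}, I would begin by showing that the third part of the condition is then redundant: if instead the two limits \emph{agreed}, both would equal a common slope, say $m_1(x)-m_1(x_0)$ without loss of generality, and feeding this common value together with the mean identity $\E[z|x]=\lambda m_1(x)+(1-\lambda)m_2(x)$ into $\lambda_\delta$ gives, after the algebra already carried out in Lemma~\ref{lem:thm1},
$$
\lambda_\delta = 1 - \frac{(1-\lambda)\bigl[(m_2(x)-m_2(x_0))-(m_1(x)-m_1(x_0))\bigr]}{\delta\,(m_1(x)-m_1(x_0))}.
$$
The bracketed numerator is nonzero by the non-parallel Assumption~\ref{ass-a}\eqref{a-nonpara}, so $\lambda_\delta$ diverges as $\delta\downarrow0$ and Condition~\ref{ass:thm3id}\eqref{ass:thm3id2} fails. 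Hence for $\lambda<1$ the condition collapses to its first part; the two limits disagree, one must return the $m_1$-slope and the other the $m_2$-slope, and therefore \emph{both} slopes are identified up to labeling. Substituting the so-labeled slopes back into $\lambda_\delta$ and letting $\delta\downarrow0$ then recovers $\lambda$ (or $1-\lambda$ under the reversed labeling), exactly as in the final display of the $\lambda<1$ part of Lemma~\ref{lem:thm1}.

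For the degenerate case $\lambda=1$ I would argue as in Lemma~\ref{lem:thm1}: here $\rho(x,s)=e^{is\Delta}$ with $\Delta=m_1(x)-m_1(x_0)$ (the computation at the end of the proof of Lemma~\ref{lem:3rdslope}), so the characteristic-function limit equals $\Delta$, the MGF limit equals the same $\Delta$, the two coincide, and the first part of Condition~\ref{ass:thm3id} cannot hold. But then $\lambda_\delta\equiv1$ identically, so Condition~\ref{ass:thm3id}\eqref{ass:thm3id2} holds and its limit again returns $\lambda=1$ while the single slope $m_1(x)-m_1(x_0)$ is read off directly. The existence of a uniform $\delta'\in(0,\delta)$ on which all the above limits hold simultaneously across $N^1(x_0,\delta')$ follows by shrinking the neighborhood as in Lemma~\ref{lem:slope}, using the continuity bound \eqref{contfunc}.

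The step I expect to be the main obstacle is the reduction in the second paragraph --- namely, confirming that ``the two one-sided limits disagree'' genuinely forces them onto the two \emph{different} slopes rather than being vacuously unsatisfiable. This is exactly where pairing Assumption~\ref{ass:thm3} (which governs $M_2/M_1$) with Assumption~\ref{b-1} (which governs $\phi_2/\phi_1$) earns its keep: the two selection mechanisms are driven by genuinely different ratios, so they can and do point to opposite slopes. The canonical illustration is $F_1,F_2$ equal to the CDFs of $N(0,\sigma_1^2),N(0,\sigma_2^2)$ with $\sigma_1^2\neq\sigma_2^2$, where $M_2/M_1$ and $\phi_2/\phi_1$ diverge in opposite directions, so the MGF limit and the characteristic-function limit automatically select different slopes and Condition~\ref{ass:thm3id} holds --- the very case excluded by the first identification result but captured here.
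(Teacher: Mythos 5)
Your proposal is correct and takes essentially the same route as the paper: the paper's own proof of Lemma~\ref{lem:thm3} consists of the single line ``Similar to the proof of Lemma~\ref{lem:thm1},'' and your argument is a faithful elaboration of exactly that substitution --- replacing the two two-sided MGF slope limits by the one-sided MGF limit and the characteristic-function limit supplied by Lemma~\ref{lem:3rdslope}, reinterpreting $\lambda_\delta$ accordingly, and rerunning the same algebra for the non-degenerate and degenerate cases. Nothing further is needed.
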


\begin{proof}  Similar to the proof of Lemma \ref{lem:thm1}.
	
	\end{proof}

Note that we once again needed the non-parallel regression function condition. Once the increments of the regression functions are identified, their levels  as well as the mixture weight $\lambda$ are obtained using the same procedure as in the first identification result.  Thus we have:

\begin{lem}\label{lem:rest3rdresult}   Suppose Assumptions \ref{ass-a}, \ref{b-1}, \ref{ass:thm3} and Condition \ref{ass:thm3id} hold.  Then there exists  $\delta' \in (0,\delta)$ such
	that $F(\cdot|x), x \in N^1(x_0,\delta)$ uniquely determines $(\lambda,m_1(\cdot),m_2(\cdot))$ in
  the set $(0,1] \times  \mathcal V( N^1(x_0,\delta') )^2$ up to labeling.
\end{lem}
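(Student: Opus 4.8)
The plan is to treat Lemma~\ref{lem:rest3rdresult} as the level-recovery companion to Lemma~\ref{lem:thm3}. That lemma already hands us, for every $x$ in some $N^1(x_0,\delta')$, the mixing weight $\lambda$ together with the increments $m_1(x)-m_1(x_0)$ and $m_2(x)-m_2(x_0)$ (up to labeling when $\lambda\in(0,1)$, and $m_1(x)-m_1(x_0)$ alone when $\lambda=1$). Since the pair $(m_1(\cdot),m_2(\cdot))$ on $N^1(x_0,\delta')$ is pinned down by these increments together with the two levels $m_1(x_0),m_2(x_0)$, the entire content of the lemma reduces to identifying those two levels. This is precisely the step already carried out in the proof of Lemma~\ref{lem:id}, so my strategy is to import that argument once the increments and $\lambda$ are in hand.

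Concretely, I would first dispose of the degenerate case $\lambda=1$: there the model is $F(z|x)=F_1(z-m_1(x))$, so $\E[z|x]=m_1(x)$ and the level is read off immediately as $m_1(x_0)=\E[z|x_0]$, with $m_2$ absent by convention. For $\lambda\in(0,1)$ I would match low-order conditional moments at $x$ and $x_0$. The first-moment identity $\E[z|x]=\lambda m_1(x)+(1-\lambda)m_2(x)$, after substituting the known $\lambda$ and increments, collapses to a single linear equation in $(m_1(x_0),m_2(x_0))$. A second, independent equation comes from differencing the second conditional moments: the unknown component variances cancel, and forming the quantity $C(x)$ exactly as in the proof of Lemma~\ref{lem:id} yields the $2\times2$ system \eqref{mID}. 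Its coefficient matrix is invertible precisely by the non-parallel condition, Assumption~\ref{ass-a}(\ref{a-nonpara}), so the system solves uniquely, up to labeling, for $m_1(x_0)$ and $m_2(x_0)$. Recombining the levels with the already-identified increments then determines $m_1(\cdot),m_2(\cdot)$ throughout $N^1(x_0,\delta')$, which is the assertion.

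The one place where this setting genuinely departs from Lemma~\ref{lem:id} — and hence the main obstacle — is that Assumption~\ref{ass:thm3} requires $M_1,M_2$ only on a half-line, so the second-order object $\ddot M(0|x)$ entering $C(x)$ need not be finite without an extra moment hypothesis. I would circumvent this, in a manner faithful to the characteristic-function flavor of this section, by avoiding second moments entirely. Using the identified increments I would factor the conditional characteristic-function system \eqref{cfeq} at the points $x_0,x$ as $K(t)\,\mathrm{diag}(e^{itm_1(x_0)},e^{itm_2(x_0)})\,\Lambda$, where the matrix $K(t)$ is built from the known increments and is invertible for small $t\neq0$ by Assumption~\ref{ass-a}(\ref{a-nonpara}). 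Inverting $K(t)$ identifies the functions $g_1(t)=\lambda e^{itm_1(x_0)}\phi_1(t)$ and $g_2(t)=(1-\lambda)e^{itm_2(x_0)}\phi_2(t)$ near the origin; since the mean-zero normalization forces $\phi_j'(0)=0$, each level is recovered from the first log-derivative at the origin, $m_j(x_0)=-i\,g_j'(0)/g_j(0)$, exactly as in the closing argument of Lemma~\ref{lem:altid}. This uses only first moments, which exist by construction, so it goes through under the stated hypotheses. The remaining details — continuity of the extension to $t=0$, the treatment of labeling, and the degenerate solution via the Moore--Penrose inverse — are routine and identical to those already recorded in Lemmas~\ref{lem:id} and \ref{lem:altid}.
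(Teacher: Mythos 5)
Your proposal is correct, and for the key step it takes a genuinely different route from the paper. The paper proves this lemma by pure deferral: having obtained $\lambda$ and the increments from Lemma~\ref{lem:thm3}, it recovers the levels $m_1(x_0),m_2(x_0)$ "using the same procedure as in the first identification result," i.e.\ the first- and second-conditional-moment argument of Lemma~\ref{lem:id} built on $\dot M(0|x)$, $\ddot M(0|x)$ and the system \eqref{mID}. Your first sketch reproduces exactly that, but you then correctly observe the weak point: under Assumption~\ref{ass:thm3} the MGFs are only assumed finite on a half-line, and a distribution with finite mean, a light right tail and a heavy left tail (e.g.\ a left tail decaying like $|\epsilon|^{-5/2}$) has $M_j$ finite on $[0,\infty)$ yet infinite variance, so $\ddot M(0|x)$ --- whose existence the paper derives from Assumption~\ref{ass-b}(\ref{b-0.5}), the \emph{whole-line} domain condition --- is not guaranteed here absent something like Assumption~\ref{ass-var}. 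Your substitute, factoring the characteristic-function system \eqref{cfeq} at $(x_0,x)$ through the known increments, inverting $K(t)$ for $0<|t|<2\pi/|\Delta_2-\Delta_1|$ (nonsingular by Assumption~\ref{ass-a}(\ref{a-nonpara})), and reading off $m_j(x_0)=-i\,g_j'(0)/g_j(0)$ using $\phi_j'(0)=iE[\epsilon_j]=0$, needs only first moments, which hold by the mean-zero construction of $\bar{\mathcal F}(\R)$. So your argument proves the lemma under its stated hypotheses, whereas the paper's cited procedure implicitly requires finite second moments; what the paper's route buys in exchange is a closed-form, directly estimable expression for the levels (the system \eqref{mID}), which is what the estimation section later exploits.
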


To identify the distribution functions $(F_1(.),F_2(.))$, we now propose another method which will be used to construct our estimator and avoids Laplace inversion.  The main benefit of this is that it let us nonparametrically estimate the distribution functions without resorting to empirical MGF inversion, which is hard to handle in terms of obtaining polynomial rates of convergence.     
We will use previous identification of $\lambda$ and $m_1$ and $m_2$ evaluated at two points only, $x_1$ and $x_0$.

The idea is the following.
Equation (\ref{mixmodel2}) gives $F(z|x) = \lambda F_1(z - m_1(x)) + (1 -\lambda)F_2(z - m_2(x)), \forall (x,z) \in \R^{k+1},$  implying $ \forall (x,y) \in \R^{k+1},$
\begin{equation}\label{Ftrick}
F(m_1(x) +y |x) = \lambda F_1(y) + (1 -\lambda)F_2(m_1(x)-m_2(x) + y).
\end{equation}

Applying Equation (\ref{Ftrick}) to $(x_0,y)$ and $(x_1,y)$ and taking the difference, we obtain
\begin{align}\label{Fdiff}
F(m_1(x_1) +y |x_1) & - F(m_1(x_0) +y |x_0) = \nonumber \\
& = (1 -\lambda) \left( F_2(m_1(x_1)-m_2(x_1) + y) - F_2(m_1(x_0)-m_2(x_0) + y) \right),
\end{align}
which means that $\forall y \in \R,$ $ F_2(m_1(x_1)-m_2(x_1) + y) - F_2(m_1(x_0)-m_2(x_0) + y) $ is identified. Using recursively identification of this increment and the fact that the conditional cumulative distribution function $F_2$ converges to $1$ at infinity, we obtain identification of $F_2(z),$ $\forall z \in \R.$ Writing $g(x)=m_1(x)-m_2(x)$ and $\delta(x,x')=g(x)-g(x'),$ we assume that $\delta(x_1,x_0)>0.$ Note that $\delta(x_1,x_0) = \Delta - \nabla.$ Now, apply, for a given $z \in \R,$ Equation (\ref{Fdiff}) to $y=z - g(x_0)$ to obtain
$$F_2(z + \delta(x_1,x_0)) - F_2(z) = \frac{1}{1 -\lambda} (F(z + m_1(x_1) - g(x_0) |x_1) - F(z + m_2(x_0) |x_0)),$$
and, more generally, $\forall j \in \mathbb{N},$
\begin{align*}
F_2(z + (j+1) \delta(x_1,x_0)) - F_2(z+ j \delta(x_1,x_0)) =  \frac{1}{1 -\lambda}  \mathrm{\{} & F(z + j \delta(x_1,x_0) + m_1(x_1) - g(x_0) |x_1)\\
 & -  F(z + j \delta(x_1,x_0) + m_2(x_0) |x_0)) \mathrm{\}}.
\end{align*}
Using $\lim_{j \to \infty} F_2(z + (j+1) \delta(x_1,x_0)) = 1,$ the identifying equation for $F_2(.)$ is

\begin{align}\label{eq-idF}
F_2(z) =  1 \nonumber - \frac{1}{1-\lambda} \sum_{j=0}^{\infty} & F(z + j \delta(x_1,x_0) + m_1(x_1) - g(x_0) |x_1) \\
&   -  F(z + j \delta(x_1,x_0) + m_2(x_0) |x_0)),
\end{align}
where the infinite sum is a convergent series of positive terms.

Finally the equation $F(z|x)= \lambda F_1(z - m_1(x)) + (1 - \lambda) F_2(z - m_2(x))$ identifies $F_1(.)$ as
$$F_1(z) = \frac{1}{\lambda} \left[ F(z + m_1(x)) - (1 - \lambda) F_2(z + m_1(x) - m_2(x) \right]. $$ 

\begin{lem}\label{lem:F3rdresult}  Suppose Assumptions \ref{ass-a}, \ref{b-1}, \ref{ass:thm3} and Condition \ref{ass:thm3id} hold.  Then there exists  $\delta' \in (0,\delta)$ such
	that $F(\cdot|x), x \in N^1(x_0,\delta)$ uniquely determines  $(F_1(\cdot), F_2(\cdot))$ in
  the set $\bar{\mathcal F}(\R)^2$.
\end{lem}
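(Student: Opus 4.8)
The plan is to leverage the prior identification of $(\lambda,m_1(\cdot),m_2(\cdot))$ supplied by Lemmas \ref{lem:thm3} and \ref{lem:rest3rdresult}, and then to recover the two component distributions by a telescoping argument that exploits the shift structure of the model together with the boundary normalisation of a cumulative distribution function. Consider first the non-degenerate case $\lambda \in (0,1)$. With $(\lambda,m_1,m_2)$ known on $N^1(x_0,\delta')$, I would fix the reference point $x_0$ and pick a second point $x_1 \in N^1(x_0,\delta')$, $x_1 \neq x_0$; Assumption \ref{ass-a}(\ref{a-nonpara}) ensures that $g(x) := m_1(x) - m_2(x)$ satisfies $g(x_1) \neq g(x_0)$, so after relabelling if necessary we may take $\delta(x_1,x_0) := g(x_1) - g(x_0) > 0$.

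The key step is the change of variables giving \eqref{Ftrick}, read off from \eqref{mixmodel2} by setting $z = m_1(x) + y$. Differencing this identity at $x_1$ and $x_0$ cancels the $F_1$ contribution and yields \eqref{Fdiff}, which exhibits the increment $F_2(g(x_1)+y)-F_2(g(x_0)+y)$ as a known function of the observed conditional CDFs. I would then iterate along the arithmetic progression with common step $\delta(x_1,x_0)>0$ to obtain the chain of increments preceding \eqref{eq-idF}; summing over $j$ and using that the telescoping series has limit $1 - F_2(z)$, since $F_2(z + (j+1)\delta(x_1,x_0)) \to 1$ as $j \to \infty$, delivers the closed form \eqref{eq-idF} for $F_2(z)$ at every $z \in \R$. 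Convergence is immediate because each summand is a nonnegative increment of the nondecreasing function $F_2$, so the partial sums are monotone and bounded above by $1$.

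With $F_2$ determined, I would recover $F_1$ pointwise by solving \eqref{mixmodel2} (equivalently \eqref{Ftrick}) at a single covariate value: $F_1(z) = \lambda^{-1}\bigl[F(z+m_1(x)\mid x) - (1-\lambda)F_2(z+m_1(x)-m_2(x))\bigr]$, which is well defined as $\lambda > 0$. Both $F_1$ and $F_2$ are thus expressed as explicit functionals of $F(\cdot\mid x)$, $x \in N^1(x_0,\delta)$, and of the already-identified $(\lambda,m_1,m_2)$, so any candidate pair in $\bar{\mathcal F}(\R)^2$ consistent with the data must coincide with these formulas; membership in $\bar{\mathcal F}(\R)^2$ holds by the mean-zero construction of the $\epsilon_j$'s. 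In the degenerate case $\lambda = 1$ there is no $F_2$, and \eqref{mixmodel2} collapses to $F(z\mid x)=F_1(z-m_1(x))$, whence $F_1(z)=F(z+m_1(x)\mid x)$ once $m_1$ is known.

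I expect the only subtle point to be the justification of the telescoping limit: that the progression with strictly positive step, combined with $F_2(+\infty)=1$, pins $F_2$ down everywhere rather than merely up to an additive constant. This is exactly what the non-parallel condition Assumption \ref{ass-a}(\ref{a-nonpara}) secures through $\delta(x_1,x_0)\neq 0$; the remaining steps are algebraic substitution and the elementary convergence of a bounded monotone series of nonnegative terms.
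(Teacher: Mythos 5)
Your proposal is correct and follows essentially the same route as the paper: the change of variables yielding \eqref{Ftrick}, the differencing at $x_0$ and $x_1$ to cancel the $F_1$ term, the telescoping sum along the arithmetic progression with step $\delta(x_1,x_0)>0$ closed by $F_2(+\infty)=1$ to obtain \eqref{eq-idF}, and the final back-substitution for $F_1$. Your added remarks on the degenerate case $\lambda=1$ and on why Assumption \ref{ass-a}(\ref{a-nonpara}) guarantees $\delta(x_1,x_0)\neq 0$ are consistent with, and slightly more explicit than, the paper's exposition.
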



\section{A model with ``fixed effects''}\label{sec:FE}  The model we have focused on so far assumes that heterogeneity is exogenously determined.  With $J=2$, a draw $(z,x)$ is generated from the first type of population or from the second with {\it fixed} probabilities $\lambda$ and  $1 - \lambda$.  This section relaxes this assumption.     We assume that the binary probability distribution over the two types/population can depend on $x$ in a completely unrestricted, nonparametric manner.  In terms of the switching regression formulation, this means:  
\begin{equation} \label{FEreg}
	z = 
	\begin{cases} 
		m_1(x) + \epsilon_1, \epsilon_1|x \sim F_1 \quad & \text{with
			probability } \lambda(x)
		\\
		m_2(x) + \epsilon_2, \epsilon_2|x \sim F_2  \quad & \text{with
			probability } 1 - \lambda(x).
	\end{cases}
\end{equation}
where $x$ and $\epsilon_1$ ($\epsilon_2$) are, as before, assumed to be independent.   Equivalently, we can write 
\begin{equation}\label{FEmodel}
	F(z|x) = \lambda(x) F_1(z - m_1(x)) + (1 -
	\lambda(x))F_2(z - m_2(x)).  
\end{equation}
The goals is now to identify the 5-tuple of functions $(\lambda(\cdot),m_1(\cdot),m_2(\cdot),F_1(\cdot),F_2(\cdot))$ from the joint distribution of $(z,x)$.  

This model is of a particular interest in terms its implications.  As in the rest of the paper, we often interpret the difference between $(m_1(\cdot),F_1(\cdot)  )$  and  $(m_1(\cdot),F_1(\cdot)  )$ as a representation of unobserved heterogeneity.   In a standard panel data regression model  often such heterogeneity is represented by a scalar, and when it is assumed to be independent of the regressor it would be representing random effects, whereas if it is allowed to be correlated with the regressor in an arbitrary manner it becomes a fixed effects model.  In certain applications fixed effects models are highly desirable.  Panel data often offers approaches to deal with fixed effects, a leading case being a linear model with additive scalar-valued fixed effects.     The model \eqref{FEreg} (or equivalently \eqref{FEmodel}) is in this sense analogous to these fixed effects models.  Unobserved heterogeneity in \eqref{FEreg} is function-valued (i.e. $m$ and $F$), as opposed to, say, an additive scalar.  Its distribution, represented by $\lambda(x)$, is dependent on $x$ in a fully unrestricted way, accommodating arbitrary correlation between the unobserved heterogeneity and the  regressor, so it resembles a panel data fixed effects model in this aspect.  In this section we show that  \eqref{FEreg} is nonparametrically identified, without requiring panel data, when the finite mixture modeling of unobserved heterogeneity is appropriate.  Moreover, unlike in the standard panel data fixed effects model, the distribution of unobserved heterogeneity conditional on $x$ is identified fully nonparametrically.  This means we identify the entire model, enabling the researcher to calculate desired counterfactuals.

We replace Assumption \ref{ass-a}  with
 \begin{ass}\label{ass:FE1}
	For some $\delta > 0$,  
	\begin{enumerate}[\textup{(}i\textup{)}]  
		
		\item\label{a-ind} $\epsilon_1|x \sim F_1$ and  $\epsilon_2|x \sim
		F_2$ at all $x \in N^1(x_0,\delta)$ where $F_1$ and $F_2$ do not depend on the value of $x$,
		
		\item\label{a-nonpara} 
		If $0 < \lambda(x_0) < 1$, $m_1(x_0) - m_1(x) \neq m_2(x_0) - m_2(x)$, for all $x \in
		N^1(x_0,\delta)$, $x \neq x_0$,

		\item\label{a-cont} $\lambda$, $m_1$ and $m_2$ are continuous in $x^1$ at $x_0$.

\end{enumerate}
\end{ass}
We maintain Assumption \ref{ass-b}, which, as noted before, is a weak regularity condition.
Define 
%
$$
K_{+\infty,t}(x) :=  R(t,x)\exp \left( - t\lim_{s \rightarrow +\infty} \frac 1 s \log(R(s,x))  \right),
$$ 
$$
K_{-\infty, t}(x) :=  R(t,x)\exp \left( - t\lim_{s \rightarrow -\infty} \frac 1 s \log(R(s,x))  \right),
$$ 
$$
K_{+\infty}(x) := \lim_{t \rightarrow +\infty} K_{+\infty,t}(x)
$$ 
and 
$$
K_{-\infty}(x) := \lim_{t \rightarrow -\infty} K_{ -\infty,t }(x).
$$ 
%
%

Note that the limits in these definitions are well-defined  over a neighborhood of $x_0$.

We replace Condition \ref{ass:thm1id} with:

\begin{condition}\label{cond:FE}  Either 
	\begin{enumerate}[\textup{(}i\textup{)}]

		\item\label{ass:thm4id1} $
		\lim_{t \rightarrow \infty}\frac 1 t \log  R(t,x)  \neq \lim_{t \rightarrow -\infty}\frac 1 t \log  R(t,x) \text{ for some } x \in N^1(x_0,\delta). 
		$
		
		or 
		
		\item\label{ass:thm4id2}  $K_{+\infty,t}(x) = 1$ for every $t \in \R$ and $x \in N^1(x_0,\delta)$

		holds for some $\delta > 0$.  	
		
	\end{enumerate} 	
\end{condition}

\begin{lem}\label{lem:thmFE}  Suppose Assumptions 
	\ref{ass-b},  \ref{ass:FE1} and Condition \ref{cond:FE} hold.  Then there exists  $\delta' \in (0,\delta)$ such
	that $F(\cdot|x), x \in N^1(x_0,\delta)$ uniquely
	determines  $\lambda(x)$,  and moreover, 
	$$(m_1(x) - m_1(x_0),m_2(x) - m_2(x_0)) \text{ if } \lambda(x)\lambda(x_0) \in (0,1)  $$ 
	up to labeling and 
	$$m_1(x) - m_1(x_0)\text{ if } \lambda(x)\lambda(x_0) = 1$$
	for all $x$
	in $N^1(x_0,\delta')$　as well.
\end{lem}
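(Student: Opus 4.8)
The plan is to follow the architecture of the proof of Lemma~\ref{lem:thm1}, replacing the constant weight $\lambda$ by the two unknowns $\lambda(x)$ and $\lambda(x_0)$ and replacing the mean-based statistic $\lambda_\delta$ by the normalized ratios $K_{+\infty}(x)$, $K_{-\infty}(x)$, which are exactly what let us recover a varying weight without knowing the levels of the regression functions. First I would record that allowing the weight to depend on $x$ does not affect slope identification. Writing, as in~(\ref{mixedmgf}), $M(t|x) = \lambda(x)e^{tm_1(x)}M_1(t) + (1-\lambda(x))e^{tm_2(x)}M_2(t)$, the computation in the proof of Lemma~\ref{lem:slope} goes through verbatim with $\lambda(x)$ in the numerator and $\lambda(x_0)$ in the denominator: the weights enter only through a bounded multiplicative factor whose $\frac1t\log$ vanishes as $t\to\pm\infty$. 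Hence under Assumptions~\ref{ass-b} and \ref{ass:FE1} the limits $\lim_{t\to\pm\infty}\frac1t\log R(t,x)$ still each equal $m_1(x)-m_1(x_0)$ or $m_2(x)-m_2(x_0)$, so the increments of $m_1$ and $m_2$ remain available.

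Second, and this is the heart of the matter, I would evaluate the normalized ratios. If the $+\infty$ slope is of $m_1$-type, dividing $R(t,x)$ by $e^{t[m_1(x)-m_1(x_0)]}$ gives
\begin{equation*}
K_{+\infty,t}(x) = \frac{\lambda(x) + (1-\lambda(x))e^{t[m_2(x)-m_1(x)]}\frac{M_2(t)}{M_1(t)}}{\lambda(x_0) + (1-\lambda(x_0))e^{t[m_2(x_0)-m_1(x_0)]}\frac{M_2(t)}{M_1(t)}},
\end{equation*}
and since this is precisely the branch along which $e^{t[m_2(x)-m_1(x)]}M_2(t)/M_1(t)\to 0$, letting $t\to+\infty$ yields $K_{+\infty}(x)=\lambda(x)/\lambda(x_0)$. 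Under Condition~\ref{cond:FE}(\ref{ass:thm4id1}) the two one-sided slopes differ, and by Lemma~\ref{lem:slope} together with the non-parallel condition Assumption~\ref{ass:FE1}(\ref{a-nonpara}) this forces them onto complementary branches; the symmetric computation at $-\infty$ then gives $K_{-\infty}(x)=(1-\lambda(x))/(1-\lambda(x_0))$, after fixing the label so the two branches match. The point is that the level factors $e^{tm_i(\cdot)}$ are annihilated by the normalization, so these ratios isolate the weights alone.

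Third, I would solve these two relations jointly with $\lambda(x)+(1-\lambda(x))=1$. Eliminating $\lambda(x)$ gives
\begin{equation*}
\lambda(x_0)=\frac{1-K_{-\infty}(x)}{K_{+\infty}(x)-K_{-\infty}(x)}, \qquad \lambda(x)=K_{+\infty}(x)\,\lambda(x_0),
\end{equation*}
valid at any $x$ with $K_{+\infty}(x)\neq K_{-\infty}(x)$, i.e. $\lambda(x)\neq\lambda(x_0)$; this pins down $\lambda(x_0)$ and hence $\lambda(\cdot)$ on $N^1(x_0,\delta')$. The two remaining configurations are handled as in Lemma~\ref{lem:thm1}. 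If $\lambda$ is locally constant, the model coincides with the fixed-weight model and the statistic $\lambda_\delta$ recovers the common value, using that the two increments already differ under Assumption~\ref{ass:FE1}(\ref{a-nonpara}). In the degenerate case $\lambda(x)\lambda(x_0)=1$, i.e. $\lambda\equiv1$, one has $R(t,x)=e^{t[m_1(x)-m_1(x_0)]}$, the two slopes coincide so Condition~\ref{cond:FE}(\ref{ass:thm4id1}) fails while $K_{+\infty,t}(x)\equiv1$ holds, flagging the single-component model through Condition~\ref{cond:FE}(\ref{ass:thm4id2}) and leaving only $m_1(x)-m_1(x_0)$ to identify. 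Exhaustiveness is checked exactly as in Lemma~\ref{lem:thm1}: in the genuine-mixture case, were Condition~\ref{cond:FE}(\ref{ass:thm4id1}) to fail, the non-parallel condition forces $K_{+\infty,t}(x)\not\equiv1$, so Condition~\ref{cond:FE}(\ref{ass:thm4id2}) would fail too, contradicting the hypothesis.

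The main obstacle I anticipate is the bookkeeping in the second step: correctly assigning each one-sided limit to its $m_1$- or $m_2$-branch across the sub-cases of Assumption~\ref{ass-b}(\ref{b-2}) --- in particular the ``both'' case, which forces $D(x_0)=0$ --- and verifying that normalizing by $\exp(-t\lim_{s\to+\infty}\frac1s\log R(s,x))$ strips off the exponential factor cleanly rather than leaving a surviving term that would contaminate the ratio of weights. The other delicate point is guaranteeing a witness $x$ with $K_{+\infty}(x)\neq K_{-\infty}(x)$, equivalently separating the weight-varying branch from the locally-constant and degenerate branches; this is exactly what Condition~\ref{cond:FE} is designed to arrange.
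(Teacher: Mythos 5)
Your architecture is the one the paper uses: the slope limits of Lemma~\ref{lem:slope} carry over because the $x$-dependent weights enter only through a factor whose $\tfrac1t\log$ vanishes; the normalized ratios give $K_{+\infty}(x)=\lambda(x)/\lambda(x_0)$ on the $m_1$-branch and $K_{-\infty}(x)=(1-\lambda(x))/(1-\lambda(x_0))$ on the $m_2$-branch; solving the pair recovers $\lambda(x_0)$ (this is exactly the paper's $\tilde\lambda_c$ statistic with $c\downarrow 0$); the locally-constant case falls back on $\lambda_\delta$ as in Lemma~\ref{lem:thm1}; and the fully degenerate case is flagged by Condition~\ref{cond:FE}(\ref{ass:thm4id2}). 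Two points, however, fall short of the paper's case analysis.

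First, and this is the genuine gap, you do not treat the mixed case $\lambda(x_0)=1$, $\lambda(x)<1$ for $x$ near $x_0$, which the continuity of $\lambda$ does not rule out and which the lemma's first bullet covers (there $\lambda(x)\lambda(x_0)=\lambda(x)\in(0,1)$, so both increments must be identified). Your step-3 formulas degenerate in that case: $K_{-\infty}(x)=(1-\lambda(x))/(1-\lambda(x_0))$ involves division by zero, and the branch assignment of the one-sided limits of $\tfrac1t\log R$ changes because degeneracy at $x_0$ collapses the denominator of $R(t,x)$ to a single exponential, so that $R(x,t)=\lambda(x)e^{t\dot m_1(x)}+(1-\lambda(x))e^{t\dot m_2(x)}$ and the $\pm\infty$ limits pick out $\max$ and $\min$ of the two increments rather than one increment from each component. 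The paper devotes separate Cases (4)--(5) to this configuration, where $K_{+\infty}$ and $K_{-\infty}$ deliver $\lambda(x)$ and $1-\lambda(x)$ directly and Condition~\ref{cond:FE}(\ref{ass:thm4id1}) does the screening. Second, your exhaustiveness claim for the genuine-mixture case --- that failure of Condition~\ref{cond:FE}(\ref{ass:thm4id1}) plus the non-parallel condition forces $K_{+\infty,t}(x)\not\equiv 1$ --- is asserted rather than proved; in the paper this is Case (3), and the argument is not one line: requiring $K_{+\infty,t}(x)=1$ for all $t$ yields an identity whose left side is constant in $t$ and whose right side is an exponential combination, and the contradiction is extracted by evaluating the resulting ratio at a second point $x_1$ and invoking non-parallelism. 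You should supply that computation rather than rely on the contrapositive of the constant-$\lambda$ argument from Lemma~\ref{lem:thm1}.
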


\begin{proof}  See appendix. 
	\end{proof}

The next result shows that the model that allows $\lambda$ to be arbitrarily dependent on $x$ is nonparametrically identified.  Note that the mixture can be degenerate (i.e. $\lambda(x) = 1$) for some values of $x$, and this can be also inferred from the observables.  As in the previous identification results presented in Lemmas  \ref{lem:id}, \ref{lem:altid} and \ref{lem:rest3rdresult}, its main sufficient condition (i.e. Condition   \ref{cond:FE}) is verifiable in terms of observables.    

\begin{lem}\label{lem:thmFEid}  Suppose Assumptions \ref{ass:FE1}, \ref{ass-b} and Condition \ref{cond:FE} hold.  Then $F(\cdot|x), x \in
	N^1(x_0,\delta)$ uniquely
	determines $(\lambda(x_0),F_1(\cdot),F_2(\cdot),m_1(x_0),m_2(x_0))$ in
	the set $(0,1] \times \bar{\mathcal F}(\R)^2
	\times \R^2$ up to labeling.
\end{lem}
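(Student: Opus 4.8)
The plan is to build on Lemma~\ref{lem:thmFE}, which under the stated hypotheses already identifies $\lambda(x)$ together with the increments $m_1(x)-m_1(x_0)$ and $m_2(x)-m_2(x_0)$ (up to labeling) for every $x\in N^1(x_0,\delta')$; what remains is to pin down the levels $m_1(x_0),m_2(x_0)$ and the component distributions $F_1,F_2$. First I would introduce the shifted component moment generating functions $\tilde{M}_i(t):=e^{t m_i(x_0)}M_i(t)$, $i=1,2$, i.e.\ the conditional MGFs of $z$ in each regime evaluated at $x_0$; by Assumption~\ref{ass-b}(\ref{b-0.5}) these are finite, continuous, and analytic on all of $\R$. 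Writing the decomposition of $M(t|x)$ implied by \eqref{FEmodel} at $x_0$ and at an arbitrary $x\neq x_0$ in $N^1(x_0,\delta')$, and factoring $e^{t m_i(x_0)}$ out of the second row, yields the linear system
\begin{equation*}
\begin{bmatrix} M(t|x_0) \\ M(t|x) \end{bmatrix}
= A(x,t)
\begin{bmatrix} \tilde{M}_1(t) \\ \tilde{M}_2(t) \end{bmatrix},
\qquad
A(x,t)=
\begin{pmatrix}
\lambda(x_0) & 1-\lambda(x_0) \\
\lambda(x)\,e^{t[m_1(x)-m_1(x_0)]} & (1-\lambda(x))\,e^{t[m_2(x)-m_2(x_0)]}
\end{pmatrix},
\end{equation*}
in which every entry of $A(x,t)$ has already been identified by Lemma~\ref{lem:thmFE}. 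The advantage of this factoring, relative to the matrix $E$ used in the proof of Lemma~\ref{lem:id}, is that $A(x,t)$ does not involve the unknown levels, so I can solve for the component MGFs first and extract the levels afterward.

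The crux is the invertibility of $A(x,t)$. Its determinant is $\lambda(x_0)(1-\lambda(x))e^{t[m_2(x)-m_2(x_0)]}-(1-\lambda(x_0))\lambda(x)e^{t[m_1(x)-m_1(x_0)]}$, a difference of two exponentials whose rates differ precisely because Assumption~\ref{ass:FE1}(\ref{a-nonpara}) gives $m_1(x)-m_1(x_0)\neq m_2(x)-m_2(x_0)$ for $x\neq x_0$. Hence, when $\lambda(x_0)\in(0,1)$, the determinant is not identically zero (both exponential coefficients cannot vanish simultaneously) and, since the ratio of coefficients is a strictly monotone exponential equated to a positive constant, it vanishes for at most one value $t^*$. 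I would therefore solve the system for $(\tilde{M}_1(t),\tilde{M}_2(t))$ at every $t\neq t^*$ and recover the omitted value by continuity (or analyticity) of the $\tilde{M}_i$. Once $\tilde{M}_1,\tilde{M}_2$ are known, the levels follow as $m_i(x_0)=\tilde{M}_i'(0)$ (using $\tilde{M}_i(0)=1$ and $\int\epsilon\,dF_i(\epsilon)=0$), after which $M_i(t)=e^{-t m_i(x_0)}\tilde{M}_i(t)$ has $M_i'(0)=0$ and determines a mean-zero $F_i$ uniquely by the uniqueness of the Laplace transform (\citeasnoun{feller1968introduction}, p.233), placing $(F_1,F_2)$ in $\bar{\mathcal F}(\R)^2$.

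Finally I would dispose of the degenerate boundary $\lambda(x_0)=1$ along the lines of Lemma~\ref{lem:id}: then $M(t|x_0)=\tilde{M}_1(t)$ delivers $F_1$ and $m_1(x_0)$ at once, and if some $x\in N^1(x_0,\delta')$ has $\lambda(x)<1$ the second equation solves explicitly for $\tilde{M}_2(t)$ and hence for $F_2,m_2(x_0)$, whereas if $\lambda\equiv 1$ the model is genuinely $J=1$ and the Moore--Penrose convention returns $m_2(x_0)=0$. The main obstacle, I expect, is not the algebra but the careful bookkeeping around invertibility: one must verify that $\det A(x,t)$ cannot vanish identically when $\lambda(x_0)\in(0,1)$, justify that the continuity/analyticity argument legitimately recovers $\tilde{M}_i$ at the single exceptional point $t^*$ so that identification holds for \emph{all} $t$, and confirm that the degenerate case is handled consistently with the claimed parameter space $(0,1]\times\bar{\mathcal F}(\R)^2\times\R^2$ up to labeling.
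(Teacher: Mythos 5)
Your proposal is correct, and it reaches the conclusion by a genuinely different route from the paper. The paper first identifies the levels $m_1(x_0),m_2(x_0)$ from conditional moments — splitting into the case $\lambda(x)=\lambda(x_0)$ (handled by the second-moment construction $C(x)$ carried over from Lemma~\ref{lem:id}) and the case $\lambda(x)\neq\lambda(x_0)$ (handled by the observable quantity $c(x)=\bigl(\E[z|x]-\E[z|x_0]-\lambda(x)[\dot m_1(x)-\dot m_2(x)]\bigr)/\dot\lambda(x)=m_1(x_0)-m_2(x_0)$ together with $\E[z|x]$, a system with unit determinant) — and only then solves a $2\times 2$ system in $(M_1(t),M_2(t))$ whose matrix contains the now-known levels $e^{tm_i(\cdot)}$, checking its determinant is nonzero for almost all $t$ under the non-parallel condition. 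You instead reparametrize to the shifted transforms $\tilde M_i(t)=e^{tm_i(x_0)}M_i(t)$, which makes the system matrix $A(x,t)$ depend only on the weights and increments already delivered by Lemma~\ref{lem:thmFE}; you then read the levels off as $\tilde M_i'(0)$. What your factoring buys is a single unified argument with no case split on whether $\dot\lambda(x)$ vanishes and no recourse to second moments, plus a cleaner invertibility statement (at most one exceptional $t^*$, since the determinant is a difference of two positive exponentials with distinct rates when $\lambda(x_0),\lambda(x)\in(0,1)$, recovered by continuity of the $\tilde M_i$). What the paper's route buys is that the levels come from low-order conditional moments, which parallels the constructive estimation strategy used elsewhere in the paper, and that the degenerate bookkeeping matches Lemma~\ref{lem:id} verbatim. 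Your handling of the boundary $\lambda(x_0)=1$ (direct solution of the first equation, Moore--Penrose convention for the second component) is consistent with the paper's convention and with the stated parameter space, so no gap remains.
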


\begin{proof}  Given Lemma \ref{lem:thmFE} the only remaining task is to identify the levels of $m_1$ and $m_2$ at $x_0$, $F_1$ and $F_2$.  
	Using the notation introduced in the proof of Lemma \ref{lem:thmFE}, with an additional definition  
	$$
	\dot \lambda(x) = \lambda(x) - \lambda(x_0),
	$$ 
	write
	$$
	\E [z|x] - \E[z|x_0] =  \lambda(x)[\dot m_1(x) - \dot m_2(x)] + \dot \lambda(x)[m_1(x_0) - m_2(x_0)].  
	$$
	If $\lambda(x) \neq \lambda(x_0)$ then we can proceed as in the proof of Lemma \ref{lem:id} to show that $m_1(x_0)$ and $m_2(x_0)$ are identified.  Accordingly, consider the case $\lambda(x) \neq \lambda(x_0)$.   Define 
	$$
	c(x) := \frac{	\E [z|x] - \E[z|x_0] -  \lambda(x)[\dot m_1(x) - \dot m_2(x)] }{ \dot \lambda(x) },
	$$
	which is observable by Lemma  \ref{lem:thmFE}, then $c(x) = m_1(x_0) - m_2(x_0)$, and we obtain 
	$$
	\begin{bmatrix}
	c(x) \\
	\E [z|x]
	\end{bmatrix}
	 = 
	 \begin{pmatrix}
	 1 & 1 \\
	 \lambda(x) & 1 - \lambda(x)
	 \end{pmatrix}
	 \begin{bmatrix}
	 m_1(x_0) \\
	 m_2(x_0)
	 \end{bmatrix}.
	 $$
	 Since the determinant of the matrix on the right hand side is unity, once again $m_1(x_0)$ and $m_2(x_0)$ are identified.  Finally, we proceed as in  as in the proof of Lemma \ref{lem:id} to identify $F_1$ and $F_2$, though here the 2-by-2 matrix in the following display  does not factorize:
	 \begin{equation}\label{eq:mgfid}
	 \begin{bmatrix}
	 M(t|x) \\
	 M(t|x_0)
	 \end{bmatrix}
	 = 
	 \begin{pmatrix}
	 \lambda(x)e^{tm_1(x)} & (1 - \lambda(x) )  e^{tm_1(x)} \\
	 \lambda(x_0) e^{tm_1(x)}  & ( 1 - \lambda(x_0) ) e^{tm_2(x_0)}
	 \end{pmatrix}
	 \begin{bmatrix}
	 m_1(x_0) \\
	 m_2(x_0)
	 \end{bmatrix}, \quad \text{ \rm for every } t \in \R.
	 \end{equation}
 Nevertheless, its determinant is, if $\lambda(x_0) \neq 1$ 
\begin{eqnarray*}
\lambda(x)(1 - \lambda(x_0) ) e^{t[m_1(x) + m_2(x_0)]} &-& \lambda(x_0)(1 - \lambda(x) ) e^{t[m_1(x_0) + m_2(x)]}  
\\
=&& \lambda(x_0)(1 - \lambda(x_0) )   e^{t[m_1(x_0) + m_2(x)]}     \left\{ \frac{\lambda(x)}{\lambda(x_0)}  e^{t[\dot m_1(x) - \dot m_2(x)  ]}  - \frac{1 - \lambda(x)}{1 -  \lambda(x_0)}   \right \}
\end{eqnarray*}
which is non-zero for almost all $t$ under the non-parallel condition.
Therefore \eqref{eq:mgfid} uniquely  determines $M_1$ and $M_2$, hence $F_1$ and $F_2$.  The treatment of the case with $\lambda(x_0) = 1$ is straightforward.  
\end{proof}

\section{Instrumental Variables}\label{sec:iv}
The identification results developed in the preceding sections can be used to identify nonparametric finite mixture regression with endogenous regressors.   Suppose we observe a triple of random variables $(y,w,x)$ taking its value in $\mathcal Y \times \mathcal W \times \mathcal X$ where    $\mathcal Y \subset \R$, $\mathcal W \in \R^p$ and $\mathcal X \in  \R^k$.   Also let 
$$
z := \binom y  w.   
$$  
In a manner similar to Section \ref{switch}, consider a switching regression model:
\begin{equation} \label{switchIV}
y = 
\begin{cases} 
g_1(w) + \eta_1, \quad (y,w,x,\eta_1) \sim F_1 \quad & \text{with
	probability } \lambda
\\
g_2(w) + \eta_2, \quad (y,w,x,\eta_2) \sim F_2 \quad & \text{with
	probability } 1 - \lambda.
\end{cases}
\end{equation}
Unlike in the previous sections, however,  we no longer assume that $\eta$'s and $w$ are uncorrelated or independent.    
Instead, we assume 
\begin{equation}\label{eq:npiv}
\int \eta d F_1(\eta|x)   = \int \eta d F_2(\eta|x) = 0, 
\end{equation}
that is
$$
\E [\eta_1|x] = \E [\eta_2|x] = 0.
$$
Here and thereafter the notation $F_i(\star_1,\star_2,...)$ and  $F_i(\star_1,\star_2,...|\star)$ denote the joint distribution of $\star_1,\star_2,...$ and the conditional distribution of $\star_1,\star_2,...$ given $\star$ when the joint distribution is given by $F_i, i = 1,2$.  Consider linear operators 
\begin{equation}\label{eq:operator}
T_1[f](x) = \int f(w) dF_1(w|x), \quad T_2[f](x) = \int f(w) dF_2(w|x)
\end{equation}
and assume that these operators are invertible.  

The main goal is to identify $g_1$ and $g_2$.  Here $x$
plays the role of instrumental variables.       
As before, define $m_1(x) = \int z dF_1(z|x)$ and $m_2(x) = \int z
dF_2(z|x)$.  Note that $m_1: \R^k \rightarrow \R^{p+1}$ and  $m_2: \R^k
\rightarrow \R^{p+1}$.  For $j = 1,...,p+1$, let $m_{j,1}(\cdot)$ and $m_{j,2}(\cdot)$ denote the $j-$th
elements of $m_1(\cdot)$ and $m_2(\cdot)$, respectively.  Define the
$p+1$-dimensional vectors of random variables 
$\epsilon_j  = z - m_i(x), (z,x) \sim F_i(z,x), j = 1,2$.  Consistent with the previous notation let
$F_i(\epsilon_i|x), i = 1,2,$ denote the conditional distribution of
$\epsilon_1$ and $\epsilon_2$ under $F_1$ and $F_2$.  

By
construction, 
$$
\int \epsilon dF_i(\epsilon|x) = 0, i = 1,2.
$$        
If we further assume that $F_i(\epsilon|x), j = 1,2$ do not depend on
$x$, an appropriate extension of the theory developed in Section \ref{sec:switch} can be used to
identify $m_{p+1,1}(x)$, $m_{p+1,2}(x)$,  $F_1(z|x)$ and $F_2(z|x)$, which in turn, also identify the operators $T_1$ and $T_2$.  By \eqref{switchIV}, \eqref{eq:npiv} and \eqref{eq:operator} we have
$$
m_{p+1,1}(x) = T_1[g_1](x), \quad m_{p+1,2}(x) = T_2[g_2](x).
$$
 Then by their invertibility $g_1$ and $g_2$ are identified as .
To formalize this idea, consider the following assumptions:       
\begin{ass}\label{ass-e}
For some $\delta > 0$,  
\begin{enumerate}[\textup{(}i\textup{)}]  

\item\label{e-ind} $\epsilon_1|x \sim F_1^{\epsilon}$ and  $\epsilon_2|x \sim
  F_2^\epsilon$ at all $\mathcal X$ where $ F_1^{\epsilon}$ and $ F_1^{\epsilon}$ do not depend on the value of $x$; 

\item\label{e-nonpara} 
$m_{j1}(x_0) - m_{j1}(x) \neq m_{j2}(x_0) - m_{j2}(x)$, for all $x \in
N^1(x_0,\delta)$, $x \neq x_0$ and for all $j$, $j = 1,...,p+1$;

\item\label{e-cont} $m_1$ and $m_2$ are continuous at $x_0$.

\end{enumerate}
\end{ass}
To state a multivariate extension of Assumption \ref{ass-b}, define
the multivariate moment generating function  
$$
M_i({\bf t}) = \int e^{{\bf t}^\top\eta} dF_i(\eta), \quad i =
1,2, \quad {\bf t} \in \R^{p+1}.
$$
Let ${\bf e}_j$ denote the unit vector whose $j-$th element is 1.  Accommodating the identification strategy in Section \ref{sec:1stid} require some modification as follows.  Define 
$
{D(x)} := m_2(x) - m_1(x) 
$ as before, though now $D: \R^k \rightarrow \R^p $ is vector-valued.  Also let   
$$
h_j(c,t) : = e^{ {t {\bf e}_j}'  D(x_0)(1 - c)}  \frac {M_2(t{\bf e}_j)} {M_1(t{\bf e}_j)}, \quad c \in \R_{++  }, t \in \R
$$
and 
$$
R({\bf t},x)  := \frac{M({\bf t} | x)}{M({\bf t} | x_0)}, \qquad {\bf t} \in \R^{p+1}.
$$
  
\begin{ass}\label{ass-f}
\begin{enumerate}[\textup{(}i\textup{)}]

\item\label{f-0.5} The domains of $M_1({\bf t})$ and $M_2({\bf t})$ are
  $(-\infty,\infty)^{p+1}$;
  
\item\label{f-1} For some $\varepsilon >0$ either $h_j(\pm\varepsilon,t) = O(1)$ or $1/ h_j(\pm\varepsilon,t) = O(1)$, or both hold as $t \rightarrow  +\infty$  for each $j \in \{1,...,p+1\}$.   Moreover, the same holds as  $t \rightarrow - \infty$



\end{enumerate}
\end{ass}

\begin{condition}\label{cond:iv}  Either 
	\begin{enumerate}[\textup{(}i\textup{)}]

		\item\label{cond:iv1} $
		\lim_{t \rightarrow \infty}\frac 1 t \log  R(t{\bf e}_j,x)  \neq \lim_{t \rightarrow -\infty}\frac 1 t \log  R(t{\bf e}_j,x) \text{ for each } j \in \{1,...,p\} \text{ and  for some } x \in N^1(x_0,\delta) 
		$
		
		or 
		
		\item\label{cond:iv2}  $\lim_{c \downarrow 0} \lambda_{c} = 1$
		
		holds.  	
		
	\end{enumerate} 	
\end{condition}

By modifying the proofs of Lemmas \ref{lem:slope} and \ref{lem:id}
appropriately to deal with $\R^p$-valued random variables, we
can show that $(\lambda,F_1^{\epsilon},F_2^\epsilon,m_1(x_0),m_2(x_0))$ is identified
under Assumptions \ref{ass-e} and
\ref{ass-f}.  Then,
as noted in Remark \ref{rem:idoverarea}, $F(z|x), x \in 
{\mathcal X}, z \in \R^p$ uniquely determines
$(\lambda,F_1^\epsilon,F_2^\epsilon,m_1(x),m_2(x))$ in $\R \times {\bar {\mathcal F}(\R^p)^2
  \times \R^{2p}}$ for all $x \in \mathcal X$ up to labeling.
Therefore each component distribution of $z$ is obtained by
$$
F_1(z|x) = F_1^\epsilon(z - m_1(x)), \quad F_2(z|x) = F_2^\epsilon(z - m_2(x)). 
$$ 
We now have:
\begin{thm}
Suppose Assumptions \ref{ass-e}, \ref{ass-f} and Condition  \ref{cond:iv} hold.
Then $g_1(\cdot)$ and $g_2(\cdot)$ are identified.  
\end{thm}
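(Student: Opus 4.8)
\emph{Proof proposal.} The plan is to treat the theorem as a short corollary of the vector-valued identification of the mixture that has just been sketched in the paragraph preceding the statement, and then to recover $g_1,g_2$ by inverting the operators $T_1,T_2$. Concretely, I would first invoke the $\R^{p+1}$-valued extensions of Lemmas \ref{lem:slope} and \ref{lem:id}: under Assumptions \ref{ass-e} and \ref{ass-f} and Condition \ref{cond:iv}, the componentwise shift-restriction argument applied coordinate by coordinate (taking limits of $\frac 1 t \log R(t{\bf e}_j,x)$ along each direction ${\bf e}_j$) identifies $(\lambda,F_1^\epsilon,F_2^\epsilon,m_1(x_0),m_2(x_0))$, and then Remark \ref{rem:idoverarea} upgrades this to identification of $(\lambda,F_1^\epsilon,F_2^\epsilon,m_1(x),m_2(x))$ for all $x\in\mathcal X$. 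Via $F_1(z|x)=F_1^\epsilon(z-m_1(x))$ and $F_2(z|x)=F_2^\epsilon(z-m_2(x))$ this identifies each joint conditional component distribution $F_i(z|x)$.

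Next I would extract the two ingredients needed to pin down each $g_i$. On the one hand, marginalizing the identified joint $F_i(z|x)$ over the $y$-coordinate recovers the conditional law $F_i(w|x)$ of the regressors under component $i$; since knowing $F_i(w|x)$ is the same as knowing the map $f\mapsto\int f(w)\,dF_i(w|x)$, this identifies the operators $T_1,T_2$ of \eqref{eq:operator}. On the other hand, the coordinate of $m_i(x)$ corresponding to $y$, namely $m_{p+1,i}(x)=\int y\,dF_i(z|x)$, is already identified; and by the structural equation \eqref{switchIV} together with the conditional mean restriction \eqref{eq:npiv},
\begin{equation*}
m_{p+1,i}(x)=\int\big(g_i(w)+\eta_i\big)\,dF_i=\int g_i(w)\,dF_i(w|x)=T_i[g_i](x),\qquad i=1,2,
\end{equation*}
so that the function $x\mapsto T_i[g_i](x)$ is identified. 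Because $T_i$ is assumed invertible and both $T_i$ and $T_i[g_i]$ are now known, I would conclude $g_i=T_i^{-1}\big(T_i[g_i]\big)$, identifying $g_1$ and $g_2$ up to the labeling inherited from the mixture step.

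The main obstacle is not this final deconvolution, which is immediate once invertibility is granted, but the vector-valued adaptation of Lemmas \ref{lem:slope} and \ref{lem:id} invoked in the first paragraph: one must verify that the limits of $\frac 1 t\log R(t{\bf e}_j,x)$ still separate the two regression surfaces in each coordinate $j$ under the non-parallel requirement of Assumption \ref{ass-e}\eqref{e-nonpara} and the ratio regularity of Assumption \ref{ass-f}\eqref{f-1}, and that the degenerate case $\lambda=1$ is handled exactly as before. A secondary point worth checking is a domain-matching condition: the set $\mathcal X$ over which $m_{p+1,i}$ is recovered must be rich enough for $T_i^{-1}$ to act on the identified function $T_i[g_i]$, so that the inversion indeed returns $g_i$ rather than a mere restriction of it.
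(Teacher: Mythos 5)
Your proposal follows essentially the same route as the paper: identify $(\lambda,F_1^\epsilon,F_2^\epsilon,m_1(\cdot),m_2(\cdot))$ via the $\R^{p+1}$-valued extension of Lemmas \ref{lem:slope} and \ref{lem:id} together with Remark \ref{rem:idoverarea}, recover the operators $T_1,T_2$ from the identified component distributions $F_i(z|x)$, observe that $m_{p+1,i}(x)=T_i[g_i](x)$ by \eqref{switchIV} and \eqref{eq:npiv}, and conclude by invertibility of $T_i$. Your two flagged caveats (the multivariate adaptation of the lemmas and the domain over which the inversion acts) are precisely the points the paper leaves to the reader, so the argument is correct and matches the paper's.
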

\begin{rem}
It is possible to further introduce flexibility into the model \eqref{switchIV} by allowing unrestricted dependence between unobserved heterogeneity and the instrument $x$.  This can be achieved by making $\lambda$ in \eqref{switchIV} an arbitrary function of $x$.  Applying the results in Section \ref{sec:FE} to identify  $m_{p+1,1}(x)$, $m_{p+1,2}(x)$,  $F_1(z|x)$ and $F_2(z|x)$ and proceeding as above, we recover $g_1$ and $g_2$ nonparametrically.    
 	\end{rem}

\section{Mixtures with arbitrary  $J$}\label{sec:J3}

Previous sections studied the identifiability for mixtures with $J =
2$.  It is desirable, however, to be able to deal with mixtures with many components in some
applications, especially when mixtures are used to represent
unobserved heterogeneity.  This section shows that nonparametric identification can be established for 
general $J$, possibly greater than 2, and moreover we show that the number  $J$ itself is also identifiable.

The basic setup in this section is analogous to the one considered in
Section \ref{sec:switch}, though the conditional distribution of $z
\in \R$
given $x \in \R^n$ consists of $J$ components, $J \in \mathbb{N}$, as in (\ref{Jmodel}).               
As before, define 
$$
m_j(x) = \int_\R z dF_j(z|x), j = 1,2,...,J.
$$
Define also 
\[
\epsilon_j = z_j - m_j(x), \quad j = 1,2,...,J.
\]
Later we impose independence between $\epsilon_j, j =
1,...,J$ and $x$, which enables us to write $F(z|x)$ as
\begin{equation}\label{Jcond}
F(z|x) = \sum_{j=1}^J \lambda_jF_j(z - m_j(x)).  
\end{equation}
For later use, define
$M_j(t) = \int e^{t\epsilon}F_j(d\epsilon), j = 1,...,J$. 
This section shows that the parameter 
$(\{\lambda_j\}_{j=1}^J,\{F_j(\cdot)\}_{j=1}^j,\{m_j(\cdot)\}_{j=1}^J)$
is identifiable under suitable conditions. 

At an intuitive level, the argument developed in Section \ref{sec:switch} still offers
a valid picture behind the identifiability result here.  The independence of $\epsilon$ from
$x$ leads to a shift restriction: the shapes of the
distributions of $\{\epsilon_j\}_{j=1}^J$ have to remain invariant along the
$J$ regression functions.  This restriction, with other conditions, nails down the
true parameters uniquely. 
Moving from $J = 2$ to $J \geq 3$, however,
involves rather different theoretical arguments as developed subsequently. 
Recall that Section \ref{sec:switch} presented alternative
conditions that guarantee the identifiability of two-component mixture
models, as summarized by Lemma \ref{lem:id}, Lemma \ref{lem:altid} and Lemma \ref{lem:thm3}.  
%
This section  proves the
nonparametric identifiability of (\ref{Jcond}) under conditions that
are similar to the ones used in Lemma \ref{lem:id}, which seems least prohibiting of the three to generalize.  Even so, this generalization calls for multistep identification argument with recursive procedures, as will be seen shortly.

To see how the treatment of general mixtures differs from the $J=2$
case, consider the case $J=3$.  Instead of 
Equation (\ref{mixedmgf}), we now have    
\begin{equation}\label{j_is_3}
M(t|x) = \lambda_1 e^{tm_1(x)}M_1(t) + \lambda_2 e^{tm_2(x)}M_2(t) +
\lambda_3 e^{tm_3(x)}M_3(t), \quad \lambda_1 + \lambda_2 + \lambda_3 = 1.
\end{equation}
Wlog, suppose $m_1(x_0) > m_2(x_0) > m_3(x_0)$ at a point $x_0$ in
$\R^k$.  
Take a point $x'$ in the neighborhood of $x_0$ and consider the case $m_1(x') - m_1(x_0) \geq 0$ (if this term is
negative, the roles of $m_1$ and $m_3$ get interchanged).  The
method used in the proof of Lemma \ref{lem:slope} to identify the $J=2$ model still
works for the slopes
of $m_1$ and $m_3$.  Following the
proof, take the ratio of the conditional moment generating functions
at $x_0$ and a point in its neighborhood, $x'$, say, then take its
logarithm followed by a normalization by $t$:  
\begin{align*}
\frac 1 t \log \left ( \frac {M(t|x')} {M(t|x_0)} \right ) 
&= \frac 1 t \log \left ( \frac 
  { \lambda_1 e^{tm_1(x')}M_1(t) + \lambda_2
    e^{tm_2(x')}M_2(t) +  \lambda_3 e^{tm_3(x')}M_3(t)}
  { \lambda_1 e^{tm_1(x_0))}M_1(t) + \lambda_2
    e^{tm_2(x_0)}M_2(t) +  \lambda_3 e^{tm_3(x_0))}M_3(t)} \right)
\\
&= \frac 1 t \log \left ( \frac 
  { e^{t[m_1(x') - m_1(x_0)]} + \frac {\lambda_2}{\lambda_1} 
    e^{t[m_2(x') - m_1(x_0)]} \frac {M_2(t)}{M_1(t)}   
    +  \frac {\lambda_3}{\lambda_1} 
    e^{t[m_3(x') - m_1(x_0)]} \frac {M_2(t)}{M_1(t)}       } 
  { 1 + \frac{\lambda_2}{\lambda_1}
    e^{t[m_2(x_0) - m_1(x_0)]}\frac {M_2(t)}{M_1(t)}
    +  \frac {\lambda_3}{\lambda_1} 
    e^{t[m_3(x_0) - m_1(x_0)]} \frac {M_2(t)}{M_1(t)}        }
   \right ).  
\end{align*}
Suppose the ratios of  $M_1(t)$, $M_2(t)$ and $M_3(t)$ do not
explode exponentially, and $m_1$, $m_2$ and $m_3$ are continuous so
that $m_2(x') - m_1(x_0)$ and $m_3(x') - m_1(x_0)$ are negative.  Then
as $t$ approaches to infinity, the above expression approaches to the
slope $m_1(x') - m_1(x_0)$ if it is
non-negative (though it yields the identical result if the slope is
negative as well,
as seen in the proof of Lemma \ref{lem:slope}).  Similarly, by
taking the limit $t \rightarrow - \infty$, the slope of $m_3$ is identified.    
This argument, however, leaves the slope of the middle term $m_2$ undetermined. And in
the general case of $J \geq 3$, $J-2$ slopes remain to be
determined.  The approach in Lemma \ref{lem:slope} does fall short of
achieving its goal when applied to models with $J \geq 3$.

It is, however, possible to identify the slope of $m_2$ by proceeding
as follows.  Suppose, evaluated at
$x$, the regression functions satisfy the inequality $m_1(x) >
m_2(x) > m_3(x)$.  Pick a point $y$ in a neighborhood of $x$.  Multiply
(\ref{j_is_3}) by $e^{-t{[m_1(x) - m_1(y)]}}$ to obtain: 
\begin{equation}\label{purgex}
e^{-t{[m_1(x) - m_1(y)]}}M(t|x) = \lambda_1 e^{tm_1(y)}M_1(t) +
\lambda_2 e^{t\{m_2(x) - [m_1(x) - m_1(y)]\}}M_2(t) +
\lambda_3 e^{t\{m_3(x) - [m_1(x) - m_1(y)]\}}M_3(t).
\end{equation}
This purges $x$ out of the first term on the right hand side.  Note
$[m_1(x) - m_1(y)]$ can be identified by applying the argument in Lemma \ref{lem:slope}
to the $J = 3$ model (\ref{j_is_3}), as demonstrated above.  Therefore
the left hand side of the above equation is known.
  
The above step enables us to eliminate all unknown parameters
associated with the first mixture component.  To see this, suppose
$m_j, j = 1,2,3$ are differentiable in at least one of the 
$k$ elements of $x = (x^1,x^2,...,x^k)$.  In what follows we assume
that it is differentiable in the first element $x^1$ without loss of
generality.  As before, we assume that this is a prior knowledge.    Let $D_x$
denote the partial differentiation operator with respect to the first component of x, 
i.e. $D_xf(x) =  \frac \partial {\partial x^1}f(x)$.  
Differentiating both sides of the above equation by $x^1$ and rearranging, 
\begin{align}\label{stage1}
D_x\left [e^{-t{[m_1(x) -
      m_1(y)]}}M(t|x)\right ]&  =
t\lambda_2 [D_xm_2(x) - D_xm_1(x)] e^{t\{m_2(x) - [m_1(x) - m_1(y)]\}}
      M_2(t)
\\
\nonumber
 &+
t\lambda_3 [D_xm_3(x) - D_xm_1(x)] e^{t\{m_3(x) - [m_1(x) - m_1(y)]\}}M_3(t). 
\end{align} 
Note that operating $D_x$ eliminates the unknown function $M_1(t)$ out of the
right hand side of (\ref{stage1}).  We now have
$$
\frac {\partial}{\partial t} \log\left(D_x\left [e^{-t{[m_1(x) -
      m_1(y)]}}M(t|x)\right ]\right) = \frac {A_1}{A_2}, 
$$
say, where
\begin{align*}
A_1 &= \frac 1 t + \{m_2(x) - [m_1(x) - m_1(y)]\} 
+  \frac {\frac {\partial}{\partial t} M_2(t)}{M_2(t)} 
\\
&+ \frac {\lambda_3}
{\lambda_2} \frac {[D_xm_3(x) - D_xm_1(x)]}{ [D_xm_2(x) - D_xm_1(x)]} 
\left (\frac 1 t + \{m_3(x) - [m_1(x) - m_1(y)]\}\right ) e^{t[m_3(x)
  - m_2(x)]}\frac {M_3(t)}{M_2(t)} 
\\
&+ \frac {\lambda_3}
{\lambda_2} \frac {[D_xm_3(x) - D_xm_1(x)]}{ [D_xm_2(x) - D_xm_1(x)]} 
e^{t[m_3(x) - m_2(x)]}\frac {\frac {\partial}{\partial t}M_3(t)}{M_2(t)}
\end{align*}
and 
$$
A_2 = 1 +  \frac {\lambda_3}
{\lambda_2} \frac {[D_xm_3(x) - D_xm_1(x)]}{ [D_xm_2(x) - D_xm_1(x)]} 
e^{t[m_3(x) - m_2(x)]}\frac {M_3(t)}{M_2(t)}.
$$
Note that the factor $D_xm_2(x) - D_xm_1(x)$ is non-zero if the two regression
functions are not parallel at $x$, which makes the division by the
factor valid.   As far as $\frac
{M_3}{M_2}$ and  $\frac {D_xM_3}{M_2}$ do not 
explode exponentially, all the terms above except for the second and third terms
of $A_1$ and the first term of $A_2$ converge to zero as $t
\rightarrow \infty$.  It follows that 
\begin{equation}\label{difft}
\lim_{t \rightarrow \infty} \left \{\frac {\partial}{\partial t} \log\left(D_x\left [e^{-t{[m_1(x) -
      m_1(y)]}}M(t|x)\right ]\right) \right \} =  \{m_2(x) - [m_1(x) - m_1(y)]\} 
+  \frac {\frac {\partial}{\partial t} M_2(t)}{M_2(t)}. 
\end{equation}
The only unknown component in the above equation is $\frac {\frac {\partial}{\partial t}
  M_2(t)}{M_2(t)}$, but this term depends only on $t$, so
it can be differenced out: repeat the above argument with replacing
$x \in \R^k$ with a point $z \in \R^k$ so close to $x$ that  $m_1(z) >
m_2(z) > m_3(z)$.  This yields    
$$
\lim_{t \rightarrow \infty} \left \{\frac {\partial}{\partial t} \log\left(D_z\left [e^{-t{[m_1(z) -
      m_1(y)]}}M(t|z)\right ]\right) \right \} =  \{m_2(z) - [m_1(z) - m_1(y)]\} 
+  \frac {\frac {\partial}{\partial t} M_2(t)}{M_2(t)}. 
$$  
The slope of $m_2$ is 
$$
m_2(x) - m_2(z) = \lim_{t \rightarrow \infty} \frac {\partial}{\partial t} \log \left(\frac {D_x\left [e^{-t{[m_1(x) -
      m_1(y)]}}M(t|x)\right]}{D_z\left [e^{-t{[m_1(z) -
      m_1(y)]}}M(t|z)\right ]}\right) + (m_1(x) -
m_1(z)).
$$
The terms such as $m_1(x) - m_1(z)$ on the right hand side are
identified by the method developed in Lemma \ref{lem:slope}, as noted earlier.  The
equation above shows the identifiability of the slope of $m_2$.      

We have already noted that the identifiability of the slope of $m_3$
basically follows from Lemma \ref{lem:slope}.  It is nevertheless
instructive to present an alternative way to identify it by carrying on the
foregoing analysis one step further.
This will illustrate the basic idea behind our
general identification theory for $J \in \mathbb{N}$.     


Let us return to Equation (\ref{stage1}), changing the notation and writing $x_a$ for $x$, $x_b$ for $y$.  As before, $\Delta_{ab}f$ stands for $f(x_a) - f(x_b)$.  The first step is to purge
$x_a$ from the first term on the right hand side, as we did in Equation
(\ref{purgex}), as follows:
\begin{align*}
\frac {e^{-t[\Delta_{ab}m_2 - \Delta_{ab}m_1]}}{t[D_{x_a}m_2(x_a) - D_{x_a}m_1(x_a)]}
D_{x_a}\left [e^{-t{\Delta_{ab}m_1}}M(t|x_a)\right ]&  =
\lambda_2 e^{t\{m_2(x_b)\}}M_2(t)
\\
\nonumber
 &+
\lambda_3 \frac{D_{x_a}m_3(x_a) - D_{x_a}m_1(x_a)}{D_{x_a}m_2(x_a) -
  D_{x_a}m_1(x_a)} e^{t\{m_3(x_a) - \Delta_{ab}m_2\}}M_3(t),   
\end{align*}
which yields
\begin{align}\label{stage1for3}
D_{x_a}\left[\frac {e^{-t[\Delta_{ab}m_2 - \Delta_{ab}m_1]}}{t[D_{x_a}m_2(x_a) - D_{x_a}m_1(x_a)]}
D_{x_a}\left [e^{-t{\Delta_{ab}m_1}}M(t|x_a)\right ] \right]&  
\\
\nonumber
=
\lambda_3 \left \{\left[D_{x_a} + t(D_{x_a}m_3(x_a)
    - D_{x_a}m_2(x_a))\right] \frac{D_{x_a}m_3(x_a) - 
    D_{x_a}m_1(x_a)}{D_{x_a}m_2(x_a) - 
  D_{x_a}m_1(x_a)} \right \}
&
e^{t\{m_3(x_a) - \Delta_{ab}m_2\}}M_3(t).   
\end{align}
Notice that again this eliminates an unknown moment generating
function, this time  $M_2(t)$.  Differentiating the
above expression with respect to $t$ and following the line of argument presented
above, the slope of $m_3$ is given by 
$$
\Delta_{ac}m_3 = \lim_{t \rightarrow \infty} \frac {\partial}{\partial t} \log \left[
\left(
\frac
{
\frac {e^{-t[\Delta_{ab}m_2 - \Delta_{ab}m_1]}}{t[D_{x_a}m_2(x_a) - D_{x_a}m_1(x_a)]}
D_{x_a}\left [e^{-t{\Delta_{ab}m_1}}M(t|x_a)\right ]
}
{
\frac {e^{-t[\Delta_{cb}m_2 - \Delta_{cb}m_1]}}{t[D_{x_c}m_2(x_c) - D_{x_c}m_1(x_c)]}
D_{x_c}\left [e^{-t{\Delta_{cb}m_1}}M(t|x_c)\right ]
}
\right) \right]
+ 
\Delta_{ac}m_2.
$$     

Let us now turn to the identifiability of the general model
(\ref{Jcond}) for a generic $J,$ at a point $x_a \in \R^k$. The general setting is the same as in Section \ref{sec:switch}: the first $k*$ elements $x^1,...,x^{k*}$ of the vector of covariates $x$ are continuous covariates, and we will again use local variations in $x^1$.
\begin{ass}\label{ass:jmodel}
For some $\delta > 0$,  
\begin{enumerate}[\textup{(}i\textup{)}]  

\item\label{jmodel-ind} $\epsilon_j|x \sim F_j, j = 1,...,J$ at all $x
  \in N^1(x_a,\delta)$ where $F_j, j = 1,...,J$ do not depend on the value of
  $x$;  

\item\label{jmodel-cont} $m_j, j = 1,...,J$ are continuous in $x^1$at $x_a$;

\item\label{jmodel-dble} $m_j, j = 1,...,J$ are J times differentiable on $B(x_a,\delta)$ 
at least in one of the $k^*$ continuous covariates of $x$;

\end{enumerate}
\end{ass}
Though Condition \eqref{jmodel-dble} imposes $J$-th order differentiability in one argument for simplicity of presentation, this is not essential: it is sufficient to assume that there exists at least one multi-index $\alpha := (\alpha_1,...,\alpha_k) \in \mathbb{Z}^k, \alpha_1 + \cdots \alpha_k = J$ such that the derivative $D^\alpha m(x) = \left(\frac{\partial}{\partial x^1}\right)^{\alpha_1} \cdots \left(\frac{\partial}{\partial x^k}\right)^{\alpha_k} m(x) $ is well-defined for every $x$ in  $B(x_a,\delta)$.  See Remark \ref{rem:dble} for further discussions.   


The independence assumption (\ref{jmodel-ind}) enables us to write the observable conditional
distribution in the form (\ref{Jcond}).  The continuity assumption
(\ref{jmodel-cont}) was also assumed in Lemma \ref{lem:slope}.  The
differentiability condition (\ref{jmodel-dble}) may not be essential
for the proof of the Lemma,
though replacing derivatives in the proof with differences leads to
extremely complex case-by-case analysis. Note that differentiability
in only one element of $x$ suffices.  Without loss of generality in what follows we
assume that the $m_j, j=1,...,J$ are differentiable in the first element
$x^1$.  Recall that $D_1$ is the differentiation operator with respect
to $x^1$. From now on, we will use the notation $$m_{k,j}(x)=m_k(x)-m_j(x).$$ 

\begin{ass}\label{ass:jmodel-diff}

\begin{enumerate}[\textup{(}i\textup{)}]  

\item\label{jmodel-apart} $\displaystyle\min_{k \neq j} |m_{k,j}(x_a)| > \Delta $, $\Delta>0$;

\item\label{jmodel-diff}  $D_1 m_{j}(x_a), j = 1,...,J$ takes $J$
  distinct values in $\R$; 
  
\item\label{jmodel-domain} The domains of $M_1(t)$ and $M_2(t)$ are $(-\infty,\infty)$;

\item\label{jmodel-lim} For some $\epsilon > 0$ , $ \displaystyle\lim_{t \rightarrow \infty}  e^{t(\epsilon - \Delta) }
\frac{M_j(t)}{M_k(t)} = 0$ and $ \displaystyle\lim_{t \rightarrow \infty}  e^{t(\epsilon - \Delta) }
\frac{\frac {\partial}{\partial t} M_j(t)}{M_k(t)} = 0$ for all $k,j=1,...,J$.

\end{enumerate}
\end{ass}

Part (\ref{jmodel-apart}) of the assumption is not restrictive.  
As before, our goal is to establish identification up to labeling, so
we can assume that 
\begin{equation}\label{ordering}
m_1(x_a) > m_2(x_a) > ... > m_{J}(x_a)     
\end{equation}
without loss of generality: this does not impact the validity of Assumption \ref{ass:jmodel-diff}. Part (\ref{jmodel-diff}) is an
infinitesimal version of the non-parallel regression function
conditions used in the previous sections. 

Under these assumptions, we first prove identifiability of the slope $\Delta_{ab}m_1$, using the method developed in Section \ref{switch} , for all $x_b$ in a chosen neighborhood of $x_a$. Note that we know $\lambda_1 \neq 0$. 
By the continuity and differentiability assumptions (Assumption \ref{ass:jmodel} (\ref{jmodel-cont}) and (\ref{jmodel-dble})), there exists $\delta' > 0$, $\delta' < \delta$, such that for all $x_b \in  N^1(x_a, \delta')$ and for all $j=1,...,J$, $|m_j(x_b)-m_j(x_a)|<\frac{\epsilon}{2} $, and $D_1 m_j(x_b), j=1,...,J$ take J distinct values. Here we use the fact that twice differentiability of the regression functions implies that they are $\mathcal{C}^1$.
Then, as in the proof of Lemma \ref{lem:slope}, in the case $ m_1(x_b)-m_1(x_a)>0$, we write
$$
\frac 1 t \log \left ( \frac {M(t|x_b)} {M(t|x_a)} \right ) 
= \frac 1 t \log \left ( \frac 
  { e^{t[m_1(x_b) - m_1(x_a)]} + \sum_{j=2}^{J} \frac{\lambda_j}{\lambda_1}\frac{M_j(t)}{M_1(t)}e^{t[m_j(x_b) - m_1(x_a)]}}
  {1 + \sum_{j=2}^{J} \frac{\lambda_j}{\lambda_1}\frac{M_j(t)}{M_1(t)}e^{t[m_j(x_a) - m_1(x_a)]}} \right),
$$
and in the case $ m_1(x_b)-m_1(x_a)<0$ , we write 
$$
\frac 1 t \log \left ( \frac {M(t|x_b)} {M(t|x_a)} \right ) 
= \frac 1 t \log \left ( \frac 
  { 1 + \sum_{j=2}^{J} \frac{\lambda_j}{\lambda_1}\frac{M_j(t)}{M_1(t)}e^{t[m_j(x_b) - m_1(x_b)]}}
  {e^{t[m_1(x_a) - m_1(x_b)]} + \sum_{j=2}^{J} \frac{\lambda_j}{\lambda_1}\frac{M_j(t)}{M_1(t)}e^{t[m_j(x_a) - m_1(x_b)]}} \right).
$$
Similarly, since $m_j(x_b) - m_1(x_a)$, $m_j(x_a) - m_1(x_a)$, $m_j(x_b) - m_1(x_b)$, and $m_j(x_a) - m_1(x_b)$ are less than $\epsilon - \Delta$, this gives in both cases,
$$  \forall x_b \in U, \lim_{t \rightarrow \infty} \frac 1 t \log \left ( \frac {M(t|x_b)} {M(t|x_a)} \right )  = \Delta_{ba}m_1. $$ Hence the slope $\Delta_{ab}m_1$ is identifiable for all $x_b \in N^1(x_a, \delta')$.
\\

Now we focus on the identifiability of the slopes $\Delta_{ab}m_j$ for all $j=2,...,J$ and $x_b$ in an appropriate neighborhood of $x_a$.

Pick a point $x_b \neq x_a$ in $\R^k$. 
For notational convenience, define the operator $A(x_a,x_b,t,k)$  
\begin{equation}\label{operator}
A(x_a,x_b,t,k) (f)(x_a) = \frac{\partial}{\partial
  x_a^1}\left[\frac{e^{-t[\Delta_{ab}m_k -
      \Delta_{ab}m_{k-1}]}}{R_{k}(t,x_a)} f(x_a)\right], k = 2,3,...,J.   
\end{equation}
where $f: \R^k \rightarrow \R$ is a function that is differentiable in
its first argument, and $R_k(t,x)$ is a (rational) function in $t$. Its precise 
definition will be given shortly. The operator $A(x_a,x_b,t,k)$
generalizes the procedure performed on $D_{x_a}\left
  [e^{-t{\Delta_{ab}m_1}}M(t|x_a)\right ]$ in Equation
(\ref{stage1for3}) to eliminate unknown parameters in (\ref{stage1}). 
Operate $A(x_a,x_b,t,k)$, $k = 2,3,...$ sequentially on $D_{x_a}\left
  [e^{-t{\Delta_{ab}m_1}}M(t|x_a)\right ]$ to define the expressions      
\begin{equation}\label{Qdefinition}
Q_k(x_a,t) = A(x_a,x_b,t,k-1) A(x_a,x_b,t,k-2) \cdots A(x_a,x_b,t,2)\frac{\partial}{\partial
  x_a^1} [e^{-t\Delta_{ab}m_1}M(t|x_a)], \quad k = 2,3,...,J.
\end{equation}
By construction $Q_k(x_a,t)$ satisfies the following recursive formula:
\begin{equation}\label{eq:recur}
Q_{k+1}(x_a,t) =  A(x_a,x_b,t,k)Q_{k}(x_a,t), \quad Q_2(x_a,t) = \frac{\partial}{\partial
	x_a^1} [e^{-t\Delta_{ab}m_1}M(t|x_a)]. 
\end{equation}
The definition of the operator $A(x_a,x_b,t,k)$, as explained further later,
is motivated by two facts: (i) the
factor $e^{-t[\Delta_{ab}m_k - \Delta_{ab}m_{k-1}]}$ purges $x_a$ out
of the exponent in the leading term of $Q_{k}(x_a,t)$ and (ii)
division by the polynomial $R_k(t,x_a)$ then makes the leading term
$\lambda_ke^{-tm_k(x_b)}M_k(t)$, which is completely free from $x_a$ and
therefore eliminated by $D_{x_a}$.  Once this is
done, taking the log-derivative
with respect to $t$ as in (\ref{difft}) terms and taking the limit $t
\rightarrow \infty$ yields
$\Delta_{ab}m_{k}$ up to an unknown additive factor
$\frac{\frac{\partial}{\partial t}M_k(t)}{M_k(t)}$, which can be
differenced out.  

Subsequent arguments establish the identifiability of  $\Delta_{ab}m_k, k =
  2,...,J$ for all $x_b$ in a
  neighborhood of $x_a$.  We proceed in two steps.  Step 1 shows that, with an
  appropriate choice of $R_k(t,x_a)$ in (\ref{operator}),   
$Q_k(x_a,t), k = 2,3,...,J$ have following representations:
\begin{equation}\label{qequation1}
Q_{k}(x_a,t) =  \sum_{j=k}^J
\lambda_j R_k^j(t,x_a)
  e^{t[m_j(x_a) -
  \Delta_{ab}m_{k-1}]}M_j(t), \quad k = 2,3,...,J, 
\end{equation}
where $R_k^j(t,x_a), k = 2,3...,J, j = k,k+1,...,J$ are polynomials in
$t$ with the property that $R_k^k(t,x_a) = R_k(t,x_a)$; a formal
definition of these polynomials are provided later.  
The representations (\ref{qequation1}) are useful, partly because the unknown functions $M_j(t), j = 1,..,k-1$
do not appear in $Q_{k}(x_a,t)$.  Step 2 uses the representations
(\ref{qequation1}) to show that it is possible to identify the slope $\Delta_{ab}m_k, k =
  2,...,J$ using the knowledge of $\Delta_{ab}m_1$, $Q_k(x_a,t)$ and
  $Q_k(x_b,t)$, $k = 2,...,J$ for all $x_b$ in a
  neighborhood of $x_a$.  

The identifiability of the rest of the model (at $x_a$) is then established using the
knowledge of $\Delta_{ab}m_k, k = 1,2,...,J$ and conditional moments
of $z$ given $x_a$.  

Let us start with Step 1, which derives the representation
(\ref{qequation1}) and will be summarized in Lemma \ref{lem:qrep}.  Note that the
definitions of the polynomials
$R_k(t,x_a), k = 2,...,J$ and $R_k^j(t,x_a), k = 2,...,J, j =
k,k+1,...,J$ are given in the course of our derivation.

\noindent {\bf Step 1:}  \quad Start from $k = 2$.  Define 
$$
R_2^j(t,x_a) = tD_{x_a}(m_j(x_a) - m_1(x_a)), j = 2,...,J,
$$
then       
\begin{align*}
Q_{2}(x_a,t) &= \frac{\partial}{\partial
  x_a^1} [e^{-t\Delta_{ab}m_1}M(t|x_a)] 
\\
&= \sum_{j=2}^J
\lambda_j(tD_{x_a}[m_j(x_a) - m_1(x_a)])e^{t[m_j(x_a) - \Delta_{ab}m_1]}M_j(t). 
\\
&= \sum_{j=2}^J
\lambda_jR_2^j(t,x_a)e^{t[m_j(x_a) - \Delta_{ab}m_1]}M_j(t), 
\end{align*}
yielding the desired representation for the case of $k = 2$.
Let $R_2(t,x_a)$ (used in the definition of
$A(x_a,x_b,t,2)$) be $R_2^2(t,x_a) = tD_{x_a}[m_2(x_a) -
m_1(x_a)]$. With this choice 
\begin{align*}
Q_{3}(x_a,t) &= A(x_a,x_b,t,2)Q_2(x_a,t)
\\
&=  \frac{\partial}{\partial
  x_a^1} \left [\frac{e^{-t[\Delta_{ab}m_2 - \Delta_{ab}m_1]}}{R_2(t,x_a)}Q_2(x_a,t)\right]
\\
&=  \sum_{j=3}^J
\lambda_j 
\left \{
D_{x_a}\frac{R_2^j(t,x_a)}{R_2(t,x_a)} + t
  \frac{R_2^j(t,x_a)}{R_2(t,x_a)}D_{x_a}[m_j(x_a) - m_2(x_a)]
\right \}
  e^{t[m_j(x_a) -
  \Delta_{ab}m_2]}M_j(t)
\\
&=  \sum_{j=3}^J
\lambda_j R_3^j(t,x_a)
  e^{t[m_j(x_a) -
  \Delta_{ab}m_2]}M_j(t), \quad
  \text{ say}, 
\end{align*}
and the $j=2$ term in the summation drops out.  Moreover, 
this result implies that $R_3(x_a,t)$ should be 
$$
R_3(x_a,t) = R_3^3(x_a,t) = D_{x_a}\frac{R_2^3(t,x_a)}{R_2(t,x_a)} + t
  \frac{R_2^3(t,x_a)}{R_2(t,x_a)}D_{x_a}[m_3(x_a) - m_2(x_a)]. 
$$ 
Note that the above
step requires that $R_2(t,x_a)$ is non-zero: this issue will be
discussed shortly.

The fact that the rest of $Q_{k}(x_a,t), k = 4,...,J$ have the
representations as in (\ref{qequation1}) can be shown by induction: suppose
(\ref{qequation1}) holds for $k = h$, that is
\begin{equation*}
Q_{h}(x_a,t) =  \sum_{j=h}^J
\lambda_j R_h^j(t,x_a)
  e^{t[m_j(x_a) -
  \Delta_{ab}m_{h-1}]}M_j(t). 
\end{equation*}
Define 
$$
R_{h+1}^j(t,x_a) = D_{x_a^1}\left(\frac{R_h^j(t,x_a)}{R_h^h(t,x_a)}\right)
+ t\frac{R_h^j(t,x_a)}{R_h^h(t,x_a)}D_{x_a^1}[m_j(x_a) - m_h(x_a)],
\qquad j = h+1,...,J.
$$
In what follows we sometimes write 
$$
R_k^j := R_k^j(t,x)
$$
and
$$
m_{k,l} := m_k(x) - m_l(x).
$$
as short hand.
Let $R_h(t,x_a) = R_h^h(t,x_a)$, then using this and the definition of
the operator $A(x_a,x_b,t,h)$ in (\ref{operator}), obtain
\begin{align*}
Q_{h+1}(x_a,t) &= A(x_a,x_b,t,h)Q_h(x_a,t)
\\
&= \sum_{j=h}^J
\lambda_j  A(x_a,x_b,t,h)R_h^j(t,x_a)
  e^{t[m_j(x_a) -
  \Delta_{ab}m_{h-1}]}M_j(t)
\\
& = 
D_{x_a}\lambda_he^{-tm_h(x_b)}M_h(t) + D_{x_a}\sum_{j=h+1}^J
\lambda_j  \frac{R_h^j(t,x_a)}{R_h^h(t,x_a)}
  e^{t[m_j(x_a) -
  \Delta_{ab}m_h]}M_j(t)
\\
& = \sum_{j=h+1}^J
\lambda_j  
\left\{
D_{x_a}\left(\frac{R_h^j(t,x_a)}{R_h^h(t,x_a)}\right)
+ t\frac{R_h^j(t,x_a)}{R_h^h(t,x_a)}D_{x_a}[m_j(x_a) - m_h(x_a)]
\right \}
  e^{t[m_j(x_a) -
  \Delta_{ab}m_h]}M_j(t)
\\
& = \sum_{j=h+1}^J
\lambda_j
R_{h+1}^j(t,x_a)
  e^{t[m_j(x_a) -
  \Delta_{ab}m_h]}M_j(t),
\end{align*}
which is the desired result.  The next lemma summarizes the foregoing
argument.  Notice that it relies on the assumption that
$R_k(t,x_a) = R_k^k(t,x_a), k = 2,3,...J$ are non-zero, and later we
show that the set 
\begin{equation}\label{setS}
S(x_a) = \{t|R_k(t,x_a)
\neq 0 \text{ for all } k\}
\end{equation}
is non-empty.
\begin{lem}\label{lem:qrep}
Define
$
R_2^j(t,x_a) = tD_{x_a}(m_j(x_a) - m_1(x_a)), j = 2,...,J,
$
and 
$
R_{k+1}^j(t,x_a) = D_{x_a^1}\frac{R_k^j(t,x_a)}{R_k^k(t,x_a)}
+ t\frac{R_k^j(t,x_a)}{R_k^k(t,x_a)}D_{x_a^1}[m_j(x_a) - m_k(x_a)],
k = 3,...,J, j = k+1,...,J.
$
Let $R_k(t,x_a) = R_k^k(t,x_a), k = 2,...,J$ in (\ref{operator}). 
Then
$
Q_k(x_a,t) = A(x_a,x_b,t,k-1) A(x_a,x_b,t,k-2) \cdots A(x_a,x_b,t,2) D_{x_a^1}[e^{-t\Delta_{ab}m_1}M(t|x_a)], k = 2,...,J
$
have the representations (\ref{qequation1}) on $S(x_a)$.  
\end{lem}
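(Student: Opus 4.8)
The plan is to prove the representation \eqref{qequation1} by induction on $k$, formalizing the computation already carried out for $k=2$ and $k=3$ in the text. Throughout I would work on the set $S(x_a)$ defined in \eqref{setS}, where every $R_k(t,x_a)=R_k^k(t,x_a)$ is nonzero, so that each operator $A(x_a,x_b,t,k)$ in \eqref{operator} is well defined; the nonemptiness of $S(x_a)$ is deferred, as the text announces, and plays no role in the algebraic identity itself.

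For the base case $k=2$ I would expand $e^{-t\Delta_{ab}m_1}M(t|x_a)$ using \eqref{Jcond}, obtaining $\sum_{j=1}^J \lambda_j e^{t[m_j(x_a)-\Delta_{ab}m_1]}M_j(t)$, and then apply $D_{x_a^1}=\frac{\partial}{\partial x_a^1}$ term by term. Since $M_j(t)$ does not depend on $x_a$ and $\Delta_{ab}m_1=m_1(x_a)-m_1(x_b)$, the chain rule produces the factor $tD_{x_a^1}[m_j(x_a)-m_1(x_a)]$ in front of each exponential, and the $j=1$ summand vanishes because $D_{x_a^1}[m_1(x_a)-m_1(x_a)]=0$. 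Recognizing $tD_{x_a^1}[m_j(x_a)-m_1(x_a)]=R_2^j(t,x_a)$ gives $Q_2(x_a,t)=\sum_{j=2}^J\lambda_j R_2^j(t,x_a)e^{t[m_j(x_a)-\Delta_{ab}m_1]}M_j(t)$, which is \eqref{qequation1} at $k=2$ and matches the initialization in \eqref{eq:recur}.

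For the inductive step, assume \eqref{qequation1} holds at $k=h$ and apply $Q_{h+1}=A(x_a,x_b,t,h)Q_h$ from \eqref{eq:recur}. First I would multiply $Q_h(x_a,t)$ by $e^{-t[\Delta_{ab}m_h-\Delta_{ab}m_{h-1}]}/R_h^h(t,x_a)$: this collapses the exponent from $m_j(x_a)-\Delta_{ab}m_{h-1}$ to $m_j(x_a)-\Delta_{ab}m_h$ and replaces $R_h^j$ by $R_h^j/R_h^h$. The crucial observation is that the leading ($j=h$) term becomes $\lambda_h e^{t m_h(x_b)}M_h(t)$, which is \emph{free of $x_a$} since $m_h(x_a)-\Delta_{ab}m_h=m_h(x_b)$ and $R_h^h/R_h^h=1$; hence it is annihilated by the outer $D_{x_a^1}$. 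For each remaining $j\geq h+1$, the product rule, together with $D_{x_a^1}\Delta_{ab}m_h=D_{x_a^1}m_h(x_a)$, produces the factor $D_{x_a^1}\!\left(\frac{R_h^j}{R_h^h}\right)+t\frac{R_h^j}{R_h^h}D_{x_a^1}[m_j(x_a)-m_h(x_a)]$, which is precisely $R_{h+1}^j(t,x_a)$ by the definition in the lemma. Summing over $j=h+1,\dots,J$ delivers \eqref{qequation1} at $k=h+1$, closing the induction.

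I expect the main obstacle to be bookkeeping rather than conceptual: keeping the exponent shift $\Delta_{ab}m_{h-1}\mapsto\Delta_{ab}m_h$ consistent with the rescaling built into $A(x_a,x_b,t,h)$, and verifying at each stage that the leading summand is genuinely purged of $x_a$ so that $D_{x_a^1}$ eliminates it and lowers the summation index by one. The only analytic prerequisite is that the divisions by $R_h^h(t,x_a)$ be legitimate, which is exactly why the statement is confined to $t\in S(x_a)$; I would note explicitly that the nonemptiness of $S(x_a)$ is established separately and is irrelevant to the recursive identity being proved here.
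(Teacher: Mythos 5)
Your proposal is correct and follows essentially the same route as the paper: the paper establishes the representation by direct computation for $k=2$ and $k=3$ and then runs exactly the induction you describe, with the key point being that after multiplying by $e^{-t[\Delta_{ab}m_h-\Delta_{ab}m_{h-1}]}/R_h^h(t,x_a)$ the $j=h$ summand reduces to $\lambda_h e^{tm_h(x_b)}M_h(t)$, which is free of $x_a$ and hence annihilated by $D_{x_a^1}$, while the product rule on the remaining summands produces precisely the recursive definition of $R_{h+1}^j$. Your handling of the validity of the divisions (restricting to $S(x_a)$ and deferring its nonemptiness to the later proposition) also matches the paper's treatment.
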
  

\medskip

\noindent {\bf{Step 2:}} This step shows that the knowledge of the
function $Q_k(x,t)$ at $x = x_a$ and $x = x_b$ identifies $\Delta_{ab}m_k - \Delta_{ab}m_{k-1}$.  The main result is:
\begin{lem}\label{lem:diffid}
$\forall  x_b \in N^1(x_a, \delta')$,
$$
\lim_{t \rightarrow \infty} \frac{\partial}{\partial
  t} \log \left(\frac {Q_k(x_a,t)} {
    Q_k(x_b,t)} \right)= \Delta_{ab}m_k - \Delta_{ab}m_{k-1}, k = 2,3,...,J.
$$
\end{lem}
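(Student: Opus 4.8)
The plan is to extract the large-$t$ asymptotics of $Q_k(x,t)$ directly from the representation \eqref{qequation1} furnished by Lemma~\ref{lem:qrep}, to isolate a single dominant summand, and then to observe that the only piece of $\tfrac{\partial}{\partial t}\log Q_k(x,t)$ that both survives the limit $t\to\infty$ and fails to cancel between the two evaluation points is the coefficient of $t$ in the leading exponent. Fix a running point $x$ (eventually $x_a$ or $x_b$), with $x_b$ the fixed reference carried by the operators $A(x_a,x_b,t,\cdot)$. In \eqref{qequation1} the $j$-th summand carries the factor $e^{t[m_j(x)-(m_{k-1}(x)-m_{k-1}(x_b))]}$, whose $t$-coefficient differs across $j$ only through $m_j(x)$; by the ordering \eqref{ordering}, continuity, and the separation hypothesis Assumption~\ref{ass:jmodel-diff}\eqref{jmodel-apart}, we have $m_k(x)-m_j(x)>\Delta$ for all $j>k$ and all $x\in N^1(x_a,\delta')$, so the $j=k$ summand is the candidate dominant term.

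First I would verify that the remaining summands are exponentially negligible relative to the $j=k$ term. The ratio of the $j$-th term to the $k$-th is $\tfrac{\lambda_j R_k^j(t,x)}{\lambda_k R_k^k(t,x)}\,e^{t[m_j(x)-m_k(x)]}\tfrac{M_j(t)}{M_k(t)}$; writing $e^{t[m_j(x)-m_k(x)]}=e^{t[m_j(x)-m_k(x)+\Delta-\epsilon]}\,e^{t(\epsilon-\Delta)}$ and invoking Assumption~\ref{ass:jmodel-diff}\eqref{jmodel-lim} for $e^{t(\epsilon-\Delta)}M_j/M_k$, the product tends to $0$, since the surviving exponent $m_j(x)-m_k(x)+\Delta-\epsilon$ is strictly negative and the rational factor $R_k^j/R_k^k$ grows only polynomially (here I restrict to $t\in S(x_a)$, where $R_k^k(t,x)\neq 0$; cf. \eqref{setS}). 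The identical computation with $\tfrac{\partial}{\partial t}M_j$ in place of $M_j$, also covered by Assumption~\ref{ass:jmodel-diff}\eqref{jmodel-lim}, shows that the $t$-derivatives of these cross terms are negligible as well. Consequently, after factoring out the $j=k$ term, $\log Q_k(x,t)=\log\lambda_k+\log R_k^k(t,x)+t\,c_k(x)+\log M_k(t)+\log(1+o(1))$, where $c_k(x):=m_k(x)-m_{k-1}(x)+m_{k-1}(x_b)$ and both $o(1)$ and its $t$-derivative vanish as $t\to\infty$.

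Differentiating in $t$ gives $\tfrac{\partial}{\partial t}\log Q_k(x,t)=c_k(x)+\tfrac{\partial_t R_k^k(t,x)}{R_k^k(t,x)}+\tfrac{\partial_t M_k(t)}{M_k(t)}+o(1)$. The log-derivative of the polynomial $R_k^k$ is $O(1/t)\to 0$, whereas $\tfrac{\partial_t M_k(t)}{M_k(t)}$ has no limit on its own; this is the essential difficulty, and it is precisely why the statement is posed as a ratio. The key point is that $M_k(t)$ depends on neither $x$ nor the reference, so $\tfrac{\partial_t M_k(t)}{M_k(t)}$ is the same function of $t$ in $\tfrac{\partial}{\partial t}\log Q_k(x_a,t)$ and $\tfrac{\partial}{\partial t}\log Q_k(x_b,t)$ and cancels upon subtraction. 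Taking the difference, every $t$-dependent contribution either vanishes in the limit or cancels, leaving only $c_k(x_a)-c_k(x_b)$; the common constant $m_{k-1}(x_b)$ drops out, so this equals $[m_k(x_a)-m_{k-1}(x_a)]-[m_k(x_b)-m_{k-1}(x_b)]=\Delta_{ab}m_k-\Delta_{ab}m_{k-1}$, which is the claim.

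The step I expect to be the main obstacle is not the dominant-balance argument itself but the bookkeeping that legitimizes it: one must show that $R_k^k(t,x)$ is a genuinely nonzero polynomial (so that the logarithm and the division are valid and $S(x_a)$ in \eqref{setS} is nonempty for large $t$), which rests on the distinct-slope condition Assumption~\ref{ass:jmodel-diff}\eqref{jmodel-diff} propagated through the recursion of Lemma~\ref{lem:qrep}, and one must control all cross terms uniformly in $x$ over $N^1(x_a,\delta')$ so that the neighborhood on which the identity holds can be chosen independently of $t$.
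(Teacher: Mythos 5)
Your proposal is correct and follows essentially the same route as the paper's proof: both start from the representation \eqref{qequation1}, isolate the dominant $j=k$ summand using Assumption \ref{ass:jmodel-diff}\eqref{jmodel-lim} (for both $M_j/M_k$ and $\partial_t M_j/M_k$) together with the fact from Proposition \ref{prop:rational} that $R_k^j/R_k^k$ is a ratio of equal-degree polynomials, kill $\partial_t\log R_k^k$ via Proposition \ref{prop:goto0}, and cancel the $x$-independent term $\partial_t M_k(t)/M_k(t)$ by taking the difference between $x_a$ and $x_b$. The only cosmetic difference is that you factor out the leading term and write $\log(1+o(1))$ before differentiating, whereas the paper differentiates the full sum and bounds each term of the resulting ratio directly — but since you explicitly invoke the derivative version of Assumption \ref{ass:jmodel-diff}\eqref{jmodel-lim} to control the $t$-derivatives of the cross terms, this amounts to the same argument.
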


Lemmas \ref{lem:qrep} and \ref{lem:diffid} will then be useful to prove the
identifiability of $\Delta_{ab}m_k, k = 2,...,J$, for all $x_b$ in a neighborhood of $x_a$, since we already identified  $\Delta_{ab}m_1$.  The following
propositions are useful in proving Lemma
\ref{lem:diffid}.
In what follows $\deg_t(f)$ and $\text{lc}_t(f)$ denote the degree and
the leading coefficients of a polynomial $f(t)$ with respect to $t$.

\begin{prop}\label{prop:rational}
Suppose $x \in N^1(x_a, \delta')$.  Then 
$R_k(t,x)$ is a rational function of $t$ for sufficiently large $t$
and takes the following form:  
$$
R_k(t,x) = \frac {P_k(t,x)}{P_{k-1}(t,x)^2}
$$  
where $P_k(t,x), k \geq 3$ are polynomials in $t$ such that
$$
\deg_t(P_k(t,x)) = 2^{k-2} -1 
$$
and
$$
\mathrm{lc}_t(P_k(t,x)) =
(\Pi_{g=1}^{k-1}D_{x}(m_k(x) -
m_g(x)))\Pi_{j=2}^{k-1}\{(\Pi_{h=1}^{j-1}D_{x}(m_j(x) - m_h(x)))^{2^{k
  - j - 1}}\}. 
$$
\end{prop}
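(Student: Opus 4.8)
The plan is to prove the three assertions — the rational form $R_k = P_k/P_{k-1}^2$, the degree $\deg_t P_k = 2^{k-2}-1$, and the stated leading coefficient — simultaneously by induction on $k$, where the $P_k$ are \emph{defined} along the way by the recursion that generates the $R_k$. The crucial observation is that a clean induction cannot be run on the diagonal numerators $P_k := P_k^k$ alone: it must carry the entire family of off-diagonal numerators implicit in the representation \eqref{qequation1}. Accordingly I would strengthen the hypothesis to: for every $j = k,\dots,J$ one has $R_k^j(t,x) = P_k^j(t,x)/P_{k-1}(t,x)^2$ for all large $t$, where the $P_k^j$ are polynomials in $t$ of the common degree $2^{k-2}-1$, with leading coefficients $\ell_k^j := \mathrm{lc}_t(P_k^j)$ given in closed form (the diagonal $j=k$ being the asserted expression). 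Writing $\mu_{ji} := D_x\big(m_j(x)-m_i(x)\big)$, Assumption \ref{ass:jmodel-diff}(\ref{jmodel-diff}) together with continuity of the slopes produces a neighborhood $N^1(x_a,\delta')$ on which every $\mu_{ji}$ with $j\neq i$ is nonzero; this nonvanishing is exactly what keeps each leading coefficient nonzero and hence each asserted degree exact.

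For the base I would note that $R_2^j = t\,\mu_{j1}$, so although its numerator carries a spurious factor $t$, this factor cancels in the ratio $\rho_2^j := R_2^j/R_2^2 = \mu_{j1}/\mu_{21}$; I therefore set $P_2^j := \mu_{j1}$ (degree $0 = 2^0-1$) and $P_2 := \mu_{21}$, after which $R_3^j = P_3^j/P_2^2$ follows by direct computation. The inductive step rests on rewriting the defining recursion via $\rho_k^j = R_k^j/R_k = P_k^j/P_k$: substituting into $R_{k+1}^j = D_x\rho_k^j + t\,\mu_{jk}\,\rho_k^j$ and clearing the common denominator $P_k^2$ yields
\begin{equation*}
P_{k+1}^j = P_k\,D_x P_k^j - P_k^j\,D_x P_k + t\,\mu_{jk}\,P_k^j\,P_k,
\end{equation*}
a genuine polynomial, so $R_{k+1}^j = P_{k+1}^j/P_k^2$ with denominator $P_{(k+1)-1}^2$ as required. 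Since $D_x$ does not raise the $t$-degree, the first two summands have degree at most $2(2^{k-2}-1)=2^{k-1}-2$, while the last has degree $1+2(2^{k-2}-1)=2^{k-1}-1$; hence $\deg_t P_{k+1}^j = 2^{k-1}-1=2^{(k+1)-2}-1$, \emph{provided} its leading coefficient does not vanish.

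That proviso, and the leading-coefficient formula itself, follow from the recursion read off the top-degree term above, namely $\ell_{k+1}^j = \mu_{jk}\,\ell_k^j\,\ell_k^k$ with base $\ell_2^j = \mu_{j1}$. I would solve it by positing $\ell_k^j = W_k\prod_{i=1}^{k-1}\mu_{ji}$ with $W_k$ independent of $j$; using $\mu_{jk}\prod_{i=1}^{k-1}\mu_{ji}=\prod_{i=1}^{k}\mu_{ji}$ the recursion collapses to $W_{k+1}=W_k^2\prod_{i=1}^{k-1}\mu_{ki}$, which is solved by $W_k=\prod_{j=2}^{k-1}\big(\prod_{h=1}^{j-1}\mu_{jh}\big)^{2^{k-j-1}}$, a one-line verification splitting off the $j=k$ factor. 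Specializing to $j=k$ gives $\mathrm{lc}_t(P_k)=\ell_k^k=\big(\prod_{g=1}^{k-1}\mu_{kg}\big)\prod_{j=2}^{k-1}\big(\prod_{h=1}^{j-1}\mu_{jh}\big)^{2^{k-j-1}}$, exactly the claimed expression; and the nonvanishing of all $\mu$'s on $N^1(x_a,\delta')$ shows $\ell_k^j\neq 0$, closing the degree step and simultaneously confirming $R_{k-1}\neq 0$ for large $t$, so every division defining the $R_k$ is legitimate on $S(x_a)$.

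I expect the main obstacle to be precisely this coupling between diagonal and off-diagonal data: the diagonal recursion $\ell_{k+1}^{k+1}=\mu_{k+1,k}\,\ell_k^{k+1}\,\ell_k^k$ feeds on the off-diagonal coefficient $\ell_k^{k+1}$, so one cannot induct on the diagonal formula in isolation and must carry the full family $\{\ell_k^j\}_{j\ge k}$ — guessing the separable form $\ell_k^j=W_k\prod_{i<k}\mu_{ji}$ is what renders the double product with its powers of $2$ tractable. The only other delicate point is the degree bookkeeping ensuring the $x$-derivative terms stay strictly subdominant; here it is essential that the distinct-slope hypothesis forbids any cancellation that would drop $\deg_t P_{k+1}$ below $2^{(k+1)-2}-1$.
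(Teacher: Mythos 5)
Your proof is correct and follows essentially the same route as the paper's: both run an induction on the full family $\{R_k^j\}_{j\ge k}$ to obtain the polynomial recursion $P_{k+1}^j = P_k\,D_xP_k^j - P_k^j\,D_xP_k + t\,D_x(m_j-m_k)\,P_k^jP_k$, read off $\deg_t P_{k+1} = 2\deg_t P_k + 1$ with solution $2^{k-2}-1$, and extract the leading coefficient from $\ell_{k+1}^j = D_x(m_j-m_k)\,\ell_k^j\,\ell_k^k$, with nonvanishing guaranteed by the distinct-slope condition on the neighborhood. The only divergences are cosmetic: the paper pins down the off-diagonal coefficients $\ell_k^j$ via a ``replacement property'' ($P_k^j$ is $P_k^k$ with $m_k$ replaced by $m_j$) where you solve the coupled recursion directly with the separable ansatz $\ell_k^j = W_k\prod_{i<k}D_x(m_j-m_i)$, and your base-case normalization $P_2^j = D_x(m_j-m_1)$ (degree $0$, spurious factor $t$ cancelled) is in fact the one consistent with $R_3^j = P_3^j/P_2^2$, quietly repairing a factor-of-$t^2$ mismatch between the paper's initial condition $P_2^j = t\,D_x(m_j-m_1)$ and its displayed $R_3^j$.
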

The proof of the proposition is given in the Appendix. 

\begin{rem}  The formula for $R_k(t,x)$ given in Proposition
  \ref{prop:rational} and the fact that $P_k(t,x)$ is a polynomial in $t$
 imply that $R_k \neq 0$ for sufficiently large
$t$ for $k = 2,3,...,J$.  Consequently $S(x_a)$ in (\ref{setS}) includes (for example) the set
 $[c,\infty)$ for some constant $c$ and therefore it is not empty.
 This is important in applying Lemma \ref{lem:qrep}.  
\end{rem}
 
\begin{prop}\label{prop:goto0}
$$
\lim_{t \rightarrow \infty} \frac{\partial}{\partial
  t} \log R_k(x,t) = 0
$$
for all $t \in \R$ and $x \in N^1(x_a,\delta')$.
\end{prop}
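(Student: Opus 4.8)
The plan is to read the conclusion off directly from the rational-function description of $R_k(t,x)$ supplied by Proposition \ref{prop:rational}. The governing principle is elementary: the logarithmic $t$-derivative $\frac{\partial}{\partial t}\log R = \partial_t R / R$ of any rational function $R$ of $t$ tends to $0$ as $t \to \infty$, provided its numerator and denominator have genuinely attained degrees, i.e.\ nonvanishing leading coefficients. First I would dispose of the base case $k = 2$, where $R_2(t,x) = R_2^2(t,x) = tD_{x}(m_2(x) - m_1(x))$ is linear in $t$ with nonzero leading coefficient (the non-parallel condition forces $D_x(m_2(x)-m_1(x))\neq 0$), so that $\frac{\partial}{\partial t}\log R_2(t,x) = 1/t \to 0$ at once.

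For $k \geq 3$, Proposition \ref{prop:rational} gives $R_k(t,x) = P_k(t,x)/P_{k-1}(t,x)^2$ with $P_k, P_{k-1}$ polynomials in $t$ (for $t$ large, where $R_k \neq 0$ so the log-derivative is well-defined), whence the logarithmic derivative splits additively as
$$
\frac{\partial}{\partial t}\log R_k(t,x) = \frac{\partial_t P_k(t,x)}{P_k(t,x)} - 2\,\frac{\partial_t P_{k-1}(t,x)}{P_{k-1}(t,x)}.
$$
It therefore suffices to show $\partial_t P(t)/P(t) \to 0$ for each polynomial separately. This is the observation that if $P(t) = a_d t^d + \cdots$ with $a_d \neq 0$, then $\partial_t P(t) = d\,a_d t^{d-1} + \cdots$, so $\partial_t P(t)/P(t)$ is a rational function whose numerator has degree $d-1$ and whose denominator has degree $d$; it behaves like $d/t$ and vanishes in the limit.

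The one hypothesis this argument requires is that $P_k(t,x)$ and $P_{k-1}(t,x)$ actually attain the stated degrees, i.e.\ that their leading coefficients are nonzero on $N^1(x_a,\delta')$. This is precisely the content of the leading-coefficient formula in Proposition \ref{prop:rational}: $\mathrm{lc}_t(P_k(t,x))$ is a product of factors $D_{x}(m_k(x) - m_g(x))$ and positive powers thereof, each nonzero because the non-parallel condition (Assumption \ref{ass:jmodel-diff}(\ref{jmodel-diff})), preserved throughout $N^1(x_a,\delta')$ by the continuity/differentiability argument that keeps the values $D_1 m_j(x)$ distinct, guarantees $D_x(m_k(x)-m_g(x)) \neq 0$. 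Hence $\deg_t P_k(t,x) = 2^{k-2}-1$ and $\deg_t P_{k-1}(t,x) = 2^{k-3}-1$ are genuinely attained, both log-derivative terms in the display tend to $0$, and $\lim_{t\to\infty}\frac{\partial}{\partial t}\log R_k(t,x) = 0$ for every $x \in N^1(x_a,\delta')$.

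I do not anticipate a real obstacle here: the entire statement amounts to the fact that the logarithmic derivative of a rational function decays like a constant over $t$, and the sole nontriviality — nonvanishing of the leading coefficients, so that the degrees are honest — is already packaged in Proposition \ref{prop:rational}. The only care needed is to restrict to large $t$ (equivalently, to the set $S(x_a)$) so that $R_k$ does not vanish and the representation as $P_k/P_{k-1}^2$ is valid.
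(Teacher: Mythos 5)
Your proposal is correct and follows essentially the same route as the paper: write $\frac{\partial}{\partial t}\log R_k = \frac{\partial}{\partial t}\log P_k - 2\frac{\partial}{\partial t}\log P_{k-1}$ via the representation from Proposition~\ref{prop:rational} and observe that the logarithmic derivative of a finite-degree polynomial in $t$ vanishes as $t\to\infty$. Your added care about the nonvanishing leading coefficients (and the explicit $k=2$ base case) is harmless and slightly more explicit than the paper, but not a different argument --- indeed $\partial_t P/P\to 0$ already holds for any nonzero polynomial regardless of whether the nominal degree is attained.
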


\begin{proof}[\textupandbold{Proof of Proposition~\ref{prop:goto0}}]
By the expression of $ R_k(x,t)$ given in Proposition \ref{prop:rational}, 
\begin{align*}
\lim_{t \rightarrow \infty} \frac{\partial}{\partial
  t} \log R_k(x,t)  & = \lim_{t \rightarrow \infty} \frac{\partial}{\partial
  t} \log  \frac {P_k(t,x)}{P_{k-1}(t,x)^2}  
\\
& =  \lim_{t \rightarrow \infty} \frac{\partial}{\partial
  t} \log P_k(t,x) - 2 \lim_{t
  \rightarrow \infty} \frac{\partial}{\partial
  t}  \log P_{k-1}(t,x).
\end{align*}
Since the Proposition shows that $R_k(x,t)$, $P_k(x,t)$ and $P_{k-1}(x,t)$ are well defined
for large $t$, so are the above limits.  But Proposition
\ref{prop:rational} also implies that $P_k(t,x)$ and $P_{k-1}(t,x)$
are polynomials in $t$ with finite 
degree, therefore the two terms are zero. 
\end{proof}

Now we are ready to prove the main result in Step 2, that is, Lemma
\ref{lem:diffid}.   
\begin{proof}[\textupandbold{Proof of Lemma~\ref{lem:diffid}}]

By Lemma \ref{lem:qrep} and Proposition \ref{prop:rational}, 
\begin{equation}
Q_{k}(x_a,t) =  \sum_{j=k}^J
\lambda_j R_k^j(t,x_a)
  e^{t[m_j(x_a) -
  \Delta_{ab}m_{k-1}]}M_j(t), \quad k = 2,3,...,J, 
\end{equation}
holds for sufficiently large $t$.  Then 
\begin{align*}
\frac{\partial}{\partial
  t} Q_k(x_a,t) &=  \sum_{j=k}^J
\lambda_j \left( \frac{\partial}{\partial
  t} R_k^j(t,x_a) + [m_j(x_a) -
  \Delta_{ab}m_{k-1}] R_k^j(t,x_a)  \right)
  e^{t[m_j(x_a) -
  \Delta_{ab}m_{k-1}]}M_j(t)
\\
&+ \sum_{j=k}^J
\lambda_j R_k^j(t,x_a)
  e^{t[m_j(x_a) -
  \Delta_{ab}m_{k-1}]}D_tM_j(t).
\end{align*}
and, for $k \leq J,$
\begin{align*}
\frac{\partial}{\partial
  t} \log(Q_k(x_a,t)) &= \frac{\frac{\partial}{\partial
  t} Q_k(x_a,t)}{Q_k(x_a,t)}
\\
&= 
\frac{
\frac{\frac{\partial}{\partial
  t} R_k(t,x_a)}{R_k(t,x_a)} + m_k(x_a) -
  \Delta_{ab}m_{k-1} + \frac{\frac{\partial}{\partial
  t} M_k(t)}{M_k(t)} 
}
{
1 + \sum_{j=k+1}^J \frac
{\lambda_j}{\lambda_k}\frac{R_k^j(t,x_a)}{R_k(t,x_a)}e^{tm_{j,k}(x_a)}\frac{M_j(t)}{M_k(t)}
}
\\
&+
\frac{
 \sum_{h=k+1}^J
\left[
(m_h(x_a) -
  \Delta_{ab}m_{k-1}) 
 \frac
{\lambda_h}{\lambda_k}\frac{R_k^h(t,x_a)}{R_k(t,x_a)}e^{tm_{h,k}(x_a)}\frac{M_h(t)}{M_k(t)}
\right]
}
{
1 + \sum_{j=k+1}^J \frac
{\lambda_j}{\lambda_k}\frac{R_k^j(t,x_a)}{R_k(t,x_a)}e^{tm_{j,k}(x_a)}\frac{M_j(t)}{M_k(t)}
}
\\
&+
\frac{
 \sum_{h=k+1}^J
\left[
 \frac
{\lambda_h}{\lambda_k}\frac{\frac{\partial}{\partial
  t} R_k^h(t,x_a)}{R_k(t,x_a)}e^{tm_{h,k}(x_a)}\frac{M_h(t)}{M_k(t)}
+
 \frac
{\lambda_h}{\lambda_k}\frac{R_k^h(t,x_a)}{R_k(t,x_a)}e^{tm_{h,k}(x_a)}\frac{\frac{\partial}{\partial
  t} M_h(t)}{M_k(t)}  
\right]
}
{
1 + \sum_{j=k+1}^J \frac
{\lambda_j}{\lambda_k}\frac{R_k^j(t,x_a)}{R_k(t,x_a)}e^{tm_{j,k}(x_a)}\frac{M_j(t)}{M_k(t)}
}.
\end{align*}
 
Using the notation in the proof of
Proposition {\ref{prop:rational}}, for all $h >k$,
\begin{align*}
\frac{R_k^h(t,x_a)}{R_k(t,x_a)} & = \frac{P_k^h(t,x_a)/(P_{k-1}^{k-1}(t,x_a))^2}{P_k^k(t,x_a)(P_{k-1}^{k-1}(t,x_a))^2}
\\
& = \frac{P_k^h(t,x_a)}{P_k^k(t,x_a)}.
\end{align*}
As noted in the Proof of Proposition \ref{prop:rational}, both
$P_k^h(t,x_a)$ and $P_k^k(t,x_a)$ are polynomials in $t$,
$P_k^k(t,x_a) \neq 0$ for sufficiently large $t$, and their degrees are equal. Hence their ratio goes to a constant as t goes to infinity:
$$\lim_{t \rightarrow \infty} \frac{R_k^h(t,x_a)}{R_k(t,x_a)} = c_{h,k,x_a}.$$

For a similar reason, using Proposition \ref{prop:goto0}, 
$$\lim_{t \rightarrow \infty} \frac{\frac{\partial}{\partial
  t} R_k^h(t,x_a)}{R_k(t,x_a)} = 0.$$
Then, using Assumption (\ref{jmodel-diff}) (\ref{jmodel-lim}), since $m_{h,k}(x_a) < -\Delta$, we know that the second and third lines of the expression of  converge to
zero as $t$ goes to $+ \infty$, and we have
\begin{equation}\label{eq:limQ}
\lim_{t \rightarrow \infty} \frac{\partial}{\partial
  t} \log(Q_k(x_a,t)) = m_k(x_a) -
  \Delta_{ab}m_{k-1} + \frac{\frac{\partial}{\partial
  t} M_k(t)}{M_k(t)}.
\end{equation}
Note that \ref{eq:limQ} holds for all $x_b \in \R^{k}$. Let us take $x_b \in N^1(x_a, \delta')$. Note that we can then also write  $\frac{\partial}{\partial
  t} \log(Q_k(x,t))$ taking $x = x_b$: the $\Delta_{ab}m_h$ terms are equal to $0$ and, again since $m_{h,k}(x_b)$ is less than $\epsilon-\Delta$, we have  
 $$\lim_{t \rightarrow \infty} \frac{\partial}{\partial
  t} \log(Q_k(x_b,t)) = m_k(x_b)  + \frac{\frac{\partial}{\partial
  t} M_k(t)}{M_k(t)},$$
so that, for all $x_b \in N^1(x_a, \delta')$, we have then
$$\lim_{t \rightarrow \infty} \frac{\partial}{\partial
  t} \log(\frac{Q_k(x_a,t)}{Q_k(x_b,t)} ) = \Delta_{ab}m_k -
  \Delta_{ab}m_{k-1}.$$

\end{proof}

To sum up, Lemma \ref{lem:diffid} together with the proof of identifiability of $\Delta_{ab}m_1$ allow, by induction, the identifiability of the slopes $\Delta_{ab}m_k$ for all $x_b \in N^1(x_a, \delta')$ and for all $k=1,...,J$:

$$\Delta_{ab}m_1 = \lim_{t \rightarrow \infty} \frac 1 t \log \left ( \frac {M(t|x_a)} {M(t|x_b)} \right ),$$
$$\Delta_{ab}m_k = \sum_{j=2}^k \lim_{t \rightarrow \infty} \frac{\partial}{\partial
  t} \log(\frac{Q_k(x_a,t)}{Q_k(x_b,t)} ) + \Delta_{ab}m_1. $$

We now state the complete identification result. For the sake of clarity, we name the point of identification $x_0$ instead of $x_a$.

\begin{ass}\label{ass:jmodel-matrix} There exists
$ X=(x_1,...,x_{J-1}) \in N^1(x_0, \delta')^{J-1}$ such that
$$
A(x_0,X) = 
\begin{pmatrix}
\Delta_{0,1} m_1 - \Delta_{0,1} m_J  & &  \ldots  & &  \Delta_{0,1} m_{J-1} - \Delta_{0,1} m_J\\
\vdots   & &   \ddots  & &  \vdots\\
\Delta_{0,J-1} m_1 - \Delta_{0,J-1} m_J  & &  \ldots  & &  \Delta_{0,J-1} m_{J-1} - \Delta_{0,J-1} m_J
\end{pmatrix}
$$
is invertible.  
\end{ass}

In the above assumption, the notation $\Delta_{0,i} m_j$ denotes $m_j(x_0) - m_j(x_i)$.

\begin{lem}\label{lem:fullidJ}  Suppose Assumptions \ref{ass:jmodel}, \ref{ass:jmodel-diff}
  and \ref{ass:jmodel-matrix} hold.  Then $F(\cdot|x), x \in
  B(x_0,\delta')$ uniquely
  determines $((\lambda_j)_{j=1..J-1},(F_j(\cdot))_{j=1..J},(m_j(x_0))_{j=1..J})$ in
  the set $(0,1)^{J-1} \times \bar{\mathcal F}(\R)^J
  \times \R^J$ up to labeling.
\end{lem}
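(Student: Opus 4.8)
The plan is to exploit the slope identification already established immediately before the statement—namely that $F(\cdot|x),\, x\in N^1(x_0,\delta')$ pins down every increment $\Delta_{0,i}m_j=m_j(x_0)-m_j(x_i)$, $j=1,\dots,J$, and indeed $m_j(x_0)-m_j(x)$ for all $x$ in the neighborhood—and then to recover the weights, the levels, and the component distributions in that order, each through a linear system whose invertibility reduces to Assumption \ref{ass:jmodel-matrix}. Throughout we work with the labeling fixed by the ordering $m_1(x_0)>\cdots>m_J(x_0)$, so the conclusion is up to relabeling.

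First I would identify the weights $(\lambda_1,\dots,\lambda_{J-1})$ (hence $\lambda_J=1-\sum_{j<J}\lambda_j$) from first conditional moments. Since $\E[z|x]=\sum_{j=1}^J\lambda_j m_j(x)$ and $\sum_j\lambda_j=1$, differencing at $x_0$ and $x_i$ and eliminating $\lambda_J$ gives, for $i=1,\dots,J-1$,
$$\E[z|x_0]-\E[z|x_i]-\Delta_{0,i}m_J=\sum_{j=1}^{J-1}\lambda_j\bigl(\Delta_{0,i}m_j-\Delta_{0,i}m_J\bigr).$$
The left-hand sides are observable and the increments are identified, so this is a linear system in $(\lambda_1,\dots,\lambda_{J-1})$ whose coefficient matrix is exactly $A(x_0,X)$; invertibility (Assumption \ref{ass:jmodel-matrix}) delivers the weights.

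Next I would identify the levels $(m_j(x_0))_{j=1}^J$ using second moments, mirroring the device in the proof of Lemma \ref{lem:id}. Because the component variances $\sigma_j^2=\ddot M_j(0)$ are $x$-free (independence), $\E[z^2|x]=\sum_j\lambda_j(m_j(x)^2+\sigma_j^2)$, so differencing cancels the $\sigma_j^2$; writing $m_j(x_0)^2-m_j(x_i)^2=2\,\Delta_{0,i}m_j\,m_j(x_0)-(\Delta_{0,i}m_j)^2$ produces an observable quantity $C_i$ with $C_i=\sum_j\lambda_j\,\Delta_{0,i}m_j\,m_j(x_0)$. Adjoining $\E[z|x_0]=\sum_j\lambda_j m_j(x_0)$ yields $J$ linear equations in the $J$ levels. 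Factoring $\mathrm{diag}(\lambda_1,\dots,\lambda_J)$ out of the columns (legitimate since every $\lambda_j>0$ on the parameter set $(0,1)^{J-1}$) and subtracting the last column reduces the coefficient determinant to $\pm\det A(x_0,X)\neq0$; hence the levels, and therefore $m_j(\cdot)$ over $N^1(x_0,\delta')$, are identified.

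Finally, with all $\lambda_j$ and $m_j(\cdot)$ known, I would recover $(F_1,\dots,F_J)$. Evaluating the MGF identity $M(t|x)=\sum_{j=1}^J\lambda_j e^{t m_j(x)}M_j(t)$ at $J$ points $\xi_1,\dots,\xi_J\in N^1(x_0,\delta')$ gives, for each $t$, a linear system in $(M_1(t),\dots,M_J(t))$ with coefficient matrix $E(t)\Lambda$, where $E(t)_{ij}=e^{t m_j(\xi_i)}$ and $\Lambda=\mathrm{diag}(\lambda_1,\dots,\lambda_J)$; solving and then inverting the Laplace transform (uniqueness, \citeasnoun{feller1968introduction}) returns the $F_j$. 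The main obstacle is the nonsingularity of $E(t)$: its determinant is the exponential polynomial $\sum_\sigma\mathrm{sgn}(\sigma)\,e^{t\sum_i m_{\sigma(i)}(\xi_i)}$, analytic in $t$, and I must rule out that it vanishes identically. I would do so by choosing the $\xi_i$ with distinct, monotonically ordered first coordinates and invoking the distinct-slope condition (Assumption \ref{ass:jmodel-diff}(\ref{jmodel-diff})): to leading order $E(t)$ behaves like a strictly totally positive exponential-kernel matrix, so the assignment $\max_\sigma\sum_i m_{\sigma(i)}(\xi_i)$ is uniquely attained and $\det E(t)\neq0$ for all large $t$, hence for all but isolated $t$. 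Continuity of the MGFs then extends each $M_j$ to all of $\R$, completing the identification up to labeling.
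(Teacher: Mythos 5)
Your proposal is correct and follows the paper's proof essentially step for step: the weights from the first-moment differences via the matrix $A(x_0,X)$, the levels from the second-moment quantities $C_i$ with the coefficient determinant reducing to $\det A(x_0,X)$, and the component distributions from the MGF system at $J$ points with invertibility secured by showing the exponential polynomial $\det E(t)$ is not identically zero because $\max_\sigma\sum_i m_{\sigma(i)}(\xi_i)$ is uniquely attained. The only place you are looser than the paper is the last step: merely taking distinct ordered first coordinates does not by itself make the maximizing permutation unique for nonlinear $m_j$, and the paper's Proposition \ref{prop:zeros} instead constructs the points by an iterative coordinate-by-coordinate perturbation that uses the distinct-derivative condition to break ties — but this is a refinement of exactly the argument you sketch, not a different route.
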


\begin{proof}[\textupandbold{Proof of Lemma~\ref{lem:fullidJ}}]
Reproducing what was done in the Proof of Lemma \ref{lem:id}, since 
$$
\dot M(0|x_0) - \dot M(0|x)   
 = \sum_{i=1}^J \lambda_i [(m_i(x_0) - m_i(x)) -
(m_J(x_0) - m_J(x))]  +  (m_J(x_0) - m_J(x)) ,
$$
we can write
$$
\begin{pmatrix}
\dot M(0|x_0) - \dot M(0|x_1)\\
\vdots \\
\dot M(0|x_0) - \dot M(0|x_{J-1})
\end{pmatrix}
= A(x_0,X) . 
\begin{pmatrix}
\lambda_1\\
\vdots \\
\lambda_{J-1}
\end{pmatrix}
+
\begin{pmatrix}
\Delta_{0,1}m_J\\
\vdots \\
\Delta_{0,J-1}m_J
\end{pmatrix}.
$$
As Assumption \ref{ass:jmodel-matrix} guarantees the invertibility of $A(x_0,X)$ , and since the slopes of the $(m_j)_{j=1..J}$ were all previously identified, the $(\lambda_j)_{j=1..J-1}$ are identified with the formula
$$
\begin{pmatrix}
\lambda_1\\
\vdots \\
\lambda_{J-1}
\end{pmatrix}
=
A(x_0,X)^{-1}  \left[
\begin{pmatrix}
\dot M(0|x_0) - \dot M(0|x_1)\\
\vdots \\
\dot M(0|x_0) - \dot M(0|x_{J-1})
\end{pmatrix}
-
\begin{pmatrix}
\Delta_{0,1}m_J\\
\vdots \\
\Delta_{0,J-1}m_J
\end{pmatrix} \right].
$$
To identify $(m_j(x_0))_{j=1..J})$, we use the function $$C(x) = \left \{\ddot M(0|x_0)  - \ddot M(0|x) + \lambda [m_1(x_0) - m_1(x)]^2 +(1 - \lambda) [m_2(x_0) - m_2(x)]^2 \right \}/2$$ used in the Proof of Lemma \ref{lem:id}, where we can show that
$$
C(x_k) = \sum_{i=1}^J \lambda_i \; m_i(x_0) \; \Delta_{0,k}m_i,
$$
which gives
$$
\begin{pmatrix}
C(x_1) \\
\vdots \\
C(x_{J-1})\\
\dot M(0|x_0)
\end{pmatrix}
=
B(x_0, X). diag(\lambda_1,...,\lambda_J) 
\begin{pmatrix}
m_1(x_0)\\
\vdots \\
m_{J}(x_0)
\end{pmatrix},
$$
where
$$
B(x_0, X) = 
\begin{pmatrix}
\Delta_{0,1} m_1  & &  \ldots  & &  \Delta_{0,1} m_{J} \\
\vdots   & &   \ddots  & &  \vdots\\
\Delta_{0,J-1} m_1 & &  \ldots  & &  \Delta_{0,J-1} m_{J} \\
1 & & \ldots & & 1
\end{pmatrix} \text{ is observable}.
$$
$diag(\lambda_1,...,\lambda_J)$ is invertible as $\lambda_j, \, j=1..J$ are assumed to be nonzero. Since $\det B(x_0, X) = \det A(x_0,X)$, $B(x_0, X)$ is invertible. Therefore, we obtain the following identification result:
$$
\begin{pmatrix}
m_1(x_0)\\
\vdots \\
m_{J}(x_0)
\end{pmatrix}
=
diag(\lambda_1^{-1},...,\lambda_J^{-1})
B(x_0,X)^{-1}
\begin{pmatrix}
C(x_1) \\
\vdots \\
C(x_{J-1})\\
\dot M(0|x_0)
\end{pmatrix}.
$$
What now remain to be identified are the $(F_j(\cdot))_{j=1..J}$: we will again use a technique similar to what was done in the proof of Lemma \ref{lem:id}, but using Assumption \ref{ass:jmodel-matrix}.
As $ M(t|x)= \sum_{i=1}^J \lambda_i e^{tm_i(x)}M_i(t)$, considering $J$ generic points $(c_i)_{i=1..J} \in B(x_0, \delta')^J$, we have
$$
\begin{pmatrix}
M(t|c_1) \\
\vdots \\
M(t|c_{J})
\end{pmatrix}
=
D(t, c_1,...,c_J) \;
diag(\lambda_1,...,\lambda_J) \;
\begin{pmatrix}
M_1(t) \\
\vdots \\
M_J(t)
\end{pmatrix},
$$
$$ \text{where }
D(t, c_1,...,c_J)=(e^{tm_j(c_i)})_{1 \leq i,j \leq J}.
$$
We prove in the appendix (Proposition \ref{prop:zeros}) that there is a vector of $(J-1)$ points $X^{(J)}=(x_1^{(J)},...,x_{J-1}^{(J)}) \in B(x_0, \delta')^{J-1}$, such that $\mathcal{Z}= \left\lbrace t \in \R | \det D(t, x_0, x_1^{(J)},...,x_{J-1}^{(J)})=0 \right\rbrace$ is finite. Hence, we can invert $ D(t, x_0, x_1^{(J)},...,x_{J-1}^{(J)})$ for all $t \in \R \backslash \mathcal{Z}$. 
Note that we can write 
$$
D(t, x_0, x_1^{(J)},...,x_{J-1}^{(J)})= e^{t\sum_{i=1}^J m_i(x_0)}
\begin{pmatrix}
1 & & \ldots & & 1 \\
e^{-t \left( \Delta_{0,1} m_1 + \sum_{i=2}^J m_i(x_0)\right) }  & &  \ldots  & & e^{-t \left( \Delta_{0,1} m_J + \sum_{i=1}^{J-1} m_i(x_0)\right)} \\
\vdots   & &   \ddots  & &  \vdots\\
e^{-t \left( \Delta_{0,J-1} m_1 + \sum_{i=2}^J m_i(x_0)\right)} & &  \ldots  & & e^{-t \left( \Delta_{0,J-1} m_J + \sum_{i=1}^{J-1} m_i(x_0)\right)} \\
\end{pmatrix},
$$
and since $(x_1^{(J)},...,x_{J-1}^{(J)}) \in B(x_0, \delta')^{J-1}$, by the above result and Lemma \ref{lem:diffid}, $D(t, x_0, x_1^{(J)},...,x_{J-1}^{(J)})$ is identified.
Therefore $(M_i(t))_{i=1..J}$ are identified for all $t \in \R \backslash \mathcal{Z}$ and since the $(M_i(t))_{i=1..J}$ have domain $(-\infty,+\infty)$, we know that they are continuous (see, e.g, \citeasnoun{gut2013probability} Theorem 8.3 p190) on $\R$. As for each $M_i$, there is a unique continuous extension on $\R$ of its restriction to $\R \backslash \mathcal{Z}$, the $J$ functions are identified. By the same argument of uniqueness of the Laplace transform for a distribution function, this leads to the identification of the $F_i$.
\end{proof}

Having showed identification of our model assuming knowledge of $J$, we now consider the case where $J$ is unknown, and show it is identified, using the observable sequence of functions $(Q_k)_{k=1,...}$. As we see below, the number of mixture components $J$ is equal to the largest  $j$ for which the function $Q_j$  not identically $0$ in $t$. Therefore one can sequentially compute the $\Delta_{ab}m_j$ using $Q_j$, for increasing $j$. Once there exists $j_0$ such that $Q_{j_0} =0$, then $J=j_0-1$.

\textbf{
\begin{prop}\label{prop:idJ}
$$J = \max \left\lbrace j \geq 1 | \exists t_0 \in \R, Q_j(x_a,t_0) \neq 0 \right\rbrace . $$
\end{prop}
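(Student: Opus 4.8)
The plan is to identify the maximum by pinning down exactly two facts: (a) that $Q_J(x_a,\cdot)$ is not identically zero, so that $J$ lies in the set on the right-hand side; and (b) that $Q_j(x_a,\cdot) \equiv 0$ for every $j > J$, so that no larger index can lie in it. Everything follows from the representation (\ref{qequation1}) established in Lemma \ref{lem:qrep}, which I would read on the set $S(x_a)$ of (\ref{setS}) where all the $R_k(t,x_a)$ are nonzero; by Proposition \ref{prop:rational} and the remark following it, $S(x_a)$ contains a half-line $[c,\infty)$, so the existential ``$\exists t_0 \in \R$'' in the statement may throughout be read as ``for all sufficiently large $t$.''

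For (a), I would specialize (\ref{qequation1}) to $k=J$, where the sum collapses to its single top term
\[
Q_J(x_a,t) = \lambda_J\, R_J(t,x_a)\, e^{t[m_J(x_a) - \Delta_{ab}m_{J-1}]}\, M_J(t).
\]
Each factor is nonzero for large $t$: $\lambda_J \neq 0$ since all mixing weights are strictly positive; $R_J(t,x_a) = R_J^J(t,x_a) \neq 0$ for large $t$ by Proposition \ref{prop:rational}; the exponential is strictly positive; and $M_J(t) = \int e^{t\epsilon}\,dF_J(\epsilon) > 0$ for every $t$, being the moment generating function of a probability measure (finite on all of $\R$ by Assumption \ref{ass:jmodel-diff}(\ref{jmodel-domain})). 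Hence $Q_J(x_a,t_0) \neq 0$ for large $t_0$, which gives (a).

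For (b), I would carry the induction step in the proof of Lemma \ref{lem:qrep} one stage past $k=J$. Applying $A(x_a,x_b,t,J)$ to $Q_J$ deletes the $j=J$ term by the very mechanism that deleted the lowest-index term at every earlier stage: after cancelling $R_J(t,x_a)$, the exponent of that term collapses to the $x_a$-free quantity $t\,m_J(x_b)$, so $D_{x_a}$ annihilates it. Equivalently, (\ref{qequation1}) persists at $k=J+1$ with the now-empty sum $\sum_{j=J+1}^{J}$, whence
\[
Q_{J+1}(x_a,t) = \lambda_J\,\frac{\partial}{\partial x_a^1}\!\left[e^{t\,m_J(x_b)}M_J(t)\right] = 0, \qquad t \in S(x_a).
\]
Since each $A(x_a,x_b,t,\cdot)$ is linear and fixes the zero function, the recursion (\ref{eq:recur}) then forces $Q_j \equiv 0$ for all $j > J$; operationally, the procedure simply terminates at the first vanishing $Q_{j_0}$, which is $j_0 = J+1$. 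Combining (a) and (b) yields $\max\{j\ge 1 : \exists t_0,\ Q_j(x_a,t_0)\neq 0\} = J$.

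I expect the only delicate point to be bookkeeping about domains rather than any substantive difficulty: because the operators $A$ divide by the $R_k$, the $Q_j$ are genuinely defined only on $S(x_a)$, so both claims are assertions about large $t$ and the ``$\max$'' must be read over indices for which $Q_j$ is defined and not identically zero --- which is harmless given $S(x_a)\supseteq[c,\infty)$. It is also worth recording, for nonvacuousness when $J=1$, that under the natural convention $Q_1(x_a,t) = M(t|x_a)$ (i.e.\ (\ref{qequation1}) at $k=1$ with $\Delta_{ab}m_0 := 0$ and $R_1 := 1$) one has $Q_1 > 0$ while $Q_2 \equiv 0$ by the same cancellation, so the formula correctly returns $J=1$.
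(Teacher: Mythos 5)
Your proof is correct, and its two halves line up with the paper's own argument: the annihilation step $Q_{J+1}\equiv 0$ is obtained exactly as in the paper (the $k=J$ representation collapses to $\lambda_J R_J(t,x_a)e^{t[m_J(x_a)-\Delta_{ab}m_{J-1}]}M_J(t)$, the operator $A(x_a,x_b,t,J)$ cancels $R_J$ and reduces the exponent to the $x_a$-free quantity $t\,m_J(x_b)$, and $D_{x_a}$ kills it), and your observation that $A$ fixes the zero function handles all larger indices. Where you diverge is the nonvanishing half: you verify only that $Q_J\not\equiv 0$, which is immediate because the sum in (\ref{qequation1}) is empty at $k=J$, whereas the paper proves the stronger claim that $Q_{j_0}\not\equiv 0$ for \emph{every} $j_0\le J$, by factoring out the leading term and using Assumption \ref{ass:jmodel-diff}(\ref{jmodel-lim}) together with the degree computations of Proposition \ref{prop:rational} to show the bracketed correction tends to $1$. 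Your weaker claim suffices for the proposition as literally stated (the max needs only that $J$ is in the set and nothing above it is), and in fact the stronger claim follows for free from your own remark --- since $A$ maps the zero function to zero, $Q_{j_0}\equiv 0$ for some $j_0<J$ would force $Q_J\equiv 0$, contradicting what you showed --- so you could recover the paper's ``first vanishing $Q_{j_0}$ occurs at $j_0=J+1$'' characterization (which the surrounding text uses as a sequential stopping rule) without the asymptotic argument. It would be worth adding that one sentence explicitly; otherwise the argument is complete.
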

}
 
\begin{proof}[\textupandbold{Proof of Proposition~\ref{prop:idJ}}]

$$Q_J = \lambda_J R_J(t,x_a) e^{t(m_J(x_a)-\Delta_{ab}m_{J-1})}M_J(t),$$
therefore 
$$Q_{J+1}(x_a,t)= \lambda_J \frac{\partial}{\partial
  x_a^1}\left[\frac{R_{J}(t,x_a)}{R_{J}(t,x_a)} e^{-t m_J(x_b)} M_J(t)\right] = 0, \text{ for all } t \in \R. $$
We actually see that we cannot calculate any $\Delta_{ab} m_{J+1}$ with the method of Lemma \ref{lem:diffid} because of the logarithm: the identification process must be stopped here.

Reciprocally, if $j_0 \leq J$, then for some $t_0 \in \R, Q_{j_0}(x_a, t_0) \neq 0$. 
Indeed, $j_0 \leq J \Rightarrow \forall j_0 \leq k \leq J,\, \lambda_k > 0 $
and we can write
$$
Q_{j_0}(x_a,t) = \lambda_{j_0} R_{j_0}(t,x_a)M_{j_0}(t)e^{tm_{j_0}(x_a)-\Delta_{ab}m_{j_0-1} } \left( 1 + \sum_{j=k_0+1}^J \frac
{\lambda_j}{\lambda_{j_0}}\frac{R_{j_0}^j(t,x_a)}{R_{j_0}(t,x_a)}e^{tm_{j,j_0}(x_a)}\frac{M_j(t)}{M_{j_0}(t)} \right).
$$
By proposition \ref{prop:rational}, we know that $\deg_tR_{j_0}^j = 1$, so there is a constant $b_{x_a, x_b, j, j_0} > 0$ such that 
$$\frac{R_{j_0}^j(t,x_a)}{R_{j_0}(t,x_a)} \xrightarrow[t\to\infty]{} b_{x_a, x_b, j, j_0}.$$
Using Assumption \ref{ass:jmodel-diff} (\ref{jmodel-lim}),	 since $m_{h,k}(x_a) < -\Delta$, each term in the sum on the right hand side goes to 0 as $t$ goes to $\infty$, implying that for large enough $t$, the term in parenthesis is strictly positive, that is, nonzero.

\end{proof}


\begin{rem}\label{rem:dble}
Note that it is not essential for our identification strategy to assume to impose Assumption \ref{ass:jmodel} \eqref{jmodel-dble} $ m$ is $J$-times differentiable in one argument, as stated right after the assumption.
   Note that the use of the differentiation operator $\frac{\partial}{\partial x^1}$ in the linear operator $A$ is motivated by the fact that it eliminates terms that do not involve $x_a$, therefore with respect to which argument we differentiate is unimportant.  The same identification argument applies if at each application of the operator $A$ in the recursive formula \eqref{eq:recur} time we  use  $\frac{\partial}{\partial x^\ell}$ with a different $\ell \{1,...,k\}$ instead of keeping on using the same differential operator  $\frac{\partial}{\partial x^1}$ as in the current proof.      What we need is, as noted before, that $m$ can be differentiated up to a $J$-th order multi-index.   This is  less stringent than Assumption \ref{ass:jmodel} \eqref{jmodel-dble}, though we chose to state the result in the current form for notational simplicity.  
\end{rem}

\section{Application to Identifiability of Auction Models with Unobserved Heterogeneity}\label{sec:auction}

  It is of great interest to demonstrate that the preceding identification results potentially apply to nonparametric analysis of auction models with unobserved heterogeneity.  As recognized in the recent literature, failing to properly taking account for unobserved heterogeneity in empirical auction models can lead to grossly misleading policy implications and counterfactual analyses.  The reader is referred to \citeasnoun{haile2018unobserved} for various approaches to nonparametric identifiabilty in auction models when unobserved heterogeneity is present.    Here we focus on  application of the preceding mixture identification  results to models with auction-specific unobserved heterogeneity.  In particular, we focus on   a symmetric affiliated auction model as considered in \citeasnoun{milgrom1982theory}.    Suppose that valuations have the following multiplicative form, with $J$ unknown types of auctions      
\begin{equation}\label{eq:valuation}
V^k = \Gamma_j(x)U_j^k \quad \text{ with probability } \lambda_j, 1 \leq j \leq J
\end{equation}
where $V^k$ is the valuation of bidder $k$, $1 \leq k \leq I$, who knows the number of bidders $I$, observed characteristics $x$, unobserved heterogeneity (i.e. unobserved type of auction) $j$, and a signal $S^k$.  The function $\Gamma_j(x)$ depends on the two characteristics $x$ and $j$.  The term $U_j^k$ can be interpreted as the ``homogenized valuation" for bidder $k$, as used in \citeasnoun{haile2003nonparametric}.  Let $B^k$ denote the bid of bidder $k$.    The observables in this application is $(I,B^1,...,B^I,x)$.  The rest remain unobserved.

We maintain that there are finite number of types in terms of auction heterogeneity.  It is then  possible to establish identification under quite weak assumptions.  In the following result note that (i) valuations can be affiliated, and  (ii) unobserved heterogeneity is treated flexibly, as not only it can affect valuations through the index function $\Gamma_j$ in an unrestricted way, the distribution of the homogenized valuation $U_j^k$ is allowed to depend on $j$ freely.  Property (i) is important, as many preceding nonparametric identification results for auction with unobserved heterogeneity focus on the independent private value (IPV) model, as they tend to impose independence assumptions across valuations, with the exception of \citeasnoun{CHS}.  For example, Property (i) implies that the result in this section applies to the common values model.   Property (ii) about the flexible treatment of homogenized valuations is apparently new.        	
	
Assume 
		\begin{equation}\label{eq:aucdep}
	(U_j^1,\dots ,U_j^I,S^1,\dots ,S^I) \indep x | I
	\end{equation}
	for every $j \in \{1,...,J\}$.  Note that standard approaches to deal with unobserved  heterogeneity do so through the index function $\Gamma_j$, and would not allow $(U^1,...,U^I)$ to depend on $j$ . 
	Define
	$$
	w(S,I,x,j) :=  \mathrm E\left[V^k | S^k = \max_{i \neq k, 1 \leq i \leq I}S^i = S, I, x, j\right]
	$$
	which corresponds to the expected value of a bidder's valuation conditional on $I$, $x$, $j$, and  the event that her equilibrium bid is pivotal.  This is  a quantity sometimes simply called ``pivotal expected value".  Let $w^k := w(S^k,I,x,j), 1 \leq k \leq I$  denote the pivotal expected value of the $k$-th bidder  (whose signal is $S^k$)  in an auction with characteristics $(x,j)$ and $I$ bidders.    The goal here is to identify  the joint distribution of $(w^1,...,w^I)$ in an auction with $(x,I,j)$, along with the distribution $(\lambda_1,...,\lambda_J)$ of the unobserved heterogeneity.  Note that such knowledge is sufficient to address  important questions often asked in practice: see, for example,  footnote 9 of  \citeasnoun{haile2018unobserved} for further discussions.

	The above setting  implies an expression of $w$ of the following form
	\begin{equation}\label{eq:hom-mix}
	w\left(S,I,x,j\right) = \Gamma_j(x)\omega\left(S,I,j\right),
	\end{equation}
	where $\omega\left(S;I,j\right) = E[V^k|S^k = \max_{i \neq k}S^i = S, I,j]$.
	Like the homogenized valuation $\{\{U_j^k\}_{k=1}^I\}_{j=1}^J$, $\omega_j^k := \omega\left(S^k,I,j\right)$ is interpreted as a homogenized pivotal expected value of bidder $k$ in an auction of unobserved type $j$.    It is well-known that the equilibrium bidding function preserves multiplicative separability in \eqref{eq:valuation}, hence \eqref{eq:hom-mix}, for each bidder $k$.  Thus we obtain
	$$
	B^k = \Gamma_j(x)R_j^k,
	$$
	where $R_j^k$ is the homogenized valuation of bidder $k$ in type $j$ auction.  Note that the unobserved auction type can affect equilibrium bids through two channels, that is, the index function $\Gamma_j$ and the homogenized bid $R_j^k$.     Define $b^k = \log B^k$,   $\gamma_j(x) := \log \Gamma_j(x)$ and $r_j^k := \log R_j^k$, then we have 
	\begin{equation}\label{eq:bideq}
b^k = \gamma_j(x) + r_j^k, \quad 1 \leq j \leq J, \quad 1 \leq k \leq I.
	\end{equation}
		Note that \eqref{eq:aucdep} implies 
	\begin{equation}\label{eq:bidindep}
	(r_k^1,\dots,r_k^I) \indep x \quad {\text{for every $j$}}
	\end{equation}
	conditional on $I$.

	We now invoke Lemma \ref{lem:fullidJ} to establish identification of this model.    One of the main objects to be identified is the $I$-dimensional joint distribution of the pivotal expected values $w^1,...,w^I$ conditional on $(x,j,I)$, and our identification strategy works for each value of $I$.    Thus in the rest of this section we treat $I$ as being fixed at a value, and suppress the index $I$  unless necessary.  Let  $c  = (c_1,...,c_I)' \in {\mathbb R}^I$, and define 
	 $b(c) := \sum_{k=1}^{I}c_k b_{k}$, $C(c) :=  \sum_{k=1}^{n}c_k$ and $r_j(c) :=   \sum_{k=1}^{I}c_i r_j^k$.  By \eqref{eq:bideq} and the finite mixture structure of the evaluation  in \eqref{eq:valuation} we have 
	 $$
	 b(c) = C(c)\gamma_j(x) + r_j(c) \quad \text{ with probability } \lambda_j, 1 \leq j \leq J
	 $$
	 where  $r(c) \indep x$  by \eqref{eq:bidindep}. 
	Let  $\left(b(c),\{C(c)\gamma_j(\cdot)\}_{j=1}^J,\{r_j(c)\}_{j=1}^J\right)$  play the role of   $\left(z,\{m_j(\cdot)\}_{j=1}^J,\{\epsilon_j\}_{j=1}^J\right)$ in Lemma \ref{lem:fullidJ}, then $\left(C(c) \gamma_j(\cdot), \lambda_j\right)$ and the distribution of $r_j(c)$ are all identified for every $c \in {\mathbb R}^n$ and each $j \in \{1,...,J\}$.   Moreover, we now know $\gamma_j(\cdot)$,  $j \in \{1,...,J\}$  since $C(c)$ is known.  Note that for each $j$, the marginal distribution of every linear combination $r_j(c)$ of the $I$-vector $(r_j^1,...,r_j^I)$  is identified as $c \in \R^I$ can be chosen arbitrarily.  Then by Cram\'er-Wold the joint distribution of $(r_j^1,...,r_j^I)$ is obtained for each $j$.   Apply this and the knowledge of $\gamma_j$ to equation  \eqref{eq:bideq} to determine  the joint distribution  $(b^i,...,b^I)|x,j,I$.  Using the first order condition for equilibrium bidding (see, e.g. \citeasnoun{haile2003nonparametric},  \citeasnoun{athey2007nonparametric} and Equation (2.4) in \citeasnoun{haile2018unobserved}) we can now back out the joint distribution  of   $(w^1,...,w^I)|x,j,I$  as desired.  Note that the number of (unobserved) auction types $J$ is also identified by  Proposition \ref{prop:idJ}.

\section{Nonparametric estimation for $J=2$}\label{sec:estimation}
This section develops a fully nonparametric estimation procedure based on our third identification result in Section \ref{subsec:third} where the number of mixture components is two.   We first estimate the slopes of $m_1$ and $m_2$ nonparametrically.  
Define 
$\Delta  = m_1(x_1) - m_1(x_0)$ and
$\nabla  = m_2(x_1) - m_2(x_0).$
Let us reintroduce notations. We write, for $j=1,2,$ 
\begin{align*}
\phi^j(s) = \mathbb{E}(e^{isZ}|X=x_j) ,\ \phi_l (s) & = \mathbb{E}(e^{is\epsilon_l}) = \int e^{i \epsilon s} \mathrm{d}F_l(\epsilon),\ \hat{\phi}^j(s) = \frac{\sum_{p=1}^n e^{is Z_p} K(\frac{X_p - x_j}{b_n})} {\sum_{p=1}^n  K(\frac{X_p - x_j}{b_n})},\\
M^j(t)= \mathbb{E}(e^{tZ}|X=x_j),\ M_i (t) & = \mathbb{E}(e^{t\epsilon_i}) = \int e^{\epsilon t} \mathrm{d}F_i(\epsilon),\ \hat{M}^j(t) = \frac{\sum_{p=1}^n e^{t Z_p} K(\frac{X_p - x_j}{h_n})} {\sum_{p=1}^n  K(\frac{X_p - x_j}{h_n})},
\end{align*}
where $F_i$ is the cumulative distribution function of $\epsilon_i,$ $h_n$ and $b_n$ are carefully chosen bandwidths for kernel density estimation. $\hat{M}^j(t)$ and $\hat{\phi}^j(s)$ are  the Nadaraya-Watson regression estimators of respectively the conditional moment generating function and conditional characteristic function of $Z,$ when $X = x_j.$ $X$ being a vector, the kernel function $K$ can have a product form such as $K(X)=\Pi_{l=1}^k k(X^{(l)}).$

Our estimators are
\begin{align*}
\hat{\Delta} & = \frac{1}{t_n} \log \left( \frac{\hat{M}^1(t_n)}{\hat{M}^0(t_n)}  \right), \\
\hat{\nabla} & = \frac{-i}{a_n} \mbox{ Log} \left( \frac{\hat{\phi}^1(s_n+a_n)}{\hat{\phi}^0(s_n+a_n)} \left( \frac{\hat{\phi}^1(s_n)}{\hat{\phi}^0(s_n)} \right)^{-1}  \right),
\end{align*}
where $(a_n)_n, (s_n)_n$ and $(t_n)_n$ are tuning parameters such that  $a_n \to 0,$ $s_n \to \infty$ and $t_n \to \infty$. The notation Log$(\cdot)$ as before corresponds to the principal value of the logarithm of $\cdot$.

We enumerate here the assumptions on the kernel function needed to compute the rates of our estimators.
\begin{ass}\label{ass-kernel}
The kernel function $K(.)$ must satisfy the following conditions,

$\int |K(U)| \, \mathrm{d} U < \infty $ , $\int K(U)\, \mathrm{d}U = 1,$ $\lim_{||U|| \to \infty} UK(U) \to 0,$ 

$\int K(U)^{2} \, \mathrm{d}U < \infty,$ $\int |K(U)| \,  U'U \, \mathrm{d}U < \infty,$ $\int K(U) \, U \, \mathrm{d}U = 0,$ 

$\exists \alpha_0, \alpha \leq \alpha_0\ \Rightarrow \int e^{\alpha ||U||} |K(U)| \, U'U \, \mathrm{d}U < \infty, \int e^{\alpha ||U||} K(U)^2 \, \mathrm{d}U < \infty.$

\end{ass}

We need the following assumptions on the model parameters.
\begin{ass}\label{ass-convergence}

\begin{enumerate}[\textup{(}i\textup{)}]
\item\label{c-xdensity}
$f_X,$ the density of the random variable $X$, has continuous second order partial derivatives. $f_X$ and all its first and second order partial derivatives are bounded on $\R^k.$  $f_X(x_i)>0$, for $i=0,1.$


\item\label{c-differentiability}
$m_i,$ $i=1,2$ have continuous second order partial derivatives, and all their first and second order partial derivatives are bounded on $\R^k.$

\item\label{c-bandwidth}
$h_n  \underset{n \rightarrow \infty }{\to} 0,$ $nh_n^k  \underset{n \rightarrow \infty }{\to} \infty$,   and $b_n  \underset{n \rightarrow \infty }{\to} 0,$ $nb_n^k  \underset{n \rightarrow \infty }{\to} \infty,$

\item\label{c-bias}
$t_n \underset{n \rightarrow \infty }{\to} \infty,$ $t_n h_n \underset{n \rightarrow \infty }{\to} 0,$ and $s_n \underset{n \rightarrow \infty }{\to} \infty,$ $s_n b_n \underset{n \rightarrow \infty }{\to} 0.$

\end{enumerate}
\end{ass}

\begin{ass}\label{ass-ratio rates}
\begin{enumerate}[\textup{(}i\textup{)}]
	
\item\label{indep}		$\epsilon_1|x \sim F_1$ and  $\epsilon_2|x \sim
F_2$ at all $x \in \R^k$ where $F_1$ and $F_2$ do not depend on the value of $x$,
	
\item\label{mgfExist} The domains of $M_1(t)$ and $M_2(t)$ are $[0,\infty)$,

\item\label{mgf}
$\forall \epsilon >0 \mbox{, } e^{\epsilon t} \frac{M_2(t)}{M_1(t)} \underset{t \rightarrow \infty }{=} O(\mu(t)),$ holds for some $\mu (\cdot),$ where $\mu(t) \xrightarrow[t \rightarrow \infty] {} 0,$

\item\label{cf}
$\frac{\phi_1(s)}{\phi_2(s)} \underset{s \rightarrow \infty }{=} O(f(s)),$
holds for some $f(\cdot),$ where $f(t) \xrightarrow [t \rightarrow \infty] {} 0.$

\end{enumerate}
\end{ass}

\begin{prop}\label{prop:estimation}
Suppose Assumptions  \ref{ass-kernel}, \ref{ass-convergence}  and \ref{ass-ratio rates} hold.

Then
\begin{enumerate}[\textup{(}i\textup{)}]

\item\label{delta} $\hat{\Delta} - \Delta = O_{\mathbb{P}} \left[ \frac{\mu (t_n)}{t_n} + \frac{1}{t_n} \left( (t_n h_n)^4 + \frac{1}{n h_n^k} \frac{M_1(2t_n)}{M_1(t_n)^2} \right)^{\frac{1}{2}} \right],$ where we assume $\frac{1}{n h_n^k} \frac{M_1(2t_n)}{M_1(t_n)^2} \underset{n \rightarrow \infty }{\to} 0$

\item\label{nabla} $\hat{\nabla} - \nabla = \frac{1}{a_n} O_{\mathbb{P}}  \left[ f(s_n+a_n) +f(s_n) +  \left((b_n s_n)^4 + \frac{1}{n b_n^{k} |\phi_2(s_n + a_n)|^2 } \right)^{1/2} + \left((b_n s_n))^4 + \frac{1}{n b_n^{k} |\phi_2(s_n)|^2 } \right)^{1/2} \right]$

\end{enumerate}

\end{prop}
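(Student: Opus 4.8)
The plan is to treat $\hat\Delta$ and $\hat\nabla$ by a common two-part template: split each error into a \emph{deterministic} part (the gap between the population transform evaluated at the finite tuning parameter and its limit $\Delta$, resp.\ $\nabla$, guaranteed to exist by Lemma \ref{lem:3rdslope}) and a \emph{stochastic} part (how well the Nadaraya--Watson estimators track the population conditional transforms). For the first estimator I would write
\[
\hat\Delta - \Delta = \Big(\tfrac{1}{t_n}\log\tfrac{M^1(t_n)}{M^0(t_n)} - \Delta\Big) + \tfrac{1}{t_n}\Big(\log\tfrac{\hat M^1(t_n)}{\hat M^0(t_n)} - \log\tfrac{M^1(t_n)}{M^0(t_n)}\Big),
\]
the first bracket being the bias and the second the estimation error; the treatment of $\hat\nabla$ is analogous, using $\rho$, its difference quotient in $s$ with step $a_n$, and the prefactor $1/a_n$.

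For the deterministic bias of $\hat\Delta$, I would use the mixed-MGF identity \eqref{mixedmgf} to factor $e^{t[m_1(x_1)-m_1(x_0)]}$ out of $M^1(t)/M^0(t)$, leaving a quotient of terms of the form $\lambda + (1-\lambda)e^{t[m_2(x_j)-m_1(x_j)]}\frac{M_2(t)}{M_1(t)}$. Since $m_2(x_j)-m_1(x_j)$ is a fixed constant, Assumption \ref{ass-ratio rates}(\ref{mgf}) (applied with a suitable $\varepsilon$) forces each such term to be $O(\mu(t))$, so the logarithm of the quotient is $O(\mu(t_n))$ and the bias is $O(\mu(t_n)/t_n)$. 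For $\hat\nabla$ I would use the decomposition \eqref{eq:2nd_decomp} to write $\rho(x_1,s)=e^{is\nabla}G(s)$ with $G(s)-1=O(f(s))$ (the branch $\phi_1/\phi_2\to0$ of Assumption \ref{ass-ratio rates}(\ref{cf}) selecting the $m_2$ slope); then $\frac{-i}{a_n}\mathrm{Log}\big(\rho(s_n+a_n)/\rho(s_n)\big)-\nabla=\frac{-i}{a_n}\mathrm{Log}\big(G(s_n+a_n)/G(s_n)\big)$, which is $O\!\big(\tfrac{1}{a_n}(f(s_n+a_n)+f(s_n))\big)$, with the principal branch being the intended one once $a_n\nabla$ is small, as ensured by $a_n\to0$.

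For the stochastic part, a first-order (delta-method) expansion of $\log$ and $\mathrm{Log}$ reduces everything to the relative errors $(\hat M^j(t_n)-M^j(t_n))/M^j(t_n)$ and $(\hat\phi^j(s)-\phi^j(s))/\phi^j(s)$. Writing the Nadaraya--Watson estimator in numerator/denominator form, a second-order Taylor expansion in $x$ (the first order killed by $\int UK(U)\,dU=0$ in Assumption \ref{ass-kernel}) gives a bias of order $h_n^2\,\partial_x^2 M(t\mid x)/M(t\mid x)$; since $m_i(x)$ enters through $e^{tm_i(x)}$, this logarithmic curvature is $O(t^2)$, so the relative bias is $O((t_nh_n)^2)$, resp.\ $O((s_nb_n)^2)$. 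The leading variance term is $\frac{1}{nh_n^k}E[e^{2t_nZ}\mid X=x_j]/M^j(t_n)^2$; using $E[e^{2t_nZ}\mid X=x_j]=M(2t_n\mid x_j)\sim e^{2t_nm_1(x_j)}M_1(2t_n)$ against $M^j(t_n)^2\sim e^{2t_nm_1(x_j)}M_1(t_n)^2$ yields relative variance $\frac{1}{nh_n^k}\frac{M_1(2t_n)}{M_1(t_n)^2}$, while for the characteristic function $|e^{isZ}|=1$ gives second moment $1$ and relative variance $\frac{1}{nb_n^k|\phi^j(s)|^2}\sim\frac{1}{nb_n^k|\phi_2(s)|^2}$. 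Collecting the two pieces into $O_{\mathbb P}\big((\text{bias})^2+\text{var}\big)^{1/2}$ and multiplying by $1/t_n$ (resp.\ evaluating at both $s_n$ and $s_n+a_n$ and multiplying by $1/a_n$) reproduces exactly the stated rates.

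The main obstacle is the combination of a diverging transform argument with kernel localization: the effective response $e^{t_nZ}$ has a second moment that blows up like $M_1(2t_n)$, so the usual Chebyshev/CLT bookkeeping must be carried out for a triangular array with exploding summand variances, and the governing quantity is precisely $\frac{1}{nh_n^k}\frac{M_1(2t_n)}{M_1(t_n)^2}$, assumed $o(1)$ so that the relative error is $o_{\mathbb P}(1)$ and the $\log$/$\mathrm{Log}$ linearizations are legitimate (in particular $\hat M^j(t_n)$, $\hat\phi^j(s_n)$ stay bounded away from $0$ and the principal branch is correct, with probability tending to one). One must also check that $\hat f_j$ is bounded below, which is standard given $f_X(x_j)>0$ from Assumption \ref{ass-convergence}(\ref{c-xdensity}), and that the exponential-tail conditions $\int e^{\alpha\|U\|}|K(U)|\,U'U\,dU<\infty$ in Assumption \ref{ass-kernel} render the bias and variance integrals finite once the integrand carries the factor $e^{t_nm_i(x_j+h_nU)}$. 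Once these uniform controls are secured, assembling the pieces is routine.
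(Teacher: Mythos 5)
Your proposal is correct and follows essentially the same route as the paper: the same split into a deterministic approximation error (controlled by $\mu(t_n)$, resp.\ $f(s_n)$, via the factored MGF/CF ratios) plus a Nadaraya--Watson estimation error handled through the numerator/denominator decomposition, with the relative bias of order $(t_nh_n)^2$ coming from the $t_n^2$ growth of the Hessian of $x\mapsto f_X(x)\,\E[e^{t_nZ}\mid X=x]$ and the relative variance governed by $\frac{1}{nh_n^k}\frac{M_1(2t_n)}{M_1(t_n)^2}$ (resp.\ $\frac{1}{nb_n^k|\phi_2(s_n)|^2}$), exactly as in the paper's argument. The only cosmetic difference is that the paper carries out the characteristic-function bias and variance bounds separately for the real and imaginary parts before recombining, which your delta-method phrasing subsumes.
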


\subsection*{Proof 1 part 1}

\begin{proof}[\textupandbold{Proof of Proposition~\ref{prop:estimation} \ref{delta}}]

The estimator can be decomposed as 
$$
\hat{\Delta} = \frac{1}{t_n} \log \left( \frac{\hat{M}^1(t_n)}{\hat{M}^0(t_n)}  \right)=\frac{1}{t_n} \log \left( \frac{M^1(t_n)}{M^0(t_n)}  \right)  +  \frac{1}{t_n} \log \left( \frac{\hat{M^1}(t_n)}{M^1(t_n)}  \right)
- \frac{1}{t_n} \log \left( \frac{\hat{M^0}(t_n)}{M^0(t_n)}  \right).
$$

The first term in the decomposition is deterministic. Using the proof of Lemma \ref{lem:F3rdresult}, this approximation error can be written

\begin{align*}
\frac{M^1(t_n)}{M^0(t_n)} 
= \, e^{t_n \Delta} \, \frac 
  { 1+   \frac {1 - \lambda}{\lambda} 
     e^{t_n [m_2(x_1) - m_1(x_1)] } \frac {M_2(t_n)}{M_1(t_n)}} 
  { 1 + \frac{1 - \lambda}{\lambda}
    e^{t_n[m_2(x_0) - m_1(x_0)]}\frac {M_2(t_n)}{M_1(t_n)}} \,
= \, e^{t_n \Delta} \left[ 1 + O(\mu (t_n)) \right],
\end{align*}
where the last equality holds using Assumption \ref{ass-ratio rates} (\ref{mgf}). This gives
$$ 
 \frac{1}{t_n} \log \left( \frac{M^1(t_n)}{M^0(t_n)}  \right) 
= \Delta +  O(\frac{\mu (t_n)}{t_n}  ).
$$

Let us now focus on the terms $\frac{1}{t_n} \log \left( \frac{\hat{M}^j(t_n)}{M^j(t_n)}  \right),$ the two estimation errors. We write
$$
\hat{M}^j(t_n) = \frac{\frac{1}{nh_n^k} \sum_{p=1}^n e^{t_n Z_p} K(\frac{X_p - x_j}{h_n})} {\frac{1}{nh_n^k}  \sum_{p=1}^n  K(\frac{X_p - x_j}{h_n})} = \frac{\hat{N}^j(t_n)}{\hat{D}^j},
$$
and have  
\begin{equation}\label{estim1}
\frac{\hat{M}^j(t_n)}{M^j(t_n)} =  \frac{\hat{N}^j(t_n)}{\hat{D}^j M^j(t_n)} = \frac{f_X(x_j)}{\hat{D}^j} \frac{\hat{N}^j(t_n)}{ f_X(x_j) M^j(t_n)}. 
\end{equation}

In what follows, we treat separately the two ratios appearing in the last equality in (\ref{estim1}), showing that they both converge to 1. Part of the reasoning will be different from usual kernel regression. Indeed,for the second ratio, we need to keep the denominator to compute the convergence rate to counterbalance the numerator going to infinity, as the parameter $t_n$ goes to infinity.

Under Assumptions \ref{ass-kernel} and \ref{ass-convergence}, we know from usual results on kernel density estimation 
that when computing the Mean Square Error of the term $\frac{\hat{D}^j}{f_X(x_j)},$ the bias is of order $h_n^2$ and the variance of order $\frac{1}{n h_n^k}$, so that 
\begin{equation}
 \frac{\hat{D}^j}{f_X(x_j)} = 1 + O_{\mathbb{P}} \left( h_n^4 + \frac{1}{n h_n^k} \right) ^{\frac{1}{2}}.
 \label{estim2}
\end{equation}

As for the second ratio in the decomposition of (\ref{estim1}) , the dependence in $t_n$ requires new assumptions when computing bias and variance.
For the bias term we denote $G_n(x)= f_X(x) \mathbb{E}(e^{t_n Z}|X=x) $, then by definition of the estimator, 
$$\mathbb{E}(\hat{N}^j(t_n)) = \mathbb{E} \left(\frac{1}{n h_n^k} \sum_{p=1}^n e^{t_n Z_p} K(\frac{X_p - x_j}{h_n}) \right)
= \int_{U \in \R^k} G_n(x_j + h_n U) K(U) \mathrm{d}U.$$

By Assumption \ref{ass-convergence}, $G_n$ is twice continuously differentiable. Since the kernel is of order 2 (Assumption \ref{ass-kernel}), by virtue of the Mean Value Theorem, we have
$$\mathbb{E} \left( \frac{\hat{N}^j(t_n)}{ f_X(x_j) M^j(t_n)} \right) - 1 = \frac{1}{G_n(x_j)} \int \frac{h_n^2}{2} U'. \nabla^2 G_n [x_j + h_n \tau_n(U) U]. U K(U) \mathrm{d}U$$
where $\tau_n(u) \in [0;1]$ and  $\nabla^2 G_n(x)$ is the hessian matrix of the function $G_n$ evaluated at $x$. We know that 
$G_n(x) = f_X(x) [\lambda e^{t_n m_1(x)} M_1(t_n) + (1-\lambda) e^{t_n m_2(x)} M_2(t_n)].$ 
Twice differentiation gives

\begin{align*}
\nabla^2 G_n(x)  =  & \, \lambda e^{t_n m_1(x)} M_1(t_n)\ \mathrm{\{}  t_n^2 f_x(x) \nabla m_1(x) \nabla m_1(x)' \\
& + t_n (\nabla m_1(x) \nabla f_X(x)' + \nabla f_X(x) \nabla m_1(x)' + f_X(x) \nabla^2 m_1(x) )  \\
&  + \nabla^2 f_X(x)  \mathrm{\}} \\
&  + (1-\lambda) e^{t_n m_2(x)} M_2(t_n)\  \mathrm{\{}  t_n^2 f_x(x) \nabla m_2(x) \nabla m_2(x)'  \\
& + t_n (\nabla m_2(x) \nabla f_X(x)' + \nabla f_X(x) \nabla m_2(x)' + f_X(x) \nabla^2 m_2(x) ) \\
& + \nabla^2 f_X(x)  \mathrm{\}}. \\
  = & \, \lambda \, e^{t_n m_1(x)} M_1(t_n) \{  t_n^2 a_1(x) + t_n b_1(x) + c_1(x)\} \\
&  + (1-\lambda) \, e^{t_n m_2(x)} M_2(t_n)  \{  t_n^2 a_2(x) + t_n b_2(x) + c_2(x)\}.
\end{align*}

By boundedness of the first order partial derivatives of $m_i,$ $i=1,2,$ 
$$\exists \delta, \forall (x,U) \in \R^k \times \R^k, |m_i(x+h_n \tau_n(U) U) - m_i(x)| \leq \delta h_n ||U||,$$
implying that $e^{t_n m_i(x+h_n \tau_n(U) U) - m_i(x)} \leq e^{\delta t_n h_n ||U||}.$ 
Therefore, as $G_n(x) \geq f_X(x) \lambda e^{t_n m_1(x)} M_1(t_n) ,$ 
$$\frac{\lambda e^{t_n m_1(x_j + h_n \tau_n(U) U)} M_1(t_n)}{G_n(x_j)} \leq \frac{e^{\delta h_n t_n ||U||}}{f_X(x_j)} \leq \frac{e^{C ||U||}}{f_X(x_j)} ,$$
for some $C \leq \alpha_0,$ for $n$ large enough, under Assumption \ref{ass-convergence} (\ref{c-bias}). The same holds for the $(1-\lambda)$ term.
By Assumption \ref{ass-convergence}, $a_1(x + h_n \tau_n(U) U)$ is bounded by a constant as well as the other coefficients of the $t_n$ polynomial in the expression of $\nabla^2 G_n [x_j + h_n \tau_n(U) U]$. This, together with the previous argument, implies that $\frac{1}{G_n(x_j)} \int U'. \nabla^2 G_n [x_j + h_n \tau_n(U) U]. U K(U) \mathrm{d}U = O(t_n^2).$ The rate of the bias term can therefore be bounded,
$$\mathbb{E} \left( \frac{\hat{N}^j(t_n)}{ f_X(x_j) M^j(t_n)} \right) - 1 = O (t_n h_n)^2. $$

For the variance term, an upper bound is
\begin{align*}
 & \frac{1}{n } \mathbb{E}  \left[ \left(\frac{1}{h_n^k M^j(t_n) f_X(x_j)} e^{t_n Z} K(\frac{X - x_j}{h_n})\right)^2 \right] \\
&= \frac{1}{n [ h_n^k M^j(t_n) f_X(x_j)]^2} \int \mathbb{E}(e^{2t_n Z}|X)K(\frac{X - x_j}{h_n})^2 f_X(X) \mathrm{d}X \\
&= \frac{1}{nh_n^k} \int \frac{\mathbb{E}(e^{2t_n Z}|h_n U + x_j)}{\mathbb{E}(e^{t_n Z}|x_j)^2} \frac{f_X(h_n U + x_j)}{f_X(x_j)^2} K(U)^2 \mathrm{d}U \\
&= \frac{1}{nh_n^k} \int \frac{\lambda e^{2t_n m_1(h_n U + x_j)}M_1(2t_n) + (1 - \lambda) e^{2t_n m_2(h_n U + x_j)}M_2(2t_n)}{\left( \lambda e^{t_n m_1(x_j)}M_1(t_n) + (1 - \lambda) e^{t_n m_2(x_j)}M_2(t_n)\right)^2} \ \frac{f_X(h_n U + x_j)}{f_X(x_j)^2} K(U)^2 \mathrm{d}U \\
& \leq \frac{1}{nh_n^k} \frac{M_1(2t_n)}{M_1(t_n)^2}\int e^{2 \delta t_n  h_n ||U||} \frac{\lambda  + (1 - \lambda) e^{2t_n (m_2(x_j) - m_1(x_j)}\frac{M_2(2t_n)}{M_1(2t_n)}}{\left( \lambda  + (1 - \lambda) e^{t_n (m_2(x_j) - m_1(x_j))} \frac{M_2(t_n)}{M_1(t_n)}\right)^2} \ \frac{f_X(h_n U + x_j)}{f_X(x_j)^2} K(U)^2 \mathrm{d}U
.
\end{align*}
Using Assumption \ref{ass-convergence} (\ref{c-bias}) and Assumption (\ref{ass-ratio rates}) for $n$ large enough, the integrand is bounded above by $ C' e^{C ||U||} K(U)^2, \, \forall \, U \in \R^k,$ for some $C$ independent of $n$, $C \leq \alpha_0,$ $C' > 0.$ Assumption (\ref{ass-kernel}) and (\ref{ass-convergence}) guarantee that the variance is of order $O ( \frac{1}{n h_n^k} \frac{M_1(2t_n)}{M_1(t_n)^2}). $
Therefore, 
\begin{equation}
\frac{\hat{N}^j(t_n)}{ f_X(x_j) M^j(t_n)} = 1 + O_{\mathbb{P}} \left( (t_n h_n)^4 + \frac{1}{n h_n^k} \frac{M_1(2t_n)}{M_1(t_n)^2} \right) ^{\frac{1}{2}}.
\label{estim3}
\end{equation}
With (\ref{estim1}), (\ref{estim2}) and (\ref{estim3}), and given that by Jensen's inequality $\frac{M_1(2t_n)}{M_1(t_n)^2} \geq 1,$ the second ratio in (\ref{estim1}) dominates. 
$\frac{\hat{M}^j(t_n)}{M^j(t_n)} - 1 = O_{\mathbb{P}} \left( (t_n h_n)^4 + \frac{1}{n h_n^k} \frac{M_1(2t_n)}{M_1(t_n)^2} \right) ^{\frac{1}{2}}$.

This finally gives
\begin{align*}
\hat{\Delta} - \Delta &=  O(\frac{\mu (t_n)}{t_n}) + \frac{1}{t_n} \log( \left( 1 +  O_{\mathbb{P}} \left( (t_n h_n)^4 + \frac{1}{n h_n^k} \frac{M_1(2t_n)}{M_1(t_n)^2} \right) ^{\frac{1}{2}} \right)^2 )\\
&= O_{\mathbb{P}} \left[ \frac{\mu (t_n)}{t_n} + \frac{1}{t_n} \left( (t_n h_n)^4 + \frac{1}{n h_n^k} \frac{M_1(2t_n)}{M_1(t_n)^2} \right)^{\frac{1}{2}} \right],
\end{align*}
since we assumed that $\frac{1}{n h_n^k} \frac{M_1(2t_n)}{M_1(t_n)^2} \underset{n \to \infty}{\rightarrow} 0.$
\end{proof}

 \subsection*{Proof 1 part 2}
\begin{proof}[\textupandbold{Proof of Proposition~\ref{prop:estimation} \ref{nabla}}]
The estimator is $\hat{\nabla} = \frac{-i}{a} \mbox{ Log} \left( \frac{\hat{\phi}^1(s_n+a)}{\hat{\phi}^0(s_n+a)}  \left( \frac{\hat{\phi}^1(s_n)}{\hat{\phi}^0(s_n)} \right)^{-1}   \right).$ We first compute the rate of convergence of $\frac{\hat{\phi}^0(s_n)}{\hat{\phi}^1(s_n)}$, in a fashion similar to the proof above. From the identification result in Section \ref{sec:switch}, we know
$$
\lim_{s \rightarrow \infty} \frac{-i}{a}  \mbox{ Log} \left( \frac{\phi(s+a|x_1)}{\phi(s+a|x_0)} \left( \frac{\phi(s|x_1)}{\phi(s|x_0)} \right)^{-1}  \right) = \frac{1}{a} \left( a \nabla + 2 \pi \left \lfloor{\frac{1}{2} - \frac{a \nabla}{2 \pi}}\right \rfloor \right) .
$$
Because we do not know the interval on which the identifying equation will be a constant of $a$, we  plug in a sequence $a_n$ going to zero instead of a fixed $a$. For the approximation error, we have
\begin{align*}
\frac{\phi^1(s_n)}{\phi^0(s_n)}& = \frac{\lambda e^{i s_n m_1(x_1)}\phi_1(s_n) + (1 - \lambda)
    e^{s_n m_2(x_1)}\phi_2(s_n)}{\lambda e^{i s_n m_1(x_0)}\phi_1(s_n) + (1 - \lambda)
    e^{s_n m_2(x_0)}\phi_2(s_n)}\\
& = e^{i s_n \nabla} \frac{\frac{\lambda}{1-\lambda} e^{i s_n (m_1(x_1) - m_2(x_1))} \frac{\phi_1(s_n)}{\phi_2(s_n)} + 1}{\frac{\lambda}{1-\lambda} e^{i s_n (m_1(x_0) - m_2(x_0))} \frac{\phi_1(s_n)}{\phi_2(s_n)} + 1}\\
& = e^{i s_n \nabla} (1 + O(f(s_n))).
\end{align*}

To compute the estimation error, the scheme is initially similar to the previous proof. We write 
$$\hat{\phi}^j(s_n) = \frac{\frac{1}{n h_n^k} \sum_{p=1}^n e^{is_n Z_p} K(\frac{X_p - x_j}{b_n})} {\frac{1}{n h_n^k} \sum_{p=1}^n  K(\frac{X_p - x_j}{b_n})} = \frac{\hat{num}^j(s_n)}{\hat{denom}^j}
$$
and work with an equation similar to (\ref{estim1}), here
$$ \frac{\hat{\phi}^j(s_n)}{\phi^j(s_n)}= \frac{f_X(x_j)}{\hat{denom}^j}  \ \frac{\hat{num}^j(s_n)}{ f_X(x_j) \phi^j(s_n)}
$$
We compute the convergence rate of the ratios in the last equality. 

As in (\ref{estim2}), we know
$\frac{\hat{denom}^j}{f_X(x_j)} = 1 + O_{\mathbb{P}} \left( b_n^4 + \frac{1}{n b_n^k} \right) ^{\frac{1}{2}}.$
Now, let $A_n = \frac{\hat{num}^j(s_n)}{ f_X(x_j) \phi^j(s_n)}:$ $A_n \in \mathbb{C}.$ Working with complex numbers for this proof, we use $|.|$ to denote a modulus. Let us focus on the bias term of $A_n$. We write $g_n(x) = f_X(x) \mathbb{E}(e^{i s_n Z}|X=x)$ so that $A_n = \frac{\hat{num}^j(s_n)}{g_n(x)}$.

Since $g_n(x) = f_X(x) (\lambda e^{i s_n m_1(x)}\phi_1(s_n) + (1 - \lambda) e^{i s_n m_2(x)}\phi_2(s_n)),$ we denote $G_n^{lc}(x)=\cos(s_n m_l(x))f_X(x)$ and $ G_n^{ls}(x)=\sin(s_n m_l(x))f_X(x)$ for $l=1,2.$ Then we have
\begin{align*}
\mathbb{E}(\hat{num}^j(s_n))  = &\mathbb{E} \left(\frac{1}{n b_n^k} \sum_{p=1}^n e^{i s_n Z_p} K(\frac{X_p - x_j}{b_n}) \right)
= \int_{U \in \R^k} g_n(x_j + b_n U) K(U) \mathrm{d}U,\\
= & \lambda \phi_1(s_n) \int [\cos(s_n m_1(x_j + b_n U)) + i \sin(s_n m_1(x_j + b_n U))] f_X(x_j + b_n U) K(U) \mathrm{d}U \\
& + (1 - \lambda) \phi_2(s_n) \int [\cos(s_n m_2(x_j + b_n U)) + i \sin(s_n m_2(x_j + b_n U))] f_X(x_j + b_n U) K(U) \mathrm{d}U \\
= & \lambda \phi_1(s_n) \int [G_n^{1c} (x_j + b_n U) + i G_n^{1s} (x_j + b_n U) ] K(U) \mathrm{d}U \\
& + (1 - \lambda) \phi_2(s_n) \int [G_n^{2c} (x_j + b_n U) + i G_n^{2s} (x_j + b_n U) ] K(U) \mathrm{d}U
\end{align*}

Using the assumption that the kernel is of order 2 (Assumption \ref{ass-kernel}),
\begin{equation}
\int G_n^{1c} (x_j + b_n U) K(U) \mathrm{d}U - G_n^{1c}(x_j), \nonumber \\
= \int \frac{b_n^2}{2} U' \nabla^2 G_n^{1c} [x_j + b_n \tau_n(U) U] \, U K(U) \, \mathrm{d}U, \label{estim4}
\end{equation}
where $\tau_n(U) \in [0;1]$ and  $\nabla^2 G_n^{1c}(x)$ is the hessian matrix of the function $G_n^{1c}$ evaluated at $x$. That is,
\begin{align*}
\nabla^2 G_n^{1c}(x) = & - s_n^2 f_X(x) \cos(s_n m_1(x)) \nabla m_1(x) \nabla m_1(x)' \\
& - s_n \sin(s_n m_1(x)) [\nabla m_1(x) \nabla f_X(x)' + \nabla f_X(x) \nabla m_1(x)' + f_X(x) \nabla^2 m_1(x)] \\
& + \cos(s_n m_1(x)) \nabla^2 f_X(x).
\end{align*}
Similarly to what is done in the first part of this proof, Assumption \ref{ass-convergence} guarantees that
$
\int G_n^{1c} (x_j + b_n U) K(U) \mathrm{d}U - G_n^{1c}(x_j) = O(b_n s_n)^2.
$
The same rate applies for $G_n^{1s}, G_n^{2c}$ and $G_n^{2s}$, implying 
$$
\mathbb{E}(\hat{num}^j(s_n))  = \int g_n(x_j + b_n U) K(U) \mathrm{d}U = g_n(x_j) + O( \, (b_n s_n)^2 \, [\lambda |\phi_1(s_n)| + (1 - \lambda) |\phi_2(s_n)| \, ]\, ),
$$
which gives, for the bias term,
\begin{align*}
\mathbb{E}(A_n) & = \frac{1}{ f_X(x_j) \phi^j(s_n)} \mathbb{E}( \hat{num}^j(s_n)) \\
& = 1 + O \left( (b_n s_n)^2 \frac{1}{f_X(x_j)} \frac{\lambda |\phi_1(s_n)| + (1 - \lambda) |\phi_2(s_n)|}{\lambda e^{i s_n m_1(x_j)}\phi_1(s_n) + (1 - \lambda) e^{i s_n m_2(x_j)}\phi_2(s_n)} \right)\\
& = 1 + O \left( (b_n s_n)^2 \frac{1}{ e^{i s_n m_2(x_j)}} \frac{\lambda \frac{|\phi_1(s_n)|}{|\phi_2(s_n)|} + (1 - \lambda)}{\lambda  \frac{\phi_1(s_n)}{|\phi_2(s_n)|} e^{i s_n (m_1(x_j) - m_2(x_j) )} + (1 - \lambda)} \right) \\
& = 1 + O(b_n s_n)^2,
\end{align*}
where the last equality comes from Assumption \ref{ass-ratio rates} (\ref{cf}).
As for the variance term, we write
\begin{align*}
Var(\frac{\hat{num}^j(s_n)}{ f_X(x_j) \phi^j(s_n)})& = \frac{1}{f_X(x_j)^2 |\phi^j(s_n)|^2} \frac{1}{n b_n^{2k}} Var(e^{i s_n Z} K(\frac{X - x_j}{b_n}))\\
& \leq \frac{1}{f_X(x_j)^2 |\phi^j(s_n)|^2}  \frac{1}{n b_n^{2k}} \mathbb{E}(|e^{i s_n Z} K(\frac{X - x_j}{b_n})|^2) \\
& \leq \frac{1}{ |\phi^j(s_n)|^2}  \frac{1}{n b_n^{k}} \frac{\int f_X(x_j + b_n U) K^2(U) \mathrm{d}U}{f_X(x_j)^2},
\end{align*}
and Assumption \ref{ass-convergence} (\ref{c-bandwidth}) guarantees that in the last equality, the third term in the product converges to $\frac{\int K^2(U) \mathrm{d}U}{f_X(x_j)}.$ Moreover,
\begin{align*}
|\phi^j(s_n)| & =|\lambda e^{i s_n m_1(x_j)}\phi_1(s_n) + (1 - \lambda) e^{i s_n m_2(x_j)}\phi_2(s_n)|\\
& = | \phi_2(s_n)| \ \left| \lambda e^{i s_n m_1(x)} \frac{\phi_1(s_n)}{\phi_2(s_n)} + (1 - \lambda) e^{i s_n m_2(x)} \right| \sim_{n \to \infty} (1- \lambda)|\phi_2(s_n)|,
\end{align*}
therefore implying
$ Var(A_n) = O(\frac{1}{n b_n^{k} |\phi_2(s_n)|^2 }).$

From those two computations, the following reasoning gives a convergence rate for $A_n$ :
$Bias(\Re(A_n)) = \Re(Bias(A_n)) = O(b_n s_n)^2.$ Similarly $Bias(\Im(A_n))=  O(b_n s_n)^2$. Plus, by definition for a complex random variable $Var(A_n)=Var(\Re(A_n)) + Var(\Im(A_n))$: both the variances of the real part and the imaginary part are smaller than the variance of $A_n$. 
An upper bound of the rates of convergence of the Mean Square Error of the real and imaginary parts is therefore obtained,
$\Re(A_n) - 1 = O_{\mathbb{P}} ((b_n s_n)^4 + \frac{1}{n b_n^{k} |\phi^j(s_n)|^2 } )^{1/2},$ and $\Im(A_n) = O_{\mathbb{P}} ((b_n s_n)^4 + \frac{1}{n b_n^{k} |\phi^j(s_n)|^2 } )^{1/2}$. 
This gives,
$$|A_n -1| = O_{\mathbb{P}} ((b_n s_n)^4 + \frac{1}{n b_n^{k} |\phi^j(s_n)|^2 } )^{1/2},$$
so that the estimation error is
\begin{align*}
\frac{\hat{\phi}^j(s_n)}{\phi^j(s_n)}&=1+ O_{\mathbb{P}} \left[ \left( b_n^4 + \frac{1}{n b_n^k} \right) ^{\frac{1}{2}} + \left((b_n s_n)^4 + \frac{1}{n b_n^{k} |\phi^j(s_n)|^2 } \right)^{1/2} \right] \\
& = 1+O_{\mathbb{P}} \left((b_n s_n)^4 + \frac{1}{n b_n^{k} |\phi^j(s_n)|^2 } \right)^{1/2}.
\end{align*}
Finally we obtain
\begin{align*}
\frac{\hat{\phi}^1(s_n)}{\hat{\phi}^0(s_n)} & = \frac{\hat{\phi}^1(s_n)}{\phi^1(s_n)}  \left( \frac{\hat{\phi}^0(s_n)}{\phi^0(s_n)} \right)^{-1} \frac{\phi^1(s_n)}{\phi^0(s_n)}\\
& =  e^{i s_n \nabla} [1 + O_{\mathbb{P}}(f(s_n))] \ \left[ 1 + O_{\mathbb{P}} \left((b_n s_n)^4 + \frac{1}{n b_n^{k} |\phi^j(s_n)|^2 } \right)^{1/2} \right].
\end{align*}  
Plugging in this expression in the definition of the estimator, we obtain
\begin{align*}
\hat{\nabla}  = & \frac{-i}{a_n} \mbox{ Log} \left( \frac{\hat{\phi}^1(s_n+a_n)}{\hat{\phi}^0(s_n+a_n)}  \left( \frac{\hat{\phi}^0(s_n)}{\hat{\phi}^1(s_n)} \right)  \right) \\
 = & \frac{-i}{a_n} \mbox{ Log} \mathrm{\{} e^{i (s_n +a_n) \nabla} [1 + O_{\mathbb{P}}(f(s_n + a_n))] [1 + O_{\mathbb{P}} \left((b_n (s_n + a_n))^4 + \frac{1}{n b_n^{k} |\phi^j(s_n + a_n)|^2 } \right)^{1/2} ] \\
 & e^{-i s_n \nabla} [1 + O_{\mathbb{P}}(f(s_n))] [1 + O_{\mathbb{P}} \left((b_n s_n))^4 + \frac{1}{n b_n^{k} |\phi^j(s_n)|^2 } \right)^{1/2} ] \mathrm{\}}\\
 = &  \frac{-i}{a_n} \mbox{ Log}  \mathrm{\{} e^{i a_n \nabla} [ 1 + O_{\mathbb{P}}(f(s_n+a_n))  + O_{\mathbb{P}}(f(s_n)) + O_{\mathbb{P}} \left((b_n s_n)^4 + \frac{1}{n b_n^{k} |\phi^j(s_n + a_n)|^2 } \right)^{1/2} \\
 & + O_{\mathbb{P}} \left((b_n s_n)^4 + \frac{1}{n b_n^{k} |\phi^j(s_n)|^2 } \right)^{1/2}  ] \mathrm{\}}.
\end{align*}
As the term multiplying $e^{i a_n \nabla}$ in the $\mbox{Log}$ converges to $1$, and eventually $a_n \nabla \in (-\pi; \pi),$ the expression above becomes
\begin{align*}
\hat{\nabla}  = \frac{-i}{a_n} \mathrm{\{} i a_n \nabla + \mbox{ Log} [ 1 + & O_{\mathbb{P}}(f(s_n+a_n))  + O_{\mathbb{P}}(f(s_n)) \\
+ & O_{\mathbb{P}} \left((b_n s_n)^4 + \frac{1}{n b_n^{k} |\phi^j(s_n + a_n)|^2 } \right)^{1/2}
 & + O_{\mathbb{P}} \left((b_n s_n)^4 + \frac{1}{n b_n^{k} |\phi^j(s_n)|^2 } \right)^{1/2}  ] \mathrm{\}},
 \end{align*}
that is, using the first order approximation of the principal value of the log around 1,
$$
\hat{\nabla} = \nabla +\frac{1}{a_n} O_{\mathbb{P}} \left(  f(s_n+a_n) +f(s_n) +  \left((b_n s_n)^4 + \frac{1}{n b_n^{k} |\phi^j(s_n + a_n)|^2 } \right)^{1/2} + \left((b_n s_n)^4 + \frac{1}{n b_n^{k} |\phi^j(s_n)|^2 } \right)^{1/2} \right ).
$$
\end{proof}

The only restriction imposed on the tuning parameter $a_n$ is that it converges to $0$.

For the sake of simplicity, we now write $\hat{\Delta} - \Delta = O_{\mathbb{P}} (\alpha_n)$ and $\hat{\nabla} - \nabla = O_{\mathbb{P}} (\beta_n )$. The rates $\alpha_n$ and $\beta_n$ depend on the distributions of the error terms, and we show here that the rates are polynomial in $n$ if these distributions are normal. 
 
Indeed if $\epsilon_1|x \sim \mathcal{N}(0, \sigma_1^2)$ and $\epsilon_2|x \sim \mathcal{N}(0, \sigma_2^2),$ with $\delta = \sigma_1^2 - \sigma_2^2 >0 ,$ then Assumption \ref{ass-ratio rates} is satisfied.
For the ratio of the mgf, $\forall \epsilon >0 \mbox{, } e^{\epsilon t} \frac{M_2(t)}{M_1(t)}  = e^{\epsilon t - \frac{\delta}{2} t^2} \underset{t \rightarrow \infty }{=} O(\mu(t)),$ with $\mu(t) = e^{-(\frac{\delta}{2}-\nu) t^2} \to 0,$ as $t \rightarrow \infty,$ for some $0 < \nu < \frac{\delta}{2}.$ And as for the ratio of the characteristic functions, $\frac{\phi_1(s)}{\phi_2(s)} = e^{- \frac{1}{2} \delta s^2} \underset{s \rightarrow \infty }{=} O(f(s)),$ with $f(s) = e^{- \frac{1}{2} \delta s^2} \xrightarrow [s \rightarrow \infty] {} 0.$
We take a fixed $a$ in the definition of $\hat{\nabla}$ here to simplify the computations, assuming $a$ is small enough. Applying the results from the estimation proofs, the convergence rates are 
\begin{enumerate}[\textup{(}i\textup{)}]
\item $\hat{\Delta} - \Delta = \frac{1}{t_n} O_{\mathbb{P}} \left[e^{-(\frac{\delta}{2}-\nu) t_n^2} + \left( (t_n h_n)^4 + \frac{1}{n h_n^k} e^{\sigma_1^2 t_n^2} \right)^{\frac{1}{2}} \right],$
\item $\hat{\nabla} - \nabla = O_{\mathbb{P}} \left[ e^{- \frac{1}{2} \delta s_n^2} +  \left((b_n s_n)^4 + \frac{1}{n b_n^{k} } e^{\sigma_2^2(s_n+a)^2} \right)^{\frac{1}{2}} \right].$
\end{enumerate}
One can show that with the appropriate choice of the sequences $t_h, h_n, s_n,$ and $b_n$, the rates are polynomial in $n$. For example, it is the case if $k=1$, $h_n= n^{\frac{-1}{5} + \epsilon}$, $t_n=\frac{1}{\sigma_1} (\epsilon \log(n))^{\frac{1}{2}}$, $s_n=\frac{1}{\sigma_2} ( \beta \log(n))^{\frac{1}{2}}$ and $b_n = n^{\frac{-1}{5} + \beta}$ for $\epsilon, \beta < \frac{1}{5}.$


\medskip

\subsection*{Proof 2} 
 
We now focus on the estimation of the remaining objects. We showed that if $\lambda \in (0,1),$ then
$\lambda = \frac{\mathbb{E}(Z|X=x_1)- \mathbb{E}(Z|X=x_0) - \nabla}{\Delta - \nabla}.$ A natural estimator is therefore $$\hat{\lambda}=\frac{\hat{E}(Z|X=x_1)- \hat{E}(Z|X=x_0) - \hat{\nabla}}{\hat{\Delta} - \hat{\nabla}},$$ where $\hat{E}(Z|X=.)$ is the usual multivariate kernel regression estimator,
$ \hat{E}(Z|X=x) = \frac{\sum_{p=1}^n Z_p K(\frac{X_p - x}{d_n})} {\sum_{p=1}^n  K(\frac{X_p - x_j}{d_n})}$ where the kernel does not have to be the one used for the previous estimators but will be written $K$ for the sake of simplicity. Similarly the point estimation of the regression functions $m_1$ and $m_2$ is derived from Equation (\ref{mID}). Writing $\hat{C}=\frac{1}{2} \left (\hat{E}(Z^2|X=x_0)  -  \hat{E}(Z^2|X=x_1) + \hat{\lambda} \hat{\Delta}^2 + (1 - \hat{\lambda}) \hat{\nabla}^2 \right),$ our estimators of $m_1(x_0)$ and $m_2(x_0)$ are 

\begin{equation}
\begin{bmatrix}
\hat{m}_1(x_0) \\ \hat{m}_2(x_0)
\end{bmatrix}
=
\begin{pmatrix}
\hat{\lambda}^{-1} &  0
\\
0 & (1 - \hat{\lambda})^{-1} 
\end{pmatrix}
\begin{pmatrix}
 -\hat{\Delta} &  -\hat{\nabla}
\\
1 & 1 
\end{pmatrix}
^{-1}
\begin{bmatrix}
\hat{C} \\ \hat{E}(Z|X=x_0)
\end{bmatrix}.
\end{equation}

The convergence rate of these estimators can be computed easily. With the usual assumptions for kernel estimation  and the appropriate choice of bandwidths $d_n=n^{\frac{-1}{k+4}},$ it is known that $\hat{E}(Z|X=x) - \mathbb{E}(Z|X=x) = O_{\mathbb{P}} (n^{-\frac{2}{k+4}})$ and $\hat{E}(Z^2|X=x) - \mathbb{E}(Z^2|X=x) = O_{\mathbb{P}} (n^{-\frac{2}{k+4}})$, see, e.g, \citeasnoun{hardle1994applied}.
Writing $\epsilon_n = n^{-\frac{2}{k+4}} + \alpha_n + \beta_n$, one obtains $\hat{\lambda} = \lambda + O_{\mathbb{P}} (\epsilon_n).$ The estimators of $m_1(x_0)$ and $m_2(x_0)$, being linearizable functions of $\hat{\lambda} $, $\hat{\Delta}$ and $\hat{\nabla}$, their rates of convergence are similarly bounded. This is summarized in the next proposition.

\begin{prop}\label{mrates}
Under Assumptions  \ref{ass-convergence}, \ref{ass-ratio rates}, assuming that $\lambda \in (0,1),$ $K$ satisfies Assumption \ref{ass-kernel}, $d_n \to 0,$ and $n d_n^k \to 0$,
\begin{enumerate}
\item $\hat{\lambda} = \lambda + O_{\mathbb{P}} (\epsilon_n),$
\item $\hat{m_i(x_0)} = m_i(x_0) + O_{\mathbb{P}} (\epsilon_n), \mathrm{ for} \: i=1,2$
\end{enumerate}

\end{prop}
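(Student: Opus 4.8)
The plan is to treat both $\hat{\lambda}$ and the level estimators $\hat{m}_i(x_0)$ as smooth (i.e. $C^1$) transformations of the vector of preliminary estimators $\bigl(\hat{E}(Z|X=x_1),\hat{E}(Z|X=x_0),\hat{E}(Z^2|X=x_0),\hat{E}(Z^2|X=x_1),\hat{\Delta},\hat{\nabla}\bigr)$, and to propagate convergence rates through a delta-method (first-order Taylor) argument. Under Assumptions \ref{ass-kernel} and \ref{ass-convergence}, with bandwidth $d_n = n^{-1/(k+4)}$, standard kernel-regression theory gives $\hat{E}(Z^j|X=x_i) - \mathbb{E}(Z^j|X=x_i) = O_{\mathbb{P}}(n^{-2/(k+4)})$ for $j=1,2$ and $i=0,1$, while Proposition \ref{prop:estimation} supplies $\hat{\Delta}-\Delta = O_{\mathbb{P}}(\alpha_n)$ and $\hat{\nabla}-\nabla = O_{\mathbb{P}}(\beta_n)$. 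Since $\epsilon_n = n^{-2/(k+4)} + \alpha_n + \beta_n$ dominates each constituent rate, every input converges at rate $O_{\mathbb{P}}(\epsilon_n)$, and the whole argument reduces to checking that the relevant maps are continuously differentiable at the population values.

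First I would establish the claim for $\hat{\lambda} = \bigl(\hat{E}(Z|x_1) - \hat{E}(Z|x_0) - \hat{\nabla}\bigr)/(\hat{\Delta}-\hat{\nabla})$. The population denominator is $\Delta-\nabla = [m_1(x_1)-m_1(x_0)] - [m_2(x_1)-m_2(x_0)]$, which is nonzero by the non-parallel regression-function condition (Assumption \ref{ass-a}(\ref{a-nonpara})) maintained for the third identification result on which the estimator is built, since $x_1 \neq x_0$. Consistency of $\hat{\Delta}-\hat{\nabla}$ then guarantees that, with probability tending to one, the empirical denominator is bounded away from zero, so the ratio is well defined. The map $(a,b,c,d)\mapsto (a-b-c)/(d-c)$ is $C^1$ in a neighborhood of the true input, and a first-order expansion writes $\hat{\lambda}-\lambda$ as a bounded linear combination of the input errors plus a strictly smaller remainder; hence $\hat{\lambda}-\lambda = O_{\mathbb{P}}(\epsilon_n)$.

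Next I would handle the levels. The auxiliary quantity $\hat{C} = \tfrac12\bigl(\hat{E}(Z^2|x_0) - \hat{E}(Z^2|x_1) + \hat{\lambda}\hat{\Delta}^2 + (1-\hat{\lambda})\hat{\nabla}^2\bigr)$ is a smooth function of inputs already shown to be $O_{\mathbb{P}}(\epsilon_n)$-consistent; using consistency and the product rule for $O_{\mathbb{P}}$ terms (e.g. $\hat{\lambda}\hat{\Delta}^2 - \lambda\Delta^2 = O_{\mathbb{P}}(\epsilon_n)$), one obtains $\hat{C}-C = O_{\mathbb{P}}(\epsilon_n)$. The level estimators solve the empirical version of equation (\ref{mID}),
\[
\begin{bmatrix} \hat{m}_1(x_0) \\ \hat{m}_2(x_0) \end{bmatrix}
= \begin{pmatrix} \hat{\lambda}^{-1} & 0 \\ 0 & (1-\hat{\lambda})^{-1} \end{pmatrix}
\begin{pmatrix} -\hat{\Delta} & -\hat{\nabla} \\ 1 & 1 \end{pmatrix}^{-1}
\begin{bmatrix} \hat{C} \\ \hat{E}(Z|X=x_0) \end{bmatrix},
\]
and the $2\times2$ matrix has determinant $\nabla-\Delta \neq 0$, while $\lambda\in(0,1)$ keeps the diagonal factor $\mathrm{diag}(\lambda^{-1},(1-\lambda)^{-1})$ away from its singularities. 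Matrix inversion and the remaining operations are therefore $C^1$ near the true values, so a final delta-method expansion delivers $\hat{m}_i(x_0)-m_i(x_0) = O_{\mathbb{P}}(\epsilon_n)$ for $i=1,2$.

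The argument is essentially a continuous-mapping exercise, so no single step is genuinely hard. The two points requiring care are (i) verifying that the denominator $\Delta-\nabla$ and the matrix determinant remain bounded away from zero, which rests on the non-parallel condition together with $\lambda\in(0,1)$, and (ii) confirming that forming ratios and products of the preliminary estimators does not degrade the rate. The main obstacle, such as it is, is bookkeeping: one must ensure that the potentially slow stochastic orders $\alpha_n$ and $\beta_n$ arising from the tail-based estimators $\hat{\Delta},\hat{\nabla}$ are absorbed additively into $\epsilon_n$, rather than inadvertently multiplied into a coarser order, which is exactly what the $C^1$ (linearization) structure of each map prevents.
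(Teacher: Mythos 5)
Your proposal is correct and follows essentially the same route as the paper, which likewise obtains the kernel-regression rate $O_{\mathbb{P}}(n^{-2/(k+4)})$ for $\hat{E}(Z^j|X=x_i)$ and then observes that $\hat{\lambda}$ and $\hat{m}_i(x_0)$ are linearizable functions of $\bigl(\hat{E}(Z|x_i),\hat{E}(Z^2|x_i),\hat{\Delta},\hat{\nabla}\bigr)$ so that the combined rate $\epsilon_n = n^{-2/(k+4)}+\alpha_n+\beta_n$ propagates. Your write-up merely makes explicit the points the paper leaves implicit (the denominator $\Delta-\nabla\neq 0$ via the non-parallel condition and $\lambda\in(0,1)$ keeping the diagonal factor nonsingular), which is a welcome addition rather than a departure.
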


\medskip

\subsection*{Proof 3}

To estimate the CDF of $\epsilon_1$ and $\epsilon_2$, we use Equation (\ref{eq-idF})
and propose the following estimator
\begin{equation*}
\hat{F}_2(z) = 1 - \frac{1}{1-\hat{\lambda}} \sum_{j=0}^{p(n)} \hat{F}(z + j \hat{\delta}(x_1,x_0) + \hat{m}_1(x_1) - \hat{g}(x_0) |x_1)  -  \hat{F}(z + j \hat{\delta}(x_1,x_0) + \hat{m}_2(x_0) |x_0).
\end{equation*}
In this formula $p(n) \in \mathbb{N}$ will be specified later, $\hat{g}(x)=\hat{m}_1(x) - \hat{m}_2(x),$ $\hat{\delta} = \hat{\Delta} - \hat{\nabla}$, and $\hat{F}(.|.)$ is the kernel regression estimator of the conditional cumulative distribution function,
$$\hat{F}(z|x)= \frac{\sum_{j=1}^n \mathbbm{1} (Z_j \leq z) k(\frac{X_j - x}{c_n})} {\sum_{j=1}^n  k(\frac{X_j - x}{c_n})}.$$
The kernel function and the bandwidth may differ from the choices for our previous kernel regression estimators, and will be here written as $k$ and $c_n$ respectively.

\begin{ass}\label{ass-Frate}
\
\begin{enumerate}
\item The probability distribution functions of $\epsilon_1$ and $\epsilon_2$, $f_1$ and $f_2,$ are bounded by a constant $c,$
\item\label{twicediff}  $f_j$ is twice differentiable on $\R,$ and $f_j,f_j', f_j''$ are continuous and bounded for $j=1,2.$
\end{enumerate}
\end{ass}

\begin{ass}\label{ass-Funicv} We assume that $k(.)$ satisfies Assumption \ref{ass-kernel}, and in addition impose,
\begin{enumerate}
\item $||k||_{\infty} < \infty,$
\item The kernel function $k$ has support contained in $[-\frac{1}{2}, \frac{1}{2}]^k,$
\item Assumptions (K-iii) and (K-iv) of \citeasnoun{einmahl2005uniform} hold for $k(.)$,
\item $c_n \geq C' \frac{\log(n)}{n},$ $c_n = O(n^{-\gamma_1}),$ for some $\gamma_1 < 1.$
\end{enumerate}
\end{ass}

The assumptions from \citeasnoun{einmahl2005uniform} are conditions on the covering and measurability properties of the class of functions $\left\lbrace k(\frac{x-.}{c}) ;\ c>0, \, x \in \R^k \right\rbrace$.

\begin{prop}\label{Frate}
Under  \ref{ass-convergence}, \ref{ass-ratio rates}, \ref{ass-Frate} and \ref{ass-Funicv},

$\hat{F}_1(z) - F_1(z)= O_{\mathbb{P}}\left( (p(n)+1)n^{ \frac{-2}{k+4} + a}+ p(n)^2 \epsilon_n + e^{-\gamma_0 p(n)} \right),$ and

$\hat{F}_2(z) - F_2(z)= O_{\mathbb{P}}\left( (p(n)+1)n^{ \frac{-2}{k+4} + a}+ p(n)^2 \epsilon_n + e^{-\gamma_0 p(n)} \right).$
\end{prop}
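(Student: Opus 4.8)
The plan is to decompose the estimation error of $\hat F_2(z)$ into three sources, corresponding to the three terms in the stated rate: truncation of the infinite series in (\ref{eq-idF}), the plug-in error from replacing the finite-dimensional objects $(\lambda,\Delta,\nabla,m_1(x_0),m_2(x_0))$ by the estimates of Propositions~\ref{prop:estimation} and \ref{mrates}, and the error of the conditional-CDF kernel estimator $\hat F(\cdot|\cdot)$. Writing $\delta=\delta(x_1,x_0)$, recall from the derivation preceding (\ref{eq-idF}) that the scaled $j$-th summand equals exactly $F_2(z+(j+1)\delta)-F_2(z+j\delta)$, so the tail $\sum_{j>p(n)}$ telescopes to $1-F_2(z+(p(n)+1)\delta)$. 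Since Assumption~\ref{ass-ratio rates}(\ref{mgfExist}) guarantees that $M_2$ exists on $[0,\infty)$, a Chernoff bound gives $1-F_2(w)\le M_2(t)e^{-tw}$ for every $t>0$, so for fixed $z$ and any fixed $t>0$ the truncation error is $O(e^{-t(p(n)+1)\delta})=O(e^{-\gamma_0 p(n)})$ with $\gamma_0=t\delta$. This is legitimate because $\delta>0$ by assumption and $\hat\delta\to\delta$ in probability, so the evaluation points diverge to $+\infty$ with probability tending to one.

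For the plug-in error I would hold $F$ fixed at its true value and replace the parameters by their estimates inside the finite sum. The key structural feature is that in the $j$-th summand the estimated shift $\hat\delta=\hat\Delta-\hat\nabla$ enters the argument multiplied by $j$, so a perturbation of size $|\hat\delta-\delta|$ is amplified to $j|\hat\delta-\delta|$ in the argument of $F(\cdot|x)$, whereas the remaining shifts $\hat m_1(x_1)-\hat g(x_0)$ and $\hat m_2(x_0)$ contribute an $O_{\mathbb{P}}(\epsilon_n)$ perturbation that does not grow with $j$. Using the Lipschitz continuity of $F(\cdot|x)$, which follows from the boundedness of $f_1,f_2$ in Assumption~\ref{ass-Frate}, the $j$-th term is perturbed by $O_{\mathbb{P}}((j+1)\epsilon_n)$, where by Proposition~\ref{mrates} all of $\hat\lambda$ and $\hat m_i(x_0)$ (hence $\hat\delta$ and $\hat g$) are $\epsilon_n$-consistent. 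Summing over $j=0,\dots,p(n)$ yields $\sum_{j=0}^{p(n)}(j+1)\epsilon_n=O(p(n)^2\epsilon_n)$; the error in the prefactor $1/(1-\hat\lambda)$ multiplies the convergent (hence $O(1)$) sum and contributes only $O_{\mathbb{P}}(\epsilon_n)$, which is dominated.

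For the conditional-CDF error I would replace $F$ by $\hat F$ in each of the $p(n)+1$ summands. Because the evaluation points $z+j\hat\delta+\hat m_1(x_1)-\hat g(x_0)$ are random and drift to infinity, a pointwise rate is insufficient; instead I would invoke the uniform-in-argument strong approximation for kernel conditional-distribution estimators of \citeasnoun{einmahl2005uniform}, whose hypotheses are imposed in Assumption~\ref{ass-Funicv}. With $c_n=O(n^{-1/(k+4)})$ this gives $\sup_{z,x}|\hat F(z|x)-F(z|x)|=O_{\mathbb{P}}(n^{-2/(k+4)}\sqrt{\log n})$, and absorbing the logarithmic factor into the arbitrarily small slack $a>0$ bounds each summand uniformly by $n^{-2/(k+4)+a}$, so summation over the $p(n)+1$ terms produces $(p(n)+1)n^{-2/(k+4)+a}$. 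Collecting the three contributions by the triangle inequality gives the stated rate for $\hat F_2$. Finally, since $\hat F_1(z)=\hat\lambda^{-1}\bigl[\hat F(z+\hat m_1(x)|x)-(1-\hat\lambda)\hat F_2(z+\hat m_1(x)-\hat m_2(x))\bigr]$, its error is controlled by that of $\hat F_2$ together with one additional kernel-CDF term and the $\epsilon_n$-errors of $\hat\lambda,\hat m_i$, all of order no larger than the bound already obtained, so $\hat F_1$ inherits the same rate.

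The main obstacle is the plug-in term: the amplification of the estimation error of $\hat\delta$ by the running index $j$, which forces the $p(n)^2\epsilon_n$ contribution and dictates how fast $p(n)$ may be allowed to grow relative to $\epsilon_n$ and the kernel rate. Making this rigorous requires Lipschitz bounds that hold uniformly over the random, $n$-dependent (and diverging) evaluation points, together with a joint treatment of the parameter estimates and the uniform conditional-CDF rate so that the cross terms among the three error sources remain of negligible order.
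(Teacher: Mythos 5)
Your proposal is correct and follows essentially the same route as the paper: the same three-way decomposition (tail truncation handled by telescoping plus an exponential Chebyshev/Chernoff bound, the plug-in error controlled via Lipschitz continuity of $F(\cdot|x)$ with the $j$-amplification of $\hat\delta-\delta$ yielding the $p(n)^2\epsilon_n$ term, and the kernel conditional-CDF error bounded uniformly over the diverging evaluation points via Theorem 3 of Einmahl and Mason), with the $1/(1-\hat\lambda)$ prefactor error and the passage to $\hat F_1$ treated identically. No substantive differences.
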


\begin{proof}
Fix $z \in \R.$
Write $\xi_j^0=z + j \delta(x_1,x_0) + m_2(x_0),$ $\hat{\xi}_j^0 = z + j \hat{\delta}(x_1,x_0) + \hat{m}_2(x_0)$, $\xi_j^1=z + j \delta(x_1,x_0) + m_1(x_1) - g(x_0)$ and $\hat{\xi}_j^1=z + j \hat{\delta}(x_1,x_0) + \hat{m}_1(x_1) - \hat{g}(x_0)$.

\begin{align}\label{Fsplit}
\hat{F}_2(z) - F_2(z) = & \frac{1}{1-\lambda} \sum_{j=0}^{p(n)} \left\lbrace \left[\hat{F}(\hat{\xi}_j^1|x_1) - F(\xi_j^1|x_1) \right] - \left[\hat{F}(\hat{\xi}_j^0|x_0) - F(\xi_j^0|x_0) \right] \right\rbrace  \nonumber \\
& - \frac{1}{1-\lambda} \sum_{j=p(n)+1}^\infty \left[ F(\xi_j^1|x_1) - F(\xi_j^0|x_0) \right] \\
& + (\frac{1}{1-\hat{\lambda}} - \frac{1}{1-\lambda}) \sum_{j=0}^{p(n)} \hat{F}(\hat{\xi}_j^1|x_1) - \hat{F}(\hat{\xi}_j^0|x_0). \nonumber
\end{align}
We write $\hat{F}_2(z) - F_2(z) = I_1 - I_2 + I_3,$ and the convergence rate of each part in the right hand side of (\ref{Fsplit}) will be computed separately.

For $I_1$, we write 
\begin{align*}
(1 - \lambda) \, I_1= & \sum_{j=0}^{p(n)}  \left[\hat{F}(\hat{\xi}_j^1|x_1) - F(\hat{\xi}_j^1|x_1) \right] + \sum_{j=0}^{p(n)} \left[ F(\hat{\xi}_j^1|x_1) - F(\xi_j^1|x_1) \right] \\
& - \sum_{j=0}^{p(n)}  \left[\hat{F}(\hat{\xi}_j^0|x_0) - F(\hat{\xi}_j^0|x_0) \right] - \sum_{j=0}^{p(n)} \left[ F(\hat{\xi}_j^0|x_0) - F(\xi_j^0|x_0) \right] .
\end{align*}

We know $\frac{\partial F(z|x)}{\partial z} = f(z|x) = \lambda f_1(z-m_1(x)) + (1-\lambda)f_2(z-m_2(x)),$ and Assumption \ref{ass-Frate} guarantees that $f(y|x)$ is bounded by $c, \, \forall \, (x,y)$  therefore $ y \mapsto F(y|x)$ is Lipschitz continuous with constant $c.$ That is, for $v=0,1,$
$|F(\hat{\xi}_j^v|x_v) - F(\xi_j^v|x_v)| \leq c \, |\hat{\xi}_j^v - \xi_j^v|$
implying
\begin{align*}
|\sum_{j=0}^{p(n)} F(\hat{\xi}_j^v|x_v) - F(\xi_j^v|x_v) | = O_{\mathbb{P}}(p(n)^2 \epsilon_n).
\end{align*}
For the two other terms  in $I_1,$
we write 
$$F_n(.|x_i) = \frac{\mathbb{E}(\mathbbm{1} (Z \leq z) k(\frac{X - x_i}{c_n}))}{\mathbb{E}( k(\frac{X - x_i}{c_n}))}.$$
Then under Assumption \ref{ass-Funicv}, we apply Theorem 3 of \citeasnoun{einmahl2005uniform}, which gives the rate of the supremum of $|| \hat{F}(.|x) - F_n(.|x) ||_{\infty}$ over a certain range of bandwidths and over $x \in I$ where $I$ is a compact subset of $\R^k.$ For the specific bandwidth $b_n$ and taking $I = \left\lbrace x_0,x_1 \right\rbrace$ we then have 
\begin{align}\label{sup1}
\limsup_{n \to \infty}  (nc_n^k)^{1/2} || \hat{F}(.|x) - F_n(.|x) ||_{\infty} & = O_{\mathrm{a.s}}\left(\max(\log \log n,-\log(c_n))^{1/2}\right) \nonumber \\
& = O_{\mathrm{a.s}}\left( (-\log c_n)^{1/2}\right).
\end{align}

We now examine $ F_n(.|x) - F(.|x).$ Write $f(.,.)$ the joint density of $(Z,X)$, then we define
\begin{align*}
F_X(x,z)& = \int_{ z' \leq z} f(z',x)\mathrm{d}z' = F(z|x) f_X(x) = [ \lambda F_1(z - m_1(x)) + (1 - \lambda) F_2(z - m_2(x)) ] f_X(x)
\end{align*}
and  write
$$ F_n(z|x_i) - F(z|x_i)= \frac{\frac{1}{c_n^k} \mathbb{E}(\mathbbm{1} (Z \leq z) k(\frac{X - x_i}{c_n})) - F_X(x_i,z)}{\frac{1}{c_n^k} \mathbb{E}( k(\frac{X - x_i}{c_n})) } + F_X(x_i,z) \left( \frac{1}{\frac{1}{c_n^k} \mathbb{E}( k(\frac{X - x_i}{c_n}))} - \frac{1}{f_X(x_i)}\right).$$
Under Assumption \ref{ass-convergence}, \ref{ass-Frate} and \ref{ass-Funicv}, we know that $\frac{1}{c_n^k} \mathbb{E}( k(\frac{X - x_i}{c_n})) - f_X(x_i) = O(c_n^2).$ 
Similarly, 
$$\frac{1}{c_n^k} \mathbb{E}\left[ \mathbbm{1} (Z \leq z) \, k\left( \frac{X - x_i}{c_n}\right) \right]  - F_X(x_i,z)= \frac{c_n^2}{2} \int_{U \in \R^k}  U' \, \nabla_X^2  F_X (x_i + b_n \tau_n(U) U, z) \, U k(U) \mathrm{d}U ,$$
and Assumption \ref{ass-convergence}, \ref{ass-Frate} and \ref{ass-Funicv} guarantee that $\nabla_X^2  F_X (.,.)$ is uniformly bounded over $\R^{k+1}$. Therefore 
$$\sup_{z \in \R} \left| \ \frac{1}{c_n^k}   \mathbb{E}\left[ \mathbbm{1} (Z \leq z) \, k\left( \frac{X - x_i}{c_n}\right) \right]  - F_X(x_i,z)\right| = O(c_n^2).$$ 
which gives, for $i=0,1,$
\begin{equation}\label{sup2}
|| F_n(.|x_i) - F(.|x_i)||_{\infty} = O(c_n^2).
\end{equation}

Equations (\ref{sup1}) and (\ref{sup2}) give
$||\hat{F}(.|x) - F(.|x) ||_{\infty} = O_{\mathbb{P}}((-\log(c_n))^{1/2} (nc_n^k)^{-1/2} + c_n^2).$ 
For the appropriate choice of $\gamma_1$ in Assumption \ref{ass-Funicv} and for any small $a>0,$
$$\sup_{y \in \R} | \hat{F}(y|x_i) - F(y|x_i) | = O_{\mathbb{P}}( n^{-\frac{2}{k+4} + a}), \; i=0,1.$$
This implies that
$$ \sum_{j=0}^{p(n)}  \left[\hat{F}(\hat{\xi}_j^1|x_i) - F(\hat{\xi}_j^1|x_i) \right]= O_{\mathbb{P}}((p(n)+1)n^{-\frac{2}{k+4} + a}), \; i=0,1. $$
Therefore,
$$I_1 = O_{\mathbb{P}} \left( (p(n)+1)n^{-\frac{2}{k+4} + a} + p(n)^2 \epsilon_n \right).$$

Looking at $I_2,$ by construction the second sum appearing in the right hand side of (\ref{Fsplit}) simplifies to 
$$\frac{1}{1-\lambda} \sum_{j=p(n)+1}^\infty \left[ F(\xi_j^1|x_1) - F(\xi_j^0|x_0) \right] = 1 - F_2(z + (p(n)+1) \delta(x_1,x_0)).$$

Using the exponential version of the Chebyshev's inequality, we have 
$1- F_2(C)= \mathbb{P}(\epsilon_2 >C) \leq e^{-tC}M_2(t)$ using the assumption that the moment generating functions are finite.
Fixing $t_0 \in \R_{+},$ $1 - F_2[z + (p(n)+1) \delta(x_1,x_0)] \leq e^{t_0 z + (p(n)+1) \delta(x_1,x_0) t_0}$ which guarantees the existence of $\gamma_0 > 0$ such that
$I_2 = O(e^{-\gamma_0 \, p(n)}).$

As for $I_3,$ we showed in our computation for $I_1$ that
$\sum_{j=0}^{p(n)} \hat{F}(\hat{\xi}_j^1|x_1) - \hat{F}(\hat{\xi}_j^0|x_0) \xrightarrow [n \to \infty]{} F_2(z).$ 
As
$\frac{1}{1-\hat{\lambda}} - \frac{1}{1-\lambda}=  O_{\mathbb{P}}(\epsilon_n),$ we have
$$I_3 = O_{\mathbb{P}} (\epsilon_n).$$ 
Adding these three parts, we obtain
$$\hat{F}_2(z) - F_2(z) = O_{\mathbb{P}} \left( (p(n)+1)n^{-\frac{2}{k+4} + a}+ p(n)^2 \epsilon_n + e^{-\gamma_0 p(n)} \right). $$

Using the equation $F(z|x)= \lambda F_1(z - m_1(x)) + (1 - \lambda) F_2(z - m_2(x)),$ an estimator of $F_1(z)$ is
$$ \hat{F}_1(z) = \frac{1}{\hat{\lambda}} \left[ \hat{F}(z + \hat{m}_1(x)) - (1 - \hat{\lambda}) \hat{F}_2(z + \hat{m}_1(x) - \hat{m}_2(x)) \right], $$ 
which will converge to $F_1(z)$ at the same rate.

\end{proof}

\textit{
In the case where $\epsilon_n$ is slower than $n^{-2 \: \frac{1 - 2a}{k+4}},$ for some $a,$ which happens when for instance the error terms are normally distributed, then $p_n$ is solution to $\epsilon_n p_n = t_0 e^{-t_0 p_n}.$}

\section{Conclusion}\label{sec:conclusion}
New nonparametric identification results for finite mixture models are
developed.  These open up the possibility of flexibly modeling
economic behavior in the presence of unobserved heterogeneity.

\section{Appendix}

\label{sec:appendix}

\parindent=0em

This Appendix presents the proofs of some of the results presented
in the previous sections.

\begin{proof}[\textupandbold{Proof of Lemma~\ref{lem:thmFE}}]
	
	Define $\delta(x) : = m_2(x) - m_1(x)$, $\dot m_1(x) := m_1(x) - m_1(x_0)$,    $\dot m_2(x) := m_2(x) - m_2(x_0)$,   
	$$
	r(+\infty,x) := \lim_{t \rightarrow + \infty} \frac 1 t \log R(x,t), \quad r(-\infty,x) := \lim_{t \rightarrow - \infty} \frac 1 t \log R(x,t)
	$$	
	and
	$$
	\tilde \lambda_c(x) := \frac{1 - K_{-\infty}(x)+ c}{K_{+\infty}(x) - K_{-\infty}(x) + c}.  
	$$
	In what follows we show that the slopes of $m_1$ and $m_2$ over the interval connecting $x$ and $x_0$, as well as the values of  $\lambda(\cdot)$  at these two points, are all recovered from $r(+\infty,x)$, 	$r(-\infty,x)$ and $\lim_{c \downarrow 0} \tilde \lambda_c(x)$.

	\
	
	\noindent {\it Case (1): $\lambda(x)  = \lambda(x_0) = 1$.}  
	
	With the given structure of the model we have $m_1(x) = m_2(x)$,  $m_1(x_0) = m_2(x_0)$, and $M_1 \equiv M_2$ in this case.  Thus 
	$$
	R(x,t) = \frac{e^{tm_1(x)}}{e^{tm_1(x_0)}} = e^{t \dot m_1(x)}
	$$ 				
	and 
	$$
	\frac 1 t \log R(x,t) = \dot m_1(x), 
	$$
	therefore Condition \ref{cond:FE}\eqref{ass:thm4id1} fails.  On the other hand this means 
	$$
	K_{+\infty}(x) = K_{-\infty}(x) = 1,
	$$ 	
	yielding 
	$$
	\tilde \lambda_c(x) = \frac{1 - 1 + c}{1 - 1 +c} = 1,
	$$
	therefore Condition \ref{cond:FE}\eqref{ass:thm4id2} holds in this case.  Moreover, the values of $\lambda$ are identifiable  from $\lim_{c \downarrow 0}\lambda_c(x)$.   
	
	\
	
	\noindent {\it Case (2): $\lambda(x) < 1, \lambda(x_0) < 1$.  Condition \ref{cond:FE}\eqref{ass:thm4id1} holds.}  
	
	In this case the two slopes $(m_1(x) - m_1(x_0),m_2(x) - m_2(x_0))$ are identified as in the proof of Lemma  \ref{lem:slope}. 
	
	Take $\delta'$ as in the proof of Lemma  \ref{lem:slope}.  We first consider the case with  $t$ tending to $+\infty$.  If $h(\pm\epsilon,t) = O(1)$ holds, then according to the proof of Lemma  \ref{lem:slope} we have 
	$$
	\lim_{t \rightarrow \infty} \frac 1 t \log R(x,t) = m_1(x) - m_1(x_0)
	$$
	for $x \in N^1(x_0,\delta')$
	and consequently  
	$$
	K_{+\infty}(x) = \frac{\lambda(x)}{\lambda(x_0)}.  
	$$	
	If $1/h(\pm\epsilon,t) = O(1)$ then 
	$$
	\lim_{t \rightarrow \infty} \frac 1 t \log R(x,t) = m_2(x) - m_2(x_0)
	$$
	and then 
	$$
	K_{+\infty}(x) =  \frac{1 - \lambda(x)}{1 - \lambda(x_0)}.  
	$$
	With these results we see $K_{+\infty}(x) \neq K_{-\infty}(x)$ iff $\lambda(x) \neq \lambda(x_0)$.  With 
	\begin{eqnarray*}
		\lim_{c \downarrow 0}\tilde \lambda_c(x) &=& \frac{1 - K_{-\infty}(x)}{K_{+\infty}(x) - K_{-\infty}(x) }\\
		&=& \lambda(x).  
	\end{eqnarray*}
	By continuity $\lambda(x_0)$ is identified as $\lim_{x \rightarrow x_0}\lambda(x)$.
	If   $K_{+\infty}(x) = K_{-\infty}(x)$  we can obtain the value of $\lambda(x)$ (and thus $\lambda(x_0)$) as $\lim_{c \downarrow 0}\lambda_c$, as noted in the proof of Lemma \ref{lem:thm1}.

	Now we let $t \rightarrow -\infty$.   If $h(\pm\epsilon,t) = O(1)$ and $1/h(\pm\epsilon,t) = O(1)$  as  $t \rightarrow -\infty$ we have 
	$$
	\lim_{t \rightarrow -\infty} \frac 1 t \log R(x,t) = m_1(x) - m_1(x_0), \quad K_{-\infty}(x) =  \frac{\lambda(x)}{\lambda(x_0)}  
	$$
	and 
	$$
	\lim_{t \rightarrow -\infty} \frac 1 t \log R(x,t) = m_2(x) - m_2(x_0), \quad  K_{-\infty}(x) =  \frac{1 - \lambda(x)}{1 - \lambda(x_0)}  
	$$
	respectively, so once again we identify $\lambda(x)$ and the two slopes by switching $\lambda(x)$ and $\lambda(x_0)$ and $m_1$ and $m_2$.  
	
	If  both $h(\pm\epsilon,t) = O(1)$ and $1/h(\pm\epsilon,t) = O(1)$ hold, $D(x_0)$ = 0.     If $D(x) >0$, for example,  then $r(x,-\infty) = \dot m_1(x)$ and $r(x,+\infty) = \dot m_2(x)$.   (In this case Condition \ref{cond:FE}\eqref{ass:thm4id1} is automatically satisfied.)   $\lambda(x)$  is identified, hence $\lambda(x_0)$ too,  as above.

	\

	\noindent {\it Case (3): $\lambda(x) < 1, \lambda(x_0) < 1$.  Condition \ref{cond:FE}\eqref{ass:thm4id1} fails.}
	
	Wlog suppose $r(x,+\infty) = \dot m_1(x)$, then 
	$$
	K_{+\infty,t}(x) = \frac{\lambda(x) + (1 - \lambda(x))  e^{t\delta(x)} \frac{M_2(t)}{M_1(t)} }{\lambda(x_0) + (1 - \lambda(x_0))  e^{t\delta(x_0)} \frac{M_2(t)}{M_1(t)} },
	$$    
	so for  Condition \ref{cond:FE}\eqref{ass:thm4id2} to hold we need
	$$
	{\lambda(x) + (1 - \lambda(x))  e^{t\delta(x)} \frac{M_2(t)}{M_1(t)} } = {\lambda(x_0) + (1 - \lambda(x_0)))  e^{t\delta(x_0)} \frac{M_2(t)}{M_1(t)} }
	$$
	or 
	$$
	\frac{\lambda(x) - \lambda(x_0)}{1 - \lambda(x_0)} = \left[ e^{t\delta(x_0)} + \frac{1 - \lambda(x) }{1 - \lambda(x_0)} e^{t\delta(x)}   \right] \frac{M_2(t)}{M_2(t)}.
	$$
	Take $x_1 \neq x_0$  in $N^1(x_0,\delta')$.      Since the right hand side of the above equation is positive, we have $\lambda(x_1)  \neq  \lambda(x_0)$.  Then 
	\begin{eqnarray*}
		\frac{  \frac{\lambda(x) - \lambda(x_0)}{1 - \lambda(x_0)}    }{ \frac{\lambda(x_1) - \lambda(x_0)}{1 - \lambda(x_0)} } &=& \frac{  1 + \frac{1 - \lambda(x) }{1 - \lambda(x_0)} e^{t[\delta(x) - \delta(x_0)]}  }
		{   
			1 + \frac{1 - \lambda(x_1) }{1 - \lambda(x_0)} e^{t[\delta(x_1) - \delta(x_0)]}
		}
		\\
		& = &
		\frac{  1 + \frac{1 - \lambda(x) }{1 - \lambda(x_0)} e^{t[  \dot m_2(x)- \dot m_1(x)   ]}  }
		{   
			1 + \frac{1 - \lambda(x_1) }{1 - \lambda(x_0)} e^{t[   \dot m_2(x_1)- \dot m_1(x_1)   ]}
		}.
	\end{eqnarray*}
	In view of the non-parallel  assumption, the right hand does not depend of $t$ only if $\lambda(x) =0$, which is a contradiction.  Thus Case (3) is (correctly) precluded by Condition \ref{cond:FE}.     
	
	\
	
	\noindent {\it Case (4): $\lambda(x) < 1, \lambda(x_0) = 1$.   Condition \ref{cond:FE}\eqref{ass:thm4id1} holds.}
	
	Note that $\lambda(x_0) = 1$ means $m_1(x_0) = m_2(x_0)$, and moreover, with Assumption \ref{ass:FE1}, $M_1$ and $M_2$ are identical.  Then 
	\begin{eqnarray*}
		R(x,t) &=& \frac{\lambda(x) e^{tm_1(x)}M_1(t) + (1 - \lambda(x) )   e^{t m_2(x) } M_1(t)  }{  e^{ tm_1(x_0) }  M_1(t) }
		\\
		&=& \lambda(x) e^{t \dot m_1} + ( 1 - \lambda(x)  e^{t \dot m_2(x) }. 
	\end{eqnarray*}
	If, for example, $\dot m_1(x) > \dot m_2(x)$, $r(x,+\infty) = \dot m_1(x)$ and  $r(x,-\infty) = \dot m_2(x)$, and moreover, 
	\begin{eqnarray*}
		K_{+\infty,t}(x) &=& R(x,t) e^{-t\dot m_1(x)}
		\\
		&=&  \lambda(x)  + ( 1 - \lambda(x)  e^{t [ \dot m_2(x) - \dot  m_1(x) ] }
		\\
		&\rightarrow& \lambda(x) \text{ as } t \rightarrow \infty, 
	\end{eqnarray*}
	that is, $\lambda(x) = K_{+\infty}(x)$.  Proceeding analogously,  we have  
	$
	\lambda(x) = K_{-\infty}(x).
	$
	Use these values in the definition of $\tilde \lambda$, we see that $\lambda(x)$ is identified from $\tilde \lambda(x)$.   Analysis of the case with $\dot m_1(x) < \dot m_2(x)$ is analogous.  And of course $\dot m_1(x) = \dot m_2(x)$ cannot happen.
	
	\
	
	\noindent {\it Case (5): $\lambda(x) < 1, \lambda(x_0) = 1$.   Condition \ref{cond:FE}\eqref{ass:thm4id1} fails.}
	
	As seen in Case (4), in this case we have 
	$$
	R(x,t) = \lambda(x) e^{t \dot m_1} + ( 1 - \lambda(x) )  e^{t \dot m_2(x)}, 
	$$
	and if, for example, $\dot m_1(x) > \dot m_2(x)$ 
	$$
	K_{+\infty,t}(x) = \lambda(x)  + ( 1 - \lambda(x) ) e^{t [ \dot m_2(x) - \dot  m_1(x) ] }.
	$$
	Thus Condition \ref{cond:FE}\eqref{ass:thm4id2} fails and this case is (correctly) precluded.   Analysis of the case with $\dot m_1(x) < \dot m_2(x)$ is analogous, and $\dot m_1(x) \neq \dot m_2(x)$ as above.
	
	\
	
	\noindent Finally, note that $\lambda(x_0) < 1$ then by continuity $\lambda(x) <1$ for every $x \in N^1(x_0,\delta')$ for sufficiently small $\delta'$, so this reduces to either Case (2) or (3).
	
	%
	
\end{proof}

\begin{proof}[\textupandbold{Proof of Proposition~\ref{prop:rational}}]
	The recursive formula in Lemma \ref{lem:qrep}
	then becomes  (NOTE THE USE OF x, not $x_a$)
	\begin{equation}\label{Prec}
	R_{k+1}^j = D_{x^1}\left(\frac{R_k^j}{R_k^k}\right)
	+ t\frac{R_k^j}{R_k^k}D_{x^1}m_{j,k}, j = 1,...,J
	\end{equation}
	with initial conditions
	\begin{equation}\label{Ini}
	R_{2}^j = t D_{x^1}m_{j1}, j = 1,...,J.
	\end{equation}
	For $k = 3$, 
	\begin{align*}
	R_{3}^j 
	&= D_{x^1}\left(\frac{R_2^j}{R_2^2}\right)
	+ t\frac{R_2^j}{R_2^2}D_{x^1}m_{j,2}
	\\
	&=  D_{x^1}\left(\frac{D_{x^1}m_{j1}}{D_{x^1}m_{21}}\right)
	+ t\frac{D_{x^1}m_{j1}}{D_{x^1}m_{21}}D_{x^1}m_{j,2}
	\\
	&=
	\frac{D_{x^1}^2m_{j,1}D_{x^1}m_{2,1} - D_{x^1}^2m_{2,1}D_{x^1}m_{j,1}
		+ t D_{x^1}m_{j,1}D_{x^1}m_{j,2}D_{x^1}m_{2,1}}{(D_{x^1}m_{2,1})^2}
	\\
	&= \frac{P_3^j}{(D_{x^1}m_{2,1})^2} 
	\end{align*}
	where 
	$$
	P_3^j = D_{x^1}^2m_{j,1}D_{x^1}m_{2,1} - D_{x^1}^2m_{2,1}D_{x^1}m_{j,1}
	+ t D_{x^1}m_{j,1}D_{x^1}m_{j,2}D_{x^1}m_{2,1}.
	$$
	Note that $P_3^j$ depends on $x$ (where $m$'s are evaluated) and $t$, so
	it can be interpreted as shorthand for $P_3^j(x,t)$.
	Then 
	\begin{align*}
	R_{4}^j 
	&= D_{x^1}\left(\frac{R_3^j}{R_3^3}\right)
	+ t\frac{R_3^j}{R_3^3}D_{x^1}m_{j,3}
	\\
	&= D_{x^1}\left(\frac{P_3^j}{P_3^3}\right)
	+ t\frac{P_3^j}{P_3^3}D_{x^1}m_{j,3}
	\\
	&= \frac {D_{x^1}P_{3}^jP_{3}^{3} - P_{3}^jD_{x^1}P_{3}^{3} + tP_{3}^jP_{3}^{3} D_{x^1}m_{j,3}}
	{(P_{3}^{3})^2} 
	\\
	&= \frac{P_4^j}{(P_{3}^{3})^2}
	\end{align*}
	where
	$$
	P_4^j = D_{x^1}P_{3}^jP_{3}^{3} - P_{3}^jD_{x^1}P_{3}^{3} + tP_{3}^jP_{3}^{3} D_{x^1}m_{j,3}.
	$$
	Note that $P_3^3 \neq 0$ at least for large $t$, therefore the above
	representation of $R_4^j$ is valid. From here we can argue by
	induction.  Suppose $P_{h-1}^{h-1} \neq 0$ (which will be justified
	shortly): also assume that for $k = h$, $R_h^j$ 
	can be written as
	\begin{equation}\label{hypo1}
	R_h^j =  \frac{P_h^j}{(P_{h-1}^{h-1})^2}, 
	\end{equation}
	where $P_h^j$ and $P_{h-1}^j$, $j = 1,...,J$ satisfy the following relationship
	\begin{equation}\label{hypo2}
	P_h^j = D_{x^1}P_{h-1}^jP_{h-1}^{h-1} - P_{h-1}^jD_{x^1}P_{h-1}^{h-1} + tP_{h-1}^jP_{h-1}^{h-1} D_{x^1}m_{j,h-1}.
	\end{equation}
	Then as in the case of $h = 4$ above, 
	\begin{align*}
	R_{h+1}^j 
	&= D_{x^1}\left(\frac{R_h^j}{R_h^h}\right)
	+ t\frac{R_h^j}{R_h^h}D_{x^1}m_{j,h}
	\\
	&= D_{x^1}\left(\frac{P_h^j}{P_h^h}\right)
	+ t\frac{P_h^j}{P_h^h}D_{x^1}m_{j,h}
	\\
	&= \frac {D_{x^1}P_{h}^jP_{h}^{h} -
		P_{h}^jD_{x^1}P_{h}^{h} + tP_{h}^jP_{h}^{h}
		D_{x^1}m_{j,h}} 
	{(P_{h}^{h})^2} 
	\\
	&= \frac{P_{h+1}^j}{(P_{h}^{h})^2}
	\end{align*}
	with 
	$$
	P_{h+1}^j = D_{x^1}P_{h}^jP_{h}^{h} -
	P_{h}^jD_{x^1}P_{h}^{h} + tP_{h}^jP_{h}^{h}
	D_{x^1}m_{j,h},
	$$
	i.e., if (\ref{hypo1}) and (\ref{hypo2}) hold for $k = h$, they also
	hold for $k = h+1$.  In short, the original system of equations
	(\ref{Prec}) and (\ref{Ini}) that determine $R_k^j$ can be rewritten in terms of $P_k^j$s as
	follows:
	\begin{align}\label{Rformula}
	P_1^j &= 1, \quad  
	P_{h+1}^j = D_{x^1}P_{h}^jP_{h}^{h} -
	P_{h}^jD_{x^1}P_{h}^{h} + tP_{h}^jP_{h}^{h}
	D_{x^1}m_{j,h},
	\\
	R_k^j &= \frac{P_{k}^j}{(P_{k-1}^{k-1})^2}, \qquad 1 \leq k,j \leq J.
	\nonumber
	\end{align}
	(The fact that $P_1^j = 1, j = 1,...J$ are appropriate initial
	conditions can be easily verified.) In particular, (\ref{Rformula})
	implies that 
	\begin{equation}\label{Precursive}
	P_{h+1}^{h+1} = D_{x^1}P_{h}^{h+1}P_{h}^{h} -
	P_{h}^{h+1}D_{x^1}P_{h}^{h} + tP_{h}^{h+1}P_{h}^{h}
	D_{x^1}m_{h+1,h}.
	\end{equation}
	Note that (\ref{Precursive}) with initial values $P_1^1 = P_1^2$ recursively
	generates expressions of $P_k(\cdot,\cdot) =
	P_k^j(\cdot,\cdot), k = 2,...,J, j = k,...,J$ that have some useful
	properties including
	
	\medskip
	
	\noindent (Replacement Property of $P_h^j$): $P_h^{j}, j = h+1,...,J$ are obtained by replacing
	$m_{h}$ in the expression for $P_h^h$ with $m_{j}, j = 1,...,J$.  
	
	\medskip
	
	To see this,
	first note that $P_2^j = t
	D_{x^1}m_{j,1}, j = 2,...,J$ according to (\ref{Rformula}), therefore this claim applies to the case of $k =
	2$.  But (\ref{Rformula}) also shows that if the
	claim applies to $k = h$, it holds for $k = h+1$ as well.  The
	property holds for all k by induction.
	
	Noting this property, it is easy to see that  $P_k(\cdot,\cdot) =
	P_k^k(\cdot,\cdot), k = 2,...,J$ are polynomials in $t$ where their
	coefficients are functions of derivatives of $m$'s.  First, it trivially holds for $k=2$ since  $P_2^j = t
	D_{x^1}m_{j,1}, j = 2,...,J$.  Now, 
	suppose the claim holds for $k = h$.  Then by (\ref{Precursive})
	$P_{h+1}^{h+1}$ is a polynomial with the stated property, and by the
	replacement property, so are $P_{h+1}^{j}, j = h+2,...,J$.  That is,
	the claim holds for $k = h+1$.  By induction, the claim holds for $k =
	2,...,J$.  In particular, 
	we now know that $P_k = P_k^k, k = 3,...,J$ are polynomials in $t$, as claimed
	in the Proposition.  
	
	It remains to verify the formulae for $\deg_t(P_k)$ and
	lc$_t(P_k)$ given in the Proposition.   
	Start with
	$k=3$.  It implies that 
	$$
	P_3^3 = D_{x^1}^2m_{3,1}D_{x^1}m_{2,1} - D_{x^1}^2m_{2,1}D_{x^1}m_{3,1}
	+ t D_{x^1}m_{3,1}D_{x^1}m_{3,2}D_{x^1}m_{2,1},
	$$
	therefore $\deg_tP_3 = 1$ and $\text{lc}_t(P_3) =
	D_{x^1}m_{3,1}D_{x^1}m_{3,2}D_{x^1}m_{2,1}$, which are certainly consistent with
	the proposition.  Now suppose the Proposition holds for $k=l$: $P_l$ is a polynomial with $\deg_t(P_l)
	= 2^{l-2} - 1$ and 
	$
	\mathrm{lc}_t(P_l(t,x)) =
	(\Pi_{g=1}^{l-1}D_{x^1}m_{l,g})\Pi_{j=2}^{l-1}\{(\Pi_{h=1}^{j-1}D_{x^1}m_{j,h})^{2^{l
			- j - 1}}\}. 
	$

	Since $x \in N^1(x_a,\delta')$, $(D_1m_l)_{l=1..J}$ take $J$ distinct values, and $\mathrm{lc}_t(P_l(t,x)) \neq
	0$.  Also, the above observation that $P_l^l$ and $P_l^{j}$ are identical except for the
	replacement of $m_l$ with $m_{j}$, for all $ j\geq l$ implies that 
	\begin{equation}
	\deg_t(P_l^l) = \deg_t(P_l^{j}), \, \forall j\geq l,
	\end{equation}
	and 
	\begin{align*}
	\mathrm{lc}_t(P_l^{l+1}(t,x)) 
	& =
	(\Pi_{g=1}^{l-1}D_{x^1}m_{l+1,g})\Pi_{j=2}^{l-1}\{(\Pi_{h=1}^{j-1}D_{x^1}m_{j,h})^{2^{l
			- j - 1}}\} 
	\\
	& \neq 0. 
	\end{align*}
	Using the recursion formula (\ref{Precursive}) with $h = l$ and
	noting that $\deg_t(D_{x^1}P_{h}^{h+1}P_{h}^{h} -
	P_{h}^{h+1}D_{x^1}P_{h}^{h}) \leq \deg_t(P_l^{l+1}P_l^l)$, we have
	\begin{equation}\label{degrecursive}
	\deg_t(P_{l+1}^{l+1}) = 2\deg_t(P_l^l) + 1
	\end{equation}
	and
	\begin{align*}
	\text{lc}_t(P_{l+1}^{l+1}) 
	&= \text{lc}_t(P_l^{l+1})\text{lc}_t(P_l^l)  D_{x^1}m_{l+1,l}
	\\
	&= (\Pi_{g=1}^{l}D_{x^1}m_{l,g})\Pi_{j=2}^{l}\{(\Pi_{h=1}^{j-1}D_{x^1}m_{j,h})^{2^{l
			- j}}\}.
	\end{align*}
	Moreover, solving the difference equation (\ref{degrecursive}) under the initial
	condition $\deg_t(P_3^3) = 1$,
	\begin{align*}
	\deg_t(P_k^k) &= \sum_{j=0}^{k - 4}2^j +  2^{k-3}
	\\
	&= 2^{k-3} - 1 + 2^{k-3}
	\\
	&= 2^{k-2} - 1.
	\end{align*} 
	
	Since $P_k = P_k^k, k = 1,...,J$ are polynomials, they are nonzero for
	sufficiently large $t$.  This justifies division by $P_k$ used
	throughout the current proof for sufficiently large $t$. 
\end{proof}

\begin{prop}\label{prop:zeros}  There exists 
	$X^{(J)}=(x_1^{(J)},...,x_{J-1}^{(J)}) \in B(x_0, \delta')^{J-1}$ such that 
	$$\mathcal{Z}= \left\lbrace t \in \R | \det D(t, x_0, x_1^{(J)},...,x_{J-1}^{(J)})=0 \right\rbrace
	$$ is a finite set.
\end{prop}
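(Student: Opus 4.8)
The plan is to expand the determinant by the Leibniz formula, recognize $t \mapsto \det D(t,c_1,\dots,c_J)$ as an exponential polynomial, and then reduce the claim to a single non-degeneracy statement. Writing $c_1 = x_0$ and $(c_2,\dots,c_J) = X^{(J)}$, the entry $(i,j)$ of $D$ is $e^{t m_j(c_i)}$, so
\begin{equation*}
\det D(t,c_1,\dots,c_J) = \sum_{\sigma \in S_J} \mathrm{sgn}(\sigma)\, e^{t\,\alpha_\sigma}, \qquad \alpha_\sigma := \sum_{i=1}^J m_{\sigma(i)}(c_i),
\end{equation*}
a finite real-linear combination of exponentials with real frequencies. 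The first ingredient I would invoke is the classical fact that a nonzero exponential polynomial $\sum_{k=1}^m a_k e^{\lambda_k t}$ with distinct real $\lambda_k$ has at most $m-1$ real zeros (factor out $e^{\lambda_1 t}$ and apply Rolle's theorem inductively to the derivative, which carries one fewer exponential term). Consequently $\mathcal{Z}$ is finite as soon as one exhibits a single $X^{(J)} \in B(x_0,\delta')^{J-1}$ for which the displayed exponential polynomial is not identically zero.

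The whole problem therefore reduces to producing such points, and the only danger is that the $J!$ signed exponentials cancel identically. I would rule this out by forcing a uniquely attained largest frequency. Choose the points on the first coordinate axis, $c_i = x_0 + s_i\mathbf{e}_1$, with $s_1 := 0$ and $0 < s_2 < \dots < s_J < \delta'$, so that $c_1 = x_0$ and $c_2,\dots,c_J \in B(x_0,\delta')$; this is legitimate because the $m_j$ are $C^1$ near $x_0$ by Assumption \ref{ass:jmodel}(\ref{jmodel-dble}). A first-order expansion in $x^1$, using only differentiability and valid uniformly over the finitely many $\sigma$, gives
\begin{equation*}
\alpha_\sigma = \sum_{j=1}^J m_j(x_0) + \sum_{i=1}^J s_i\, D_1 m_{\sigma(i)}(x_0) + o\!\left(\max_i |s_i|\right),
\end{equation*}
where, crucially, the zeroth-order term $\sum_j m_j(x_0)$ is independent of $\sigma$ (the $s_1=0$ slot contributing nothing). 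Writing $d_j := D_1 m_j(x_0)$, the $\sigma$-dependence to leading order is carried by $L(\sigma) := \sum_{i=1}^J s_i d_{\sigma(i)}$.

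The key step is an appeal to the rearrangement inequality: since the $s_i$ are distinct and the $d_j$ are distinct by Assumption \ref{ass:jmodel-diff}(\ref{jmodel-diff}), $L(\sigma)$ is maximized by a unique permutation $\sigma^\ast$ (the one pairing the order statistics of $(s_i)$ and $(d_j)$), with strictly positive gap $g := \min_{\sigma \neq \sigma^\ast}\{L(\sigma^\ast) - L(\sigma)\} > 0$. Rescaling $s_i = \lambda s_i^0$ with fixed distinct $s_i^0$ makes $L$ and its gap scale like $\lambda$ while the remainder is $o(\lambda)$; hence for $\lambda$ small enough $\alpha_{\sigma^\ast}$ is the unique strict maximum of $\{\alpha_\sigma : \sigma \in S_J\}$. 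Even should lower frequencies coincide and cancel, the top frequency is attained by $\sigma^\ast$ alone, so its coefficient is $\mathrm{sgn}(\sigma^\ast) \neq 0$ and $\det D(t,\dots) \sim \mathrm{sgn}(\sigma^\ast)\,e^{t\,\alpha_{\sigma^\ast}}$ as $t \to +\infty$. In particular the exponential polynomial is not identically zero, and by the first ingredient $\mathcal{Z}$ is finite.

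The main obstacle is exactly guaranteeing non-vanishing despite possible cancellations among the signed terms; isolating a uniquely attained extremal exponent through the rearrangement inequality is what resolves it, and it is precisely there that the distinctness of the slopes $D_1 m_j(x_0)$ enters. (The case $J=1$ is trivial, since then $\det D(t) = e^{t m_1(x_0)}$ never vanishes.)
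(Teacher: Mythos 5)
Your proof is correct, and it shares the paper's overall skeleton — expand $\det D$ by the Leibniz formula into $\sum_{\sigma}\mathrm{sgn}(\sigma)e^{tV(\sigma,c)}$ with $V(\sigma,c)=\sum_i m_{\sigma(i)}(c_i)$, exhibit points for which the maximal exponent is attained by a \emph{unique} permutation (so the top coefficient is $\pm1$ and the exponential polynomial is nontrivial), and conclude via the finiteness of real zeros of a nonzero exponential sum (the paper cites Tossavainen for this; your Rolle induction proves the same fact). Where you genuinely diverge is in the key non-degeneracy step. The paper's Step 1 is a greedy, coordinate-by-coordinate perturbation: starting from an arbitrary $c^{(1)}$, it perturbs the first component of the current point to break ties within the argmax set $\Sigma_1$ (using distinctness of the $D_1m_j$ to single out $s_1=\argmax_{\sigma\in\Sigma_1}D_1m_{\sigma(1)}$, while keeping the perturbation small enough to preserve the gap $A_1-B_1$ to non-maximizers), and iterates until $|\Sigma_i|=1$. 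You instead make a one-shot choice: collinear points $c_i=x_0+\lambda s_i^0\mathbf{e}_1$ with distinct steps, linearize $V$ to $\sum_j m_j(x_0)+\lambda\sum_i s_i^0 D_1m_{\sigma(i)}(x_0)+o(\lambda)$, and invoke the strict rearrangement inequality to get a unique maximizer of the linear part with a gap of order $\lambda$ that dominates the $o(\lambda)$ remainder. Your route is more explicit and arguably cleaner — it names the extremal permutation (the monotone pairing of the $s_i^0$ with the $D_1m_j(x_0)$) and gives a concrete one-parameter family of valid point configurations — whereas the paper's sequential argument avoids any Taylor expansion and needs only continuity of the first derivatives at each step. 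Both ultimately rest on the same hypothesis, Assumption~\ref{ass:jmodel-diff}(\ref{jmodel-diff}), that the $D_1m_j$ take $J$ distinct values. One cosmetic remark: your parenthetical that the zeroth-order term is $\sigma$-independent ``because the $s_1=0$ slot contributes nothing'' is slightly off — it is independent of $\sigma$ simply because $\sigma$ is a bijection, so $\sum_i m_{\sigma(i)}(x_0)=\sum_j m_j(x_0)$; the $s_1=0$ choice matters only for matching the proposition's requirement that the first evaluation point be $x_0$ itself.
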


\begin{proof}[\textupandbold{Proof of Proposition~\ref{prop:zeros}}]
	$$D(t, c_1,...,c_J)=(e^{tm_j(c_i)})_{1 \leq i,j \leq J}$$
	Writing $S_n$ the set of permutations of the first $n$ natural numbers and $sign(\sigma)$ the signature of a permutation $\sigma$, we have $\det D(t, c_1,...,c_J) = \sum_{\sigma \in S_J} sign(\sigma) \; e^{t \sum_{i=1}^J m_{\sigma(i)}(c_i)}$.

	\noindent {\bf{Step 1:}} We call $V(\sigma,c)=  \sum_{i=1}^J m_{\sigma(i)}(c_i)$, where $c=(c_1,...,c_J) \in N^1(x_0, \delta')$, and our goal is now to construct a vector $c^{(J)}=(c_J^{(J)},...,c_J^{(J)})$ such that there is a unique permutation maximizing $V(\cdot, c^{(J)})$: what follows explain how to.
	
	We fix $c^{(1)}=(c^{(1)}_1,...,c^{(1)}_J) \in N^1(x_0, \delta')$, $A_1 = \displaystyle\max_{\sigma \in S_J} V(\sigma,c^{(1)})$, $\Sigma_1 = \left\lbrace \sigma \in S_J | V(\sigma, c^{(1)})= A_1 \right\rbrace$ (and $\Sigma_1 \neq \emptyset$), and $B_1 = \displaystyle \max_{\sigma \in S_J \backslash \Sigma_1} V(\sigma,c^{(1)})$ (if $B_1$ does exist, then $B_1 < A_1$).
	We consider a change of the first component of $c^{(1)}$, that is a vector $c^{(2)}$ which differs from $c^{(1)}$ only in the first component: the first component of $c^{(1)}$ is a point in $\R^n$, we consider a variation in its first covariate, with respect to which we know that the $(m_i)_{i=1...J}$ are J times differentiable.
	$$\forall \sigma \in S_J, V(\sigma, c^{(2)})=V(\sigma, c^{(1)}) + m_{\sigma(1)}(c_1^{(2)}) - m_{\sigma(1)}(c_1^{(1)}).$$
	
	We know that for all $x \in N^1(x_0, \delta')$, $(D_1 m_j(x))_{j=1..J}$ take distinct values: $\displaystyle\argmax_{s \in \left\lbrace\sigma(1) | \sigma \in \Sigma_1 \right\rbrace } D_1m_s(c_1^{(1)})$ is a singleton set $\left\lbrace s_1 \right\rbrace$. Hence, since the $m_i$ functions are at least twice differentiable, they are continuously differentiable, we can choose $c_1^{(2)}$ close enough from $c_1^{(1)}$ so that 
	$$ m_{s_1}(c_1^{(2)}) - m_{s_1}(c_1^{(1)}) = \max_{\sigma \in \Sigma_1 } \; m_{\sigma(1)}(c_1^{(2)}) - m_{\sigma(1)}(c_1^{(1)}), $$
	$$ c_1^{(2)} \in N^1(x_0, \delta'), $$
	and if $B_1$ exists, 
	$$m_i(c_1^{(2)}) - m_i(c_1^{(1)}) < \frac{A_1 - B_1}{2}, \forall i \leq J. $$
	
	Therefore, constructing $\Sigma_2 = \left\lbrace \sigma \in \Sigma_1 | \sigma(1)=s_1 \right\rbrace$ ($\Sigma_2 \neq \emptyset$ by construction), $A_2 = \displaystyle\max_{\sigma \in S_J} V(\sigma, c^{(2)})$, and $B_2 = \displaystyle\max_{\sigma \in S_J \backslash \Sigma_2} V(\sigma, c^{(2)})$, we know that $B_2$ exists and $B_2 < A_2$. We repeat the same process with the second component of $c^{(2)}$ and construct $s_2$, $\Sigma_3$, $c^{(3)}$, $A_3$ and $B_3$, and then we repeat it with the third component of $c^{(3)}$ and so on, until $|\Sigma_i|=1$ for some $i$. 
	If this is not the case for some $i<J$, then constructing each of the elements until $i=J$, we have
	$$\Sigma_J = \left\lbrace \sigma \in \Sigma_1 | \sigma(1)=s_1,..., \sigma(J-1)=s_{J-1} \right\rbrace, $$ implying $|\Sigma_J|=1$. The vector and the permutation obtained at the end that we call $c^{(J)}$ and $\sigma_J$ whatever the final number of steps is, are such that 
	$$ V(\sigma_J, c^{(J)}) = \displaystyle\max_{\sigma \in S_J} V(\sigma, c^{(J)}) \text{ and } \forall \sigma \neq \sigma_J, V(\sigma, c^{(J)}) < V(\sigma_J,c^{(J)}),$$ which is the result we wanted.

	\noindent {\bf{Step 2:}} Note that in the previous step, the last component of the vector $c^1$ did not change during the whole process: we could have chosen $c_J^{(1)} = x_0$. Since the order of those components do not matter, the previous result hold for some $c^{(J)}=(x_0,x_1^{(J)},...,x_{J-1}^{(J)})$. That is, 
	$$ \exists \sigma_J, \forall \sigma \in S_J, \sigma \neq \sigma_J \Rightarrow V(\sigma, c^{(J)})<V(\sigma_J, c^{(J)}).$$
	Since $\det D(t, x_0,x_1^{(J)},...,x_{J-1}^{(J)}) = \sum_{\sigma \in S_J} sign(\sigma) \; e^{t V(\sigma, c^{(J)}) }$, and $sign(\sigma) \in \left\lbrace -1,1 \right\rbrace$, $\det D(\cdot, x_0,x_1^{(J)},...,x_{J-1}^{(J)})$ is a finite sum of exponential functions multiplied by scalars where at least one of the scalars is nonzero. This implies that $\det D(\cdot, x_0,x_1^{(J)},...,x_{J-1}^{(J)})$ has a finite number of zeros (see, e.g, \citeasnoun{tossavainen2006zeros}).
	
\end{proof}

\newpage
\bibliographystyle{econometrica}
\bibliography{bibmix}
\end{document}